\newtheorem{definition}{Definition}[section]
\newtheorem{procedure}{Procedure}[section]
\newtheorem{theorem}{Theorem}[section]
\newtheorem{proposition}{Proposition}[section]
\newtheorem{lemma}{Lemma}[section]
\newtheorem{remark}{Remark}[section]
\newtheorem{ex}{Example}[section]
\numberwithin{equation}{section}
\chardef\@x10\chardef\@xv60
\def\tcitime{
\def\@time{%
  \@minute\time\@hour\@minute\divide\@hour\@xv
  \ifnum\@hour<\@x 0\fi\the\@hour:%
  \multiply\@hour\@xv\advance\@minute-\@hour
  \ifnum\@minute<\@x 0\fi\the\@minute
  }}%
\def\QCTOpt[#1]#2{%
  \def\QCTOptB{#1}
  \def\QCTOptA{#2}
}
\def\QCTNOpt#1{%
  \def\QCTOptA{#1}
  \let\QCTOptB\empty
}
\def\Qct{%
  \@ifnextchar[{%
    \QCTOpt}{\QCTNOpt}
}
\def\QCBOpt[#1]#2{%
  \def\QCBOptB{#1}
  \def\QCBOptA{#2}
}
\def\QCBNOpt#1{%
  \def\QCBOptA{#1}
  \let\QCBOptB\empty
}
\def\Qcb{%
  \@ifnextchar[{%
    \QCBOpt}{\QCBNOpt}
}
\def\PrepCapArgs{%
  \ifx\QCBOptA\empty
    \ifx\QCTOptA\empty
      {}%
    \else
      \ifx\QCTOptB\empty
        {\QCTOptA}%
      \else
        [\QCTOptB]{\QCTOptA}%
      \fi
    \fi
  \else
    \ifx\QCBOptA\empty
      {}%
    \else
      \ifx\QCBOptB\empty
        {\QCBOptA}%
      \else
        [\QCBOptB]{\QCBOptA}%
      \fi
    \fi
  \fi
}
\def\GRAPHICSPS#1{%
 \ifcase\GRAPHICSTYPE
   \special{ps: #1}%
 \or
   \special{language "PS", include "#1"}%
 \fi
}%
\def\graffile#1#2#3#4{%
    \leavevmode
    \raise -#4 \BOXTHEFRAME{%
        \hbox to #2{\raise #3\hbox to #2{\null #1\hfil}}}%
}%
\def\draftbox#1#2#3#4{%
 \leavevmode\raise -#4 \hbox{%
  \frame{\rlap{\protect\tiny #1}\hbox to #2%
   {\vrule height#3 width\z@ depth\z@\hfil}%
  }%
 }%
}%
\newif\ifwasdraft
\def\GRAPHIC#1#2#3#4#5{%
 \ifnum\draft=\@ne\draftbox{#2}{#3}{#4}{#5}%
  \else\graffile{#1}{#3}{#4}{#5}%
  \fi
 }%
\def\addtoLaTeXparams#1{%
    \edef\LaTeXparams{\LaTeXparams #1}}%
\newif\ifBoxFrame \BoxFramefalse
\newif\ifOverFrame \OverFramefalse
\newif\ifUnderFrame \UnderFramefalse
\def\BOXTHEFRAME#1{%
   \hbox{%
      \ifBoxFrame
         \frame{#1}%
      \else
         {#1}%
      \fi
   }%
}
\def\doFRAMEparams#1{\BoxFramefalse\OverFramefalse\UnderFramefalse\readFRAMEparams#1\end}%
\def\readFRAMEparams#1{%
   \ifx#1\end%
  \let\next=\relax
  \else
  \ifx#1i\dispkind=\z@\fi
  \ifx#1d\dispkind=\@ne\fi
  \ifx#1f\dispkind=\tw@\fi
	\ifx#1h
    \ifnum\dispkind=\tw@
			\@ifundefined{@HHfloat}{
			  \addtoLaTeXparams{h}
		 	 }{
         \def\LaTeXparams{H}
         \typeout{tcilatex: attribute align pos of FRAME  set to H}
         \typeout{\space \space \space \space all other placement options (tbp) are ignored }
   		 }
	  \else
			\addtoLaTeXparams{h}
    \fi
	\fi
  \if\LaTeXparams H
  	 \ifx#1t\fi	 
  	 \ifx#1b\fi	 
     \ifx#1p\fi
  \else
      \ifx#1t\addtoLaTeXparams{t}\fi
      \ifx#1b\addtoLaTeXparams{b}\fi
      \ifx#1p\addtoLaTeXparams{p}\fi
  \fi

  \ifx#1X\BoxFrametrue\fi
  \ifx#1O\OverFrametrue\fi
  \ifx#1U\UnderFrametrue\fi
  \ifx#1w
    \ifnum\draft=1\wasdrafttrue\else\wasdraftfalse\fi
    \draft=\@ne
  \fi
  \let\next=\readFRAMEparams
  \fi
 \next
 }%
\def\IFRAME#1#2#3#4#5#6{%
      \bgroup
      \let\QCTOptA\empty
      \let\QCTOptB\empty
      \let\QCBOptA\empty
      \let\QCBOptB\empty
      #6%
      \parindent=0pt%
      \leftskip=0pt
      \rightskip=0pt
      \setbox0 = \hbox{\QCBOptA}%
      \@tempdima = #1\relax
      \ifOverFrame
          \typeout{This is not implemented yet}%
          \show\HELP
      \else
         \ifdim\wd0>\@tempdima
            \advance\@tempdima by \@tempdima
            \ifdim\wd0 >\@tempdima
               \textwidth=\@tempdima
               \setbox1 =\vbox{%
                  \noindent\hbox to \@tempdima{\hfill\GRAPHIC{#5}{#4}{#1}{#2}{#3}\hfill}\\%
                  \noindent\hbox to \@tempdima{\parbox[b]{\@tempdima}{\QCBOptA}}%
               }%
               \wd1=\@tempdima
            \else
               \textwidth=\wd0
               \setbox1 =\vbox{%
                 \noindent\hbox to \wd0{\hfill\GRAPHIC{#5}{#4}{#1}{#2}{#3}\hfill}\\%
                 \noindent\hbox{\QCBOptA}%
               }%
               \wd1=\wd0
            \fi
         \else
            \ifdim\wd0>0pt
              \hsize=\@tempdima
              \setbox1 =\vbox{%
                \unskip\GRAPHIC{#5}{#4}{#1}{#2}{0pt}%
                \break
                \unskip\hbox to \@tempdima{\hfill \QCBOptA\hfill}%
              }%
              \wd1=\@tempdima
           \else
              \hsize=\@tempdima
              \setbox1 =\vbox{%
                \unskip\GRAPHIC{#5}{#4}{#1}{#2}{0pt}%
              }%
              \wd1=\@tempdima
           \fi
         \fi
         \@tempdimb=\ht1
         \advance\@tempdimb by \dp1
         \advance\@tempdimb by -#2%
         \advance\@tempdimb by #3%
         \leavevmode
         \raise -\@tempdimb \hbox{\box1}%
      \fi
      \egroup%
}%
\def\DFRAME#1#2#3#4#5{%
 \begin{center}
     \let\QCTOptA\empty
     \let\QCTOptB\empty
     \let\QCBOptA\empty
     \let\QCBOptB\empty
     \ifOverFrame 
        #5\QCTOptA\par
     \fi
     \GRAPHIC{#4}{#3}{#1}{#2}{\z@}
     \ifUnderFrame 
        \nobreak\par #5\QCBOptA
     \fi
 \end{center}%
 }%
\def\FFRAME#1#2#3#4#5#6#7{%
 \begin{figure}[#1]%
  \let\QCTOptA\empty
  \let\QCTOptB\empty
  \let\QCBOptA\empty
  \let\QCBOptB\empty
  \ifOverFrame
    #4
    \ifx\QCTOptA\empty
    \else
      \ifx\QCTOptB\empty
        \caption{\QCTOptA}%
      \else
        \caption[\QCTOptB]{\QCTOptA}%
      \fi
    \fi
    \ifUnderFrame\else
      \label{#5}%
    \fi
  \else
    \UnderFrametrue%
  \fi
  \begin{center}\GRAPHIC{#7}{#6}{#2}{#3}{\z@}\end{center}%
  \ifUnderFrame
    #4
    \ifx\QCBOptA\empty
      \caption{}%
    \else
      \ifx\QCBOptB\empty
        \caption{\QCBOptA}%
      \else
        \caption[\QCBOptB]{\QCBOptA}%
      \fi
    \fi
    \label{#5}%
  \fi
  \end{figure}%
 }%
\def\makeactives{
  \catcode`\"=\active
  \catcode`\;=\active
  \catcode`\:=\active
  \catcode`\'=\active
  \catcode`\~=\active
}
   \gdef\activesoff{%
      \def"{\string"}
      \def;{\string;}
      \def:{\string:}
      \def'{\string'}
      \def~{\string~}
    }
\def\FRAME#1#2#3#4#5#6#7#8{%
 \bgroup
 \@ifundefined{bbl@deactivate}{}{\activesoff}
 \ifnum\draft=\@ne
   \wasdrafttrue
 \else
   \wasdraftfalse%
 \fi
 \def\LaTeXparams{}%
 \dispkind=\z@
 \def\LaTeXparams{}%
 \doFRAMEparams{#1}%
 \ifnum\dispkind=\z@\IFRAME{#2}{#3}{#4}{#7}{#8}{#5}\else
  \ifnum\dispkind=\@ne\DFRAME{#2}{#3}{#7}{#8}{#5}\else
   \ifnum\dispkind=\tw@
    \edef\@tempa{\noexpand\FFRAME{\LaTeXparams}}%
    \@tempa{#2}{#3}{#5}{#6}{#7}{#8}%
    \fi
   \fi
  \fi
  \ifwasdraft\draft=1\else\draft=0\fi{}%
  \egroup
 }%
\def\TEXUX#1{"texux"}
\long\def\QQQ#1#2{%
     \long\expandafter\def\csname#1\endcsname{#2}}%
\long\def\QQA#1#2{}%
\def\QTR#1#2{{\csname#1\endcsname #2}}
\def\EXPAND#1[#2]#3{}%
\def\NOEXPAND#1[#2]#3{}%
\def\LaTeXparent#1{}%
\def\ChildStyles#1{}%
\def\ChildDefaults#1{}%
\def\QTagDef#1#2#3{}%
\def\QQfnmark#1{\footnotemark}
\def\makeatletter\input gnuindex.sty\makeatother\makeindex{\makeatletter\input gnuindex.sty\makeatother\makeindex}%
\def\initial#1{\bigbreak{\raggedright\large\bf #1}\kern 2\p@\penalty3000}}%
 \def\abstract{%
  \if@twocolumn
   \section*{Abstract (Not appropriate in this style!)}%
   \else \small 
   \begin{center}{\bf Abstract\vspace{-.5em}\vspace{\z@}}\end{center}%
   \quotation 
   \fi
  }%
   \def\registered{\relax\ifmmode{}\r@gistered
                    \else$\m@th\r@gistered$\fi}%
 \def\r@gistered{^{\ooalign
  {\hfil\raise.07ex\hbox{$\scriptstyle\rm\text{R}$}\hfil\crcr
  \mathhexbox20D}}}}{}%
\newdimen\theight
\def\Column{%
 \vadjust{\setbox\z@=\hbox{\scriptsize\quad\quad tcol}%
  \theight=\ht\z@\advance\theight by \dp\z@\advance\theight by \lineskip
  \kern -\theight \vbox to \theight{%
   \rightline{\rlap{\box\z@}}%
   \vss
   }%
  }%
 }%
\def\qed{%
 \ifhmode\unskip\nobreak\fi\ifmmode\ifinner\else\hskip5\p@\fi\fi
 \hbox{\hskip5\p@\vrule width4\p@ height6\p@ depth1.5\p@\hskip\p@}%
 }%
\def\miss{\hbox{\vrule height2\p@ width 2\p@ depth\z@}}%
\def\tcol#1{{\baselineskip=6\p@ \vcenter{#1}} \Column}  %
\def\newfmtname{LaTeX2e}
\def\chkcompat{%
   \if@compatibility
   \else
     \usepackage{latexsym}
   \fi
}
  \DeclareOldFontCommand{\rm}{\normalfont\rmfamily}{\mathrm}
  \DeclareOldFontCommand{\sf}{\normalfont\sffamily}{\mathsf}
  \DeclareOldFontCommand{\tt}{\normalfont\ttfamily}{\mathtt}
  \DeclareOldFontCommand{\bf}{\normalfont\bfseries}{\mathbf}
  \DeclareOldFontCommand{\it}{\normalfont\itshape}{\mathit}
  \DeclareOldFontCommand{\sl}{\normalfont\slshape}{\@nomath\sl}
  \DeclareOldFontCommand{\sc}{\normalfont\scshape}{\@nomath\sc}
\def\alpha{{\Greekmath 010B}}%
\def\beta{{\Greekmath 010C}}%
\def\gamma{{\Greekmath 010D}}%
\def\delta{{\Greekmath 010E}}%
\def\epsilon{{\Greekmath 010F}}%
\def\zeta{{\Greekmath 0110}}%
\def\eta{{\Greekmath 0111}}%
\def\theta{{\Greekmath 0112}}%
\def\iota{{\Greekmath 0113}}%
\def\kappa{{\Greekmath 0114}}%
\def\lambda{{\Greekmath 0115}}%
\def\mu{{\Greekmath 0116}}%
\def\nu{{\Greekmath 0117}}%
\def\xi{{\Greekmath 0118}}%
\def\pi{{\Greekmath 0119}}%
\def\rho{{\Greekmath 011A}}%
\def\sigma{{\Greekmath 011B}}%
\def\tau{{\Greekmath 011C}}%
\def\upsilon{{\Greekmath 011D}}%
\def\phi{{\Greekmath 011E}}%
\def\chi{{\Greekmath 011F}}%
\def\psi{{\Greekmath 0120}}%
\def\omega{{\Greekmath 0121}}%
\def\varepsilon{{\Greekmath 0122}}%
\def\vartheta{{\Greekmath 0123}}%
\def\varpi{{\Greekmath 0124}}%
\def\varrho{{\Greekmath 0125}}%
\def\varsigma{{\Greekmath 0126}}%
\def\varphi{{\Greekmath 0127}}%
\def\nabla{{\Greekmath 0272}}
\def\FindBoldGroup{%
   {\setbox0=\hbox{$\mathbf{x\global\edef\theboldgroup{\the\mathgroup}}$}}%
}
\def\Greekmath#1#2#3#4{%
    \if@compatibility
        \ifnum\mathgroup=\symbold
           \mathchoice{\mbox{\boldmath$\displaystyle\mathchar"#1#2#3#4$}}%
                      {\mbox{\boldmath$\textstyle\mathchar"#1#2#3#4$}}%
                      {\mbox{\boldmath$\scriptstyle\mathchar"#1#2#3#4$}}%
                      {\mbox{\boldmath$\scriptscriptstyle\mathchar"#1#2#3#4$}}%
        \else
           \mathchar"#1#2#3#4%
        \fi 
    \else 
        \FindBoldGroup
        \ifnum\mathgroup=\theboldgroup 
           \mathchoice{\mbox{\boldmath$\displaystyle\mathchar"#1#2#3#4$}}%
                      {\mbox{\boldmath$\textstyle\mathchar"#1#2#3#4$}}%
                      {\mbox{\boldmath$\scriptstyle\mathchar"#1#2#3#4$}}%
                      {\mbox{\boldmath$\scriptscriptstyle\mathchar"#1#2#3#4$}}%
        \else
           \mathchar"#1#2#3#4%
        \fi     	    
	  \fi}
\newif\ifGreekBold  \GreekBoldfalse
\let\SAVEPBF=\pbf
\def\pbf{\GreekBoldtrue\SAVEPBF}%
  \newcounter{equationnumber}  
  \def\mathletters{%
     \addtocounter{equation}{1}
     \edef\@currentlabel{\theequation}%
     \setcounter{equationnumber}{\c@equation}
     \setcounter{equation}{0}%
     \edef\theequation{\@currentlabel\noexpand\alph{equation}}%
  }
    \def\BibTeX{{\rm B\kern-.05em{\sc i\kern-.025em b}\kern-.08em
                 T\kern-.1667em\lower.7ex\hbox{E}\kern-.125emX}}}{}%
\def\AmS{{\protect\usefont{OMS}{cmsy}{m}{n}%
                A\kern-.1667em\lower.5ex\hbox{M}\kern-.125emS}}}{}%
\def\DN@{\def\next@}%
\def\eat@#1{}%
\let\DOTSI\relax
\def\RIfM@{\relax\ifmmode}%
\def\FN@{\futurelet\next}%
\def\iint{\DOTSI\intno@\tw@\FN@\ints@}%
\def\iiint{\DOTSI\intno@\thr@@\FN@\ints@}%
\def\iiiint{\DOTSI\intno@4 \FN@\ints@}%
\def\idotsint{\DOTSI\intno@\z@\FN@\ints@}%
\def\ints@{\findlimits@\ints@@}%
\newif\iflimtoken@
\newif\iflimits@
\def\findlimits@{\limtoken@true\ifx\next\limits\limits@true
 \else\ifx\next\nolimits\limits@false\else
 \limtoken@false\ifx\ilimits@\nolimits\limits@false\else
 \ifinner\limits@false\else\limits@true\fi\fi\fi\fi}%
\def\multint@{\int\ifnum\intno@=\z@\intdots@                          
 \else\intkern@\fi                                                    
 \ifnum\intno@>\tw@\int\intkern@\fi                                   
 \ifnum\intno@>\thr@@\int\intkern@\fi                                 
 \int}
\def\multintlimits@{\intop\ifnum\intno@=\z@\intdots@\else\intkern@\fi
 \ifnum\intno@>\tw@\intop\intkern@\fi
 \ifnum\intno@>\thr@@\intop\intkern@\fi\intop}%
\def\intic@{%
    \mathchoice{\hskip.5em}{\hskip.4em}{\hskip.4em}{\hskip.4em}}%
\def\negintic@{\mathchoice
 {\hskip-.5em}{\hskip-.4em}{\hskip-.4em}{\hskip-.4em}}%
\def\ints@@{\iflimtoken@                                              
 \def\ints@@@{\iflimits@\negintic@
   \mathop{\intic@\multintlimits@}\limits                             
  \else\multint@\nolimits\fi                                          
  \eat@}
 \else                                                                
 \def\ints@@@{\iflimits@\negintic@
  \mathop{\intic@\multintlimits@}\limits\else
  \multint@\nolimits\fi}\fi\ints@@@}%
\def\intkern@{\mathchoice{\!\!\!}{\!\!}{\!\!}{\!\!}}%
\def\plaincdots@{\mathinner{\cdotp\cdotp\cdotp}}%
\def\intdots@{\mathchoice{\plaincdots@}%
 {{\cdotp}\mkern1.5mu{\cdotp}\mkern1.5mu{\cdotp}}%
 {{\cdotp}\mkern1mu{\cdotp}\mkern1mu{\cdotp}}%
 {{\cdotp}\mkern1mu{\cdotp}\mkern1mu{\cdotp}}}%
\def\RIfM@{\relax\protect\ifmmode}
\def\text{\RIfM@\expandafter\text@\else\expandafter\mbox\fi}
\let\nfss@text\text
\def\text@#1{\mathchoice
   {\textdef@\displaystyle\f@size{#1}}%
   {\textdef@\textstyle\tf@size{\firstchoice@false #1}}%
   {\textdef@\textstyle\sf@size{\firstchoice@false #1}}%
   {\textdef@\textstyle \ssf@size{\firstchoice@false #1}}%
   \glb@settings}
\def\textdef@#1#2#3{\hbox{{%
                    \everymath{#1}%
                    \let\f@size#2\selectfont
                    #3}}}
\newif\iffirstchoice@
\def\Let@{\relax\iffalse{\fi\let\\=\cr\iffalse}\fi}%
\def\vspace@{\def\vspace##1{\crcr\noalign{\vskip##1\relax}}}%
\def\multilimits@{\bgroup\vspace@\Let@
 \baselineskip\fontdimen10 \scriptfont\tw@
 \advance\baselineskip\fontdimen12 \scriptfont\tw@
 \lineskip\thr@@\fontdimen8 \scriptfont\thr@@
 \lineskiplimit\lineskip
 \vbox\bgroup\ialign\bgroup\hfil$\m@th\scriptstyle{##}$\hfil\crcr}%
\def\Sb{_\multilimits@}%
\def\endSb{\crcr\egroup\egroup\egroup}%
\def\Sp{^\multilimits@}%
\newdimen\ex@
\def\rightarrowfill@#1{$#1\m@th\mathord-\mkern-6mu\cleaders
 \hbox{$#1\mkern-2mu\mathord-\mkern-2mu$}\hfill
 \mkern-6mu\mathord\rightarrow$}%
\def\leftarrowfill@#1{$#1\m@th\mathord\leftarrow\mkern-6mu\cleaders
 \hbox{$#1\mkern-2mu\mathord-\mkern-2mu$}\hfill\mkern-6mu\mathord-$}%
\def\leftrightarrowfill@#1{$#1\m@th\mathord\leftarrow
\mkern-6mu\cleaders
 \hbox{$#1\mkern-2mu\mathord-\mkern-2mu$}\hfill
 \mkern-6mu\mathord\rightarrow$}%
\def\overrightarrow{\mathpalette\overrightarrow@}%
\def\overrightarrow@#1#2{\vbox{\ialign{##\crcr\rightarrowfill@#1\crcr
 \noalign{\kern-\ex@\nointerlineskip}$\m@th\hfil#1#2\hfil$\crcr}}}%
\def\overleftarrow{\mathpalette\overleftarrow@}%
\def\overleftarrow@#1#2{\vbox{\ialign{##\crcr\leftarrowfill@#1\crcr
 \noalign{\kern-\ex@\nointerlineskip}$\m@th\hfil#1#2\hfil$\crcr}}}%
\def\overleftrightarrow{\mathpalette\overleftrightarrow@}%
\def\overleftrightarrow@#1#2{\vbox{\ialign{##\crcr
   \leftrightarrowfill@#1\crcr
 \noalign{\kern-\ex@\nointerlineskip}$\m@th\hfil#1#2\hfil$\crcr}}}%
\def\underrightarrow{\mathpalette\underrightarrow@}%
\def\underrightarrow@#1#2{\vtop{\ialign{##\crcr$\m@th\hfil#1#2\hfil
  $\crcr\noalign{\nointerlineskip}\rightarrowfill@#1\crcr}}}%
\def\underleftarrow{\mathpalette\underleftarrow@}%
\def\underleftarrow@#1#2{\vtop{\ialign{##\crcr$\m@th\hfil#1#2\hfil
  $\crcr\noalign{\nointerlineskip}\leftarrowfill@#1\crcr}}}%
\def\underleftrightarrow{\mathpalette\underleftrightarrow@}%
\def\underleftrightarrow@#1#2{\vtop{\ialign{##\crcr$\m@th
  \hfil#1#2\hfil$\crcr
 \noalign{\nointerlineskip}\leftrightarrowfill@#1\crcr}}}%
\def\qopnamewl@#1{\mathop{\operator@font#1}\nlimits@}
\let\nlimits@\displaylimits
\def\setboxz@h{\setbox\z@\hbox}
\def\varlim@#1#2{\mathop{\vtop{\ialign{##\crcr
 \hfil$#1\m@th\operator@font lim$\hfil\crcr
 \noalign{\nointerlineskip}#2#1\crcr
 \noalign{\nointerlineskip\kern-\ex@}\crcr}}}}
 \def\rightarrowfill@#1{\m@th\setboxz@h{$#1-$}\ht\z@\z@
  $#1\copy\z@\mkern-6mu\cleaders
  \hbox{$#1\mkern-2mu\box\z@\mkern-2mu$}\hfill
  \mkern-6mu\mathord\rightarrow$}
\def\leftarrowfill@#1{\m@th\setboxz@h{$#1-$}\ht\z@\z@
  $#1\mathord\leftarrow\mkern-6mu\cleaders
  \hbox{$#1\mkern-2mu\copy\z@\mkern-2mu$}\hfill
  \mkern-6mu\box\z@$}
\def\projlim{\qopnamewl@{proj\,lim}}
\def\injlim{\qopnamewl@{inj\,lim}}
\def\varinjlim{\mathpalette\varlim@\rightarrowfill@}
\def\varprojlim{\mathpalette\varlim@\leftarrowfill@}
\def\varliminf{\mathpalette\varliminf@{}}
\def\varliminf@#1{\mathop{\underline{\vrule\@depth.2\ex@\@width\z@
   \hbox{$#1\m@th\operator@font lim$}}}}
\def\varlimsup{\mathpalette\varlimsup@{}}
\def\varlimsup@#1{\mathop{\overline
  {\hbox{$#1\m@th\operator@font lim$}}}}
\def\binom#1#2{{#1 \choose #2}}%
\def\align{\@verbatim \frenchspacing\@vobeyspaces \@alignverbatim
You are using the "align" environment in a style in which it is not defined.}
\let\csname endalign*\endcsname =\endtrivlist
\def\alignat{\@verbatim \frenchspacing\@vobeyspaces \@alignatverbatim
You are using the "alignat" environment in a style in which it is not defined.}
\let\csname endalignat*\endcsname =\endtrivlist
\def\xalignat{\@verbatim \frenchspacing\@vobeyspaces \@xalignatverbatim
You are using the "xalignat" environment in a style in which it is not defined.}
\let\csname endxalignat*\endcsname =\endtrivlist
\def\gather{\@verbatim \frenchspacing\@vobeyspaces \@gatherverbatim
You are using the "gather" environment in a style in which it is not defined.}
\let\csname endgather*\endcsname =\endtrivlist
\def\multiline{\@verbatim \frenchspacing\@vobeyspaces \@multilineverbatim
You are using the "multiline" environment in a style in which it is not defined.}
\let\csname endmultiline*\endcsname =\endtrivlist
\def\arrax{\@verbatim \frenchspacing\@vobeyspaces \@arraxverbatim
You are using a type of "array" construct that is only allowed in AmS-LaTeX.}
\def\tabulax{\@verbatim \frenchspacing\@vobeyspaces \@tabulaxverbatim
You are using a type of "tabular" construct that is only allowed in AmS-LaTeX.}
\let\csname endarrax*\endcsname =\endtrivlist
\let\csname endtabulax*\endcsname =\endtrivlist
\def\@@eqncr{\let\@tempa\relax
    \ifcase\@eqcnt \def\@tempa{& & &}\or \def\@tempa{& &}%
      \else \def\@tempa{&}\fi
     \@tempa
     \if@eqnsw
        \iftag@
           \@taggnum
        \else
           \@eqnnum\stepcounter{equation}%
        \fi
     \fi
     \global\tag@false
     \global\@eqnswtrue
     \global\@eqcnt\z@\cr}
 \def\endequation{%
     \ifmmode\ifinner 
      \iftag@
        \addtocounter{equation}{-1} 
        $\hfil
           \displaywidth\linewidth\@taggnum\egroup \endtrivlist
        \global\tag@false
        \global\@ignoretrue   
      \else
        $\hfil
           \displaywidth\linewidth\@eqnnum\egroup \endtrivlist
        \global\tag@false
        \global\@ignoretrue 
      \fi
     \else   
      \iftag@
        \addtocounter{equation}{-1} 
        \eqno \hbox{\@taggnum}
        \global\tag@false%
        $$\global\@ignoretrue
      \else
        \eqno \hbox{\@eqnnum}
        $$\global\@ignoretrue
      \fi
     \fi\fi
 } 
 \newif\iftag@ \tag@false
 \def\tag{\@ifnextchar*{\@tagstar}{\@tag}}
 \def\@tag#1{%
     \global\tag@true
     \global\def\@taggnum{(#1)}}
 \def\@tagstar*#1{%
     \global\tag@true
     \global\def\@taggnum{#1}%
}
\begin{document}

{\sffamily\bfseries\Large Discovering findings that replicate from a
primary study of high dimension to a follow-up study}

\noindent%
\textsf{Marina Bogomolov and Ruth Heller}%
\footnote{\textit{Address for correspondence:} Department of
Statistics and Operations Research, Tel-Aviv university, Tel-Aviv,
Israel.  \ \textsf{E-mail:} ruheller@post.tau.ac.il. \ This work was supported by grant no. 2012896 from the
Israel Science
Foundation (ISF).  The authors thank Yoav Benjamini, Daniel Yekutieli,  and the referees for helpful comments.}\\

\noindent%

\textsf{Technion and Tel-Aviv University}

\thispagestyle{empty}%

\noindent%

\textsf{Abstract. \ We consider the problem of
 identifying whether
findings replicate from one study of high dimension to another, when
the primary study guides the selection of hypotheses to be examined
in the follow-up study as well as when there is no division of roles
into the primary and the follow-up study. We show that existing
meta-analysis methods are not appropriate for this problem, and
suggest novel methods instead. We prove that our multiple testing
procedures control for appropriate error-rates.
The suggested FWER controlling procedure is valid for arbitrary dependence among the test statistics within each study. A more powerful procedure is suggested for FDR control. We prove that this procedure controls the FDR if the test statistics are independent within the primary study, and independent or have dependence of type PRDS in the follow-up study.
For arbitrary dependence within the primary study, and  either arbitrary dependence or dependence of type PRDS in the follow-up study, simple conservative modifications of the procedure control the FDR.  We demonstrate the usefulness of these
procedures via simulations and real data examples. \
\medskip }

\noindent

\textsf{Keywords: \ False discovery rate; genome-wide association
studies ; meta-analysis; multiple comparisons;  replicability
analysis }\newpage

\setcounter{page}{1}

 \section{Introduction}
In genomics research, it is customary that a primary study is
followed by an independent
 study. Reporting results from the primary study, and then
 reporting the evidence from the follow-up study that supports these results,
gives a sense of the replicability of the results. For example,
findings are informally regarded as replicated if the $p$-value for
testing a null hypothesis is small in the primary study, and then
for the same hypothesis the $p$-value is fairly small in the
follow-up study. 

Many approaches are available for analyzing two or more studies,
where the follow-up studies simply serve to add power. See
\cite{Hedges85}, \cite{BY05}, \cite{skol06}, and \cite{Cardon07},
among others. In this work, we focus on analyzing two studies, where
the follow-up study serves to confirm the findings that were
identified in the primary study. A formal statistical
 approach is proposed for evaluating
 whether results from a primary study were indeed replicated in a
 follow-up study.

In observational studies, an association may fail to replicate
because the discovered association was not the actual effect of a
treatment but rather that of bias \citep{rosenbaum01}. However, if
the finding is replicated in a different cohort, using different
diagnostic or laboratory methods, then the association between
effect and outcome may be more convincingly causal.
\cite{rosenbaum01} gives the example of radiation and leukemia.
Suppose higher rates of leukemia are discovered in a primary study
among radiologists, and in a follow-up study among survivors at
Hiroshima and Nagasaki. Radiation is more convincingly causal if the
association discovered was replicated in the follow-up study, since
if radiation was not a cause of leukemia, then higher rates of
leukemia among radiologists would not lead us to expect higher rates
of leukemia among survivors at Hiroshima and Nagasaki. Another
example comes from the field of genomic research. Genome-wide
association studies (GWAS)  are observational studies, and therefore
there is always a danger that bias may explain away the discoveries.
\cite{kraft09} note that for common variants, the anticipated
effects are modest and very similar in magnitude to the subtle
biases that may affect genetic association studies - most notably
population stratification bias. For this reason, they argue that it
is important to see the association in other studies conducted using
a similar, but not identical, study base.

It is common practice that interesting findings in a primary GWA
study are investigated in another study, and the interesting
results of both studies are reported \citep{lander95}. 
For example, to discover  association between single-nucleotide
polymorphisms (SNPs) and hippocampal volume, \cite{Bis11} tested
$2.5\times 10^6$ SNPs in a primary study, and only a handful of SNPs
in promising loci in a follow-up study. \cite{Bis11} forwarded a SNP
for replication if the SNP $p$-value in the primary study was below
$4\times 10^{-7}$, corresponding to one expected false positive if
all SNPs are not associated with hippocampal volume. They viewed the
SNP as containing evidence of replication if its $p$-value in the
follow-up study was below 0.01, which is the Bonferroni threshold
when 5 hypotheses are simultaneously tested at the 0.05 family-wise
error rate (FWER). Their approach selects hypotheses for follow-up
based on suggestive evidence \citep{lander95}, and corrects for
multiplicity only in the follow-up study when discussing evidence of
replicability. Another naive approach is the following: apply a
multiple testing procedure within each study separately, and declare
as replicated the common findings. This approach will lead to
declaring SNPs that were found to be associated with the disease in
the primary study as well as in the follow-up study as the
discoveries of interest. If there was no danger that a multiple
testing procedure produces false positives, then this naive approach
would have been appropriate. However, multiple testing procedures
have a non-zero probability of producing false positives, unless
they have no power. Therefore, an approach that provides control
over false positives in each study separately, does not guarantee
control over false positives for evaluating whether the results were
replicated. Figure \ref{fig050508}, left panel, shows that the FDR
level can be as high as one when naively declaring results as
replicated if they were discovered by applying an FDR controlling
procedure at the nominal 0.05 level separately in each study.
Moreover, reducing the nominal 0.05 level does not resolve the
problem, see Remark \ref{rem-Naive}.

The paper is organized as follows. Section \ref{sec-notation} gives
the notation and review. Section \ref{sec-oneavailable} suggests
novel multiple testing procedures for replicability analysis, when
the primary study guides the selection of hypotheses to be examined
in a follow-up study. Section \ref{sec-bothavailableFDR} considers
the setting where there is no division of roles into a primary and a
follow-up study. In Section \ref{sec-example}, we revisit the
example of \cite{Bis11}. We also analyze an additional GWAS study, and show
additional examples from the  GWAS simulator HAPGEN2 \citep{Su11}.
Section \ref{sec-sim} describes a simulation study, and Section
\ref{sec-discussion} gives some final remarks.

\section{ Notation, Goal, and Review}\label{sec-notation}
 Consider a family of $m$ elementary null hypotheses
$H_{1},\ldots,H_{m}$. These elementary null hypotheses, or a subset
thereof, are tested in each of two independent studies. Let $h_{ij}$
be the indicator of whether $H_j$ is false in study $i$. The pair of
indicators $(h_{1j}, h_{2j})$ identifies  four  possible settings for
each $j$,
\begin{equation*} (h_{1j}, h_{2j}) = \left\{
\begin{array}{rl}
(0,0) & \text{if $H_j$ is true in both studies},\\
(1,0) & \text{if $H_j$ is false in the primary study but true in the follow-up study},\\
(0,1) & \text{if $H_j$ is true in the primary study but false in the follow-up study},\\
(1,1) & \text{if $H_j$ is false in both studies}.\\
\end{array} \right.
\end{equation*}
The set of indices $\{1,\ldots,m\}$ of the elementary null
hypotheses may be divided into four (unknown) subsets $I_{00}\cup
I_{10} \cup I_{01} \cup I_{11} = \{1,\ldots, m \}$, where each index
$j$ is in exactly one of the four subsets, defined as follows:
$I_{00} = \{j:  (h_{1j}, h_{2j}) = (0,0), j\in \{1,\ldots,m \} \};
I_{10} = \{j:  (h_{1j}, h_{2j}) = (1,0), j\in \{1,\ldots,m \} \};
I_{01} = \{j:  (h_{1j}, h_{2j}) = (0,1), j\in \{1,\ldots,m \} \};
I_{11} = \{j:  (h_{1j}, h_{2j}) = (1,1), j\in \{1,\ldots,m \} \}.$

\begin{definition} The {\em no replicability null hypothesis} for
elementary hypothesis $H_j$ is  $$ H_{NR,j}: (h_{1j}, h_{2j})\in
\{(0,0), (0,1), (1,0) \}. $$
\end{definition}
By definition, $H_{NR,j}$ is false if and only if the elementary
null hypothesis $H_j$ is false in both studies considered. In the
family of $m$ composite null hypotheses $H_{NR,1},\ldots,H_{NR,m}$,
the sets of indices of true and false  null hypotheses are
$I_{00}\cup I_{01} \cup I_{10}$ and $I_{11}$ respectively. Our goal
is to discover as many indices from $I_{11}$ as possible, i.e. true
positives, while controlling for the number of discoveries from
$I_{00}\cup I_{01} \cup I_{10}$, i.e. false positives.

Let $p_{ij}$ be the  $p$-value for the $j$th SNP in study $i$, for $i=1,2$.  Since the studies are
independent, the $p$-values are independent across studies. However,
the $p$-values
within each study may be dependent.
Inequality $x\geq y$ for vectors $x$ and $y$ is understood
componentwise.
\begin{remark}
In a typical meta-analysis \citep{Hedges85}, the goal is to discover
as many indices from $ I_{01} \cup I_{10}\cup I_{11}$ as possible, while
controlling for the number of discoveries from $I_{00}$. Had we
known, and had it been true,  that $I_{01} = \emptyset $ and $I_{10}
=\emptyset$, then the typical methods for meta-analysis could serve
to discover replicable findings. However, it is not known in
practice whether $I_{01}$ and $I_{10}$ are empty sets, and they need
not be empty  when the follow-up study is different, in at least one
aspect of design, from the primary study. Therefore, typical
meta-analysis methods are not appropriate when the aim is to
discover hypotheses with indices in $I_{11}$, treating all
discoveries from $I_{01}$ and $I_{10}$, in addition to $I_{00}$, as
false discoveries.
\end{remark}
\subsection{The partial conjunction
approach}\label{subsec-partialconj}
 In \cite{benjamini09} the partial conjunction approach \citep{conj} has been
 suggested for replicability analysis when $n\geq 2$ studies are
 available that examine the same problem. When exactly two studies are available, the procedure in \cite{benjamini09} amounts to
 applying the Benjamini-Hochberg false discovery rate (FDR) controlling procedure
\citep{yoav1}, henceforth referred to as the BH procedure,
 on the maximum of the two study $p$-values.
 However, this procedure may be too conservative, making it practically very
 difficult to discover false no replicability null hypotheses.

As an example, suppose there is an original
 GWA study that examines the association of $10^6$ SNPs with a phenotype. Now suppose 200 promising SNPs were
 selected to be examined in a follow-up study. If
 a SNP has a $p$-value of $0.025/10^6$ in the first study, and
 of $0.025/200$ in the second study, then the maximum $p$-value is
 $0.025/200$. The BH procedure will, most probably, not reject the no replicability null hypothesis for a SNP with maximum
 $p$-value of
 $0.025/200$, since this maximum $p$-value is not strong enough evidence when faced with
 $10^6$ hypotheses, out of which most of the hypotheses are true no replicability null hypotheses. The alternative procedures we suggest in Sections \ref{sec-oneavailable} and
\ref{sec-bothavailableFDR} will view the evidence from this SNP as
strong enough for it to be considered a replicated finding.

\section{Replicability analysis with a primary and a follow-up study}\label{sec-oneavailable}
For the family of $m$ no replicability null hypotheses
$H_{NR,1},\ldots,H_{NR,m}$, we consider two relevant error measures:
the probability that at least one no replicability null hypothesis
is falsely rejected, i.e. the FWER, and the expected fraction of
false rejections out of all rejections of no replicability null
hypotheses, that is the FDR.

\begin{procedure}\label{proc-FWER}
 The two stage FWER controlling procedure for testing the
family of no replicability null hypotheses with parameters
$(\alpha_1,\alpha)$, where $0<\alpha_1<\alpha<1$:
\begin{enumerate}
\item Let $\mathcal R_1$ be the set of indices of elementary hypotheses
that are selected for testing in a follow-up study based on the data
from the primary study.
\item Apply a FWER controlling
procedure at level $\alpha_1$, using the data from the primary study
only, on the family of null hypotheses $H_1,\ldots,H_m$, and let
$\mathcal R_{p}\subseteq \{ 1,\ldots,m \}$ be the set of indices of
rejected hypotheses. Apply a FWER controlling procedure at level
$\alpha-\alpha_1$, using the data from the follow-up study only, on
the family of selected null hypotheses $\{H_j: j\in \mathcal R_1
\}$, and let $\mathcal R_{f}\subseteq \mathcal R_1$ be the set of
indices of rejected hypotheses. Then the set of indices of rejected
no replicability null hypotheses is $\mathcal R_{f}\cap \mathcal
R_{p}$.
\end{enumerate}
\end{procedure}

\begin{theorem}\label{thm-fwer}
For two independent studies,  Procedure \ref{proc-FWER} controls the
FWER at level $\alpha$ for the family of no replicability null
hypotheses $H_{NR,1},\ldots,H_{NR,m}$.
\end{theorem}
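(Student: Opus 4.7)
\textbf{Proof plan for Theorem \ref{thm-fwer}.}

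The strategy is to decompose the event that Procedure \ref{proc-FWER} commits at least one false rejection into two components, one attributable to the primary study and one attributable to the follow-up study, and then control each piece separately using the nominal levels $\alpha_1$ and $\alpha-\alpha_1$, combining them via a union bound. Throughout, the key structural fact is that if $H_{NR,j}$ is true, i.e.\ $j\in I_{00}\cup I_{01}\cup I_{10}$, then at least one of $h_{1j}=0$ or $h_{2j}=0$ holds, so $H_j$ is a true null in at least one of the two studies.

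Concretely, I would argue that the event of at least one false rejection of a no-replicability null is contained in the union of the events $A_p = \{\exists\,j\in I_{00}\cup I_{01}:\, j\in\mathcal R_p\}$ and $A_f = \{\exists\,j\in I_{00}\cup I_{10}:\, j\in\mathcal R_f\}$. Indeed, for $j\in I_{01}$ (true primary null) or $j\in I_{00}$, membership of $j$ in $\mathcal R_p\cap\mathcal R_f$ forces $j\in\mathcal R_p$ and places the event in $A_p$; for $j\in I_{10}$ (true follow-up null), it forces $j\in\mathcal R_f$ and places the event in $A_f$. The FWER control of the primary-stage procedure on the family $H_1,\dots,H_m$ immediately gives $P(A_p)\le\alpha_1$.

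Bounding $P(A_f)$ is the step that requires some care, since the family tested in the follow-up, $\{H_j : j\in\mathcal R_1\}$, is random and selected using the primary data. Here I would condition on the primary-study data (equivalently, on $\mathcal R_1$): since the two studies are independent, conditionally on the primary data the follow-up $p$-values still have their correct null distribution, so applying a valid FWER procedure at level $\alpha-\alpha_1$ on the now-deterministic family $\{H_j:j\in R_1\}$ yields $P(A_f\mid\mathcal R_1=R_1)\le\alpha-\alpha_1$ for every realization $R_1$. Taking expectation over $\mathcal R_1$ gives $P(A_f)\le\alpha-\alpha_1$.

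Combining via the union bound, $\mathrm{FWER}\le P(A_p)+P(A_f)\le\alpha_1+(\alpha-\alpha_1)=\alpha$, completing the proof. The main (and essentially only) obstacle is the selection effect in the follow-up stage, which is handled cleanly by the independence of the two studies; everything else is bookkeeping on the partition $I_{00}\cup I_{01}\cup I_{10}\cup I_{11}$.
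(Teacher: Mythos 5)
Your proposal is correct and takes essentially the same route as the paper's proof: your events $A_p$ and $A_f$ coincide with the paper's $\{V_p>0\}$ and $\{V_f>0\}$ (true primary nulls rejected at the first stage, true follow-up nulls rejected at the second), and you combine the bounds $\alpha_1$ and $\alpha-\alpha_1$ by the same union bound, handling the random selected family by conditioning on the primary data exactly as in the paper's step $E\left(E\left(\mathbf{I}[V_f>0]\,\middle|\,p_1\right)\right)\leq\alpha-\alpha_1$. If anything, your explicit conditioning argument spells out the selection-effect step more carefully than the paper's one-line justification.
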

\begin{proof}
Let $V_p = \sum_{j\in \mathcal R_p}(1-h_{1j})$ and $V_f = \sum_{j\in
\mathcal R_f}(1-h_{2j})$ be the number of true elementary null
hypotheses rejected, respectively, in the primary study and in the follow-up study. Then
\begin{eqnarray}
FWER  \leq E(\textbf{I}[V_p+V_f>0])\leq E(\textbf{I}[V_p>0])
+E(E(\textbf{I}[V_f>0]|p_1))\leq \alpha_1+\alpha-\alpha_1 = \alpha,
\nonumber
\end{eqnarray}
where the last inequality follows from the fact that $V_f$ is
independent of the data from the primary study, and that in both
studies a FWER controlling procedure is applied.
\end{proof}

 Using Bonferroni in Procedure \ref{proc-FWER} amounts to rejecting
$H_{NR,j}$ if $(p_{1j}, p_{2j})\leq (\alpha_1/m,
(\alpha-\alpha_1)/|\mathcal{R}_1|)$, for $j\in \mathcal{R}_1$.
Alternatively, the results can be reported in terms of
Bonferroni-replicability adjusted $p$-values $p^{Bonf-REPadj}_j =
\max \left(mp_{1j}/c, |\mathcal{R}_1|p_{2j}/(1-c)\right)$, where $c
= \alpha_1/\alpha$. Procedure \ref{proc-FWER} using Bonferroni is
equivalent to rejecting all hypotheses with Bonferroni-replicability
adjusted $p$-values at most $\alpha$.

The selection rule affects the power of Procedure \ref{proc-FWER}. A
natural choice for a selection rule is the set of rejected
hypotheses by the FWER controlling procedure at level $\alpha_1$ on
the primary study $p$-values, since the set of indices of rejected
no replicability null hypotheses is a subset of this set. A rule
that selects by the FWER controlling procedure at level $\alpha$ is
not as good, since any additional hypotheses selected will not be
rejected but will result in a more severe multiple testing problem
for the follow-up study. The choice of  $\alpha_1$ also affects the
power of Procedure \ref{proc-FWER}. We observed in simulations
(Supplementary Material) that although the optimal $\alpha_1$ varies
with effect size, the power function is quite flat as long as
$\alpha_1/\alpha$ is not too close to zero or one.

In many modern applications, controlling the FWER is unnecessary and
results in overly conservative inferences. In genomics research, it
is often enough to guarantee FDR control, see \cite{storey03} and
\cite{reiner2}, among others.

\begin{procedure}\label{procfdr}
 The two stage FDR controlling procedure for testing a family of no
replicability null hypotheses with parameters $(q_1,q)$, where
$0<q_1<q<1$:
\begin{enumerate}
\item Let $\mathcal R_1$ be the set of indices of elementary
hypotheses that are selected for testing in a follow-up study based
on the data from the primary study. Let $R_1 = |\mathcal R_1|$ be
the cardinality of this set.
\item Let $$R_2\triangleq\max\left\{r:
\sum_{j\in\mathcal{R}_1}\textbf{I}\left[(p_{1j},
p_{2j})\leq\left(\frac{rq_1}{m}, \frac{r(q-q_1)}{R_1}\right)\right]
= r\right\}.$$  Then the set of indices of rejected no replicability
null hypotheses is
$$\mathcal R_2= \left\{j: (p_{1j}, p_{2j})\leq\left(\frac{R_2q_1}{m},
\frac{R_2(q-q_1)}{R_1}\right), j \in \mathcal R_1\right\}.$$
\end{enumerate}
\end{procedure}

The results of Procedure \ref{procfdr} can be reported in terms of
FDR-replicability adjusted $p$-values. Let $c = q_1/q$,
\begin{align}\label{eq-FDRZvalue}
Z_j=\max\left(\frac{mp_{1j}}{c},\,\frac{R_1p_{2j}}{1-c}\right), j
\in \mathcal{R}_1,
\end{align}
and let $Z_{(1)}\leq \ldots \leq Z_{(R_1)}$ be the sorted
$Z$-values. Then the $i$th largest FDR-replicability adjusted
$p$-value is
\begin{align}\label{eq-FDRREPadjpvalue}
p_{(i)}^{REPadj} = \min_{j\geq i} \frac{Z_{(j)}}{j}.
\end{align}
Procedure \ref{procfdr} with parameters $(q_1, q) = (cq, q)$ is
equivalent to rejecting all no replicability null hypotheses with
FDR-replicability adjusted $p$-values at most $q$.

\begin{definition}\label{validsel}
 A \emph{valid selection rule} for step 1 of Procedure
\ref{procfdr} satisfies the following condition: for any $j\in
\mathcal{R}_1$, fixing all the $p$-values except for $p_{1j}$ and
changing $p_{1j}$ so that $H_{1j}$ is still selected, will not
change the set $\mathcal R_1$.
\end{definition}

 It is easy to see that this condition is satisfied if
$\mathcal R_1$ contains the smallest fixed number of $p$-values, all
hypotheses with $p$-value below a given threshold, or if $\mathcal
R_1$ contains the rejected indices from a BH procedure
on the $p$-values from the primary study.
Adaptive FDR procedures on the $p$-values from the primary study,
e.g. \cite{BHadaptive}, \cite{storey1}, \cite{yoav2}, and
\citet{Blanchard09}, are
non-valid selection rules.

\begin{theorem}\label{thmfdr}
If all the $p$-values are jointly independent and the selection rule
in step 1 of Procedure \ref{procfdr} is a valid selection rule, then
Procedure \ref{procfdr} controls the FDR at level $q$ for the family
of no replicability null hypotheses $H_{NR,1},\ldots,H_{NR,m}$.
\end{theorem}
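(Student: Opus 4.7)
The plan is to decompose the FDR over the true no-replicability null indices $I_{00}\cup I_{01}\cup I_{10}$ and, for each $j$, reduce the two-dimensional procedure to a one-dimensional Benjamini--Hochberg-type calculation using the independence of the two studies together with the validity of the selection rule. Write $\mathrm{FDR}=\sum_{j\in I_{00}\cup I_{01}\cup I_{10}} E\!\left[\textbf{I}[j\in\mathcal{R}_2]/\max(R_2,1)\right]$, and note that $\{j\in\mathcal{R}_2\}$ requires simultaneously $j\in\mathcal{R}_1$, $p_{1j}\leq R_2 q_1/m$, and $p_{2j}\leq R_2(q-q_1)/R_1$.

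The key structural step, in the spirit of the leave-one-out argument in the standard BH proof, is to show that whenever $j\in\mathcal{R}_2$ the random quantity $R_2$ equals a deterministic function $r^\dagger_j$ of the other $p$-values $(p_{1k})_{k\neq j}, (p_{2k})_{k\neq j}$. Validity of the selection rule implies that on $\{j\in\mathcal{R}_1\}$ the entire set $\mathcal{R}_1$ equals a fixed $\mathcal{A}_j$ determined by $(p_{1k})_{k\neq j}$. Unpacking the defining equation $|\{k\in\mathcal{R}_1:Z_k\leq R_2 q\}|=R_2$ on the event $\{j\in\mathcal{R}_2\}$, where the $j$th hypothesis contributes $1$ to the count at $r=R_2$, forces $R_2=r^\dagger_j:=\max\{r:|\{k\in\mathcal{A}_j\setminus\{j\}:Z_k\leq rq\}|=r-1\}$, a quantity free of $(p_{1j},p_{2j})$.

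For $j\in I_{00}\cup I_{10}$ we have $h_{2j}=0$, so $p_{2j}$ is stochastically larger than Uniform$[0,1]$ and independent of the primary study. Conditioning on $\left((p_{1k})_{k=1}^m,(p_{2k})_{k\neq j}\right)$ freezes $\mathcal{R}_1,R_1,p_{1j}$, and $r^\dagger_j$; integrating over $p_{2j}$ then gives the conditional bound $\textbf{I}[j\in\mathcal{R}_1]\,(q-q_1)/R_1$. Summing and using $|(I_{00}\cup I_{10})\cap\mathcal{R}_1|\leq R_1$ bounds the contribution of $I_{00}\cup I_{10}$ by $q-q_1$. For $j\in I_{01}$, the analogous argument conditions on $\left((p_{1k})_{k\neq j},(p_{2k})_{k=1}^m\right)$, uses the uniformity of $p_{1j}$, and yields the per-hypothesis bound $q_1/m$; summing over $I_{01}$ contributes at most $|I_{01}|q_1/m\leq q_1$. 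Adding the two pieces gives $\mathrm{FDR}\leq(q-q_1)+q_1=q$.

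The main obstacle is the structural step identifying $R_2$ with the leave-$j$-out quantity $r^\dagger_j$ on $\{j\in\mathcal{R}_2\}$. Because the function $r\mapsto|\{k\in\mathcal{R}_1:Z_k\leq rq\}|$ is piecewise constant rather than strictly monotone, the self-consistency argument requires a careful case analysis according to whether $\textbf{I}[Z_j\leq R_2 q]$ contributes to the count at $r=R_2$. Validity of the selection rule enters precisely here, ensuring that conditionally on $\{j\in\mathcal{R}_1\}$ the set $\mathcal{R}_1$ and its cardinality $R_1$ are determined by the other primary $p$-values alone, so that $r^\dagger_j$ is genuinely $(p_{1j},p_{2j})$-free and the conditional BH-style bounds can be invoked.
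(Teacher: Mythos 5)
Your proposal is correct, and its engine is the same leave-one-out mechanism as the paper's: your identity $R_2=r^\dagger_j$ on $\{j\in\mathcal{R}_2\}$ is exactly what the paper encodes through its events $C_r^{(j)}$, built from the leave-one-out statistics $T_i$ --- those events are precisely $\{r^\dagger_j=r\}$ --- and both arguments invoke validity of the selection rule at the identical point, namely to make $\mathcal{R}_1$ (hence $R_1$ and $r^\dagger_j$) a function of the other primary $p$-values on the event $\{j\in\mathcal{R}_1\}$. Your worry about the piecewise-constant counting function resolves cleanly: writing $N_{-j}(r)=|\{k\in\mathcal{R}_1\setminus\{j\}:Z_k\leq rq\}|$, on $\{j\in\mathcal{R}_2\}$ one has $N_{-j}(R_2)=R_2-1$, so $R_2\leq r^\dagger_j$; and if $r^\dagger_j>R_2$ then $Z_j\leq R_2q\leq r^\dagger_j q$ would make $r^\dagger_j$ satisfy the full self-consistency equation, contradicting maximality of $R_2$.

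Where you genuinely diverge is in the decomposition of the true nulls. The paper charges $I_0=I_{00}\cup I_{01}$ to the primary budget, using superuniformity of $P_{1j}$ to get the term $|I_0|q_1/m$, and charges only $I_{10}$ to the follow-up budget $q-q_1$; you instead charge $I_{01}$ to the primary budget and $I_{00}\cup I_{10}$ to the follow-up budget, using superuniformity of $P_{2j}$ for $j\in I_{00}$ and the bound $|(I_{00}\cup I_{10})\cap\mathcal{R}_1|\leq R_1$. Under the theorem's hypothesis of full joint independence the two allocations are mirror images and equally valid, each giving $q_1+(q-q_1)=q$. The paper's choice carries two side benefits that yours forgoes: its bound $|I_{00}\cup I_{01}|q_1/m+(q-q_1)$ is the one refined in Appendix B into the oracle level $|I_{00}|(q')^2/m+(|I_{01}|/m+1)q'$ (for $j\in I_{00}$ one can exploit both uniformities), and its treatment of $I_{00}$ rests only on the primary-study $p$-values, which is what allows the follow-up study to be weakened to PRDS (Theorem S3.1 of the Supplementary Material) without touching the $I_0$ computation, whereas your allocation would route $I_{00}$ through the follow-up-study argument and hence through the PRDS lemma in any such extension. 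One minor loose end, easily repaired: when $r^\dagger_j=0$ the event $\{j\in\mathcal{R}_2\}$ is empty, so your per-$j$ conditional bounds should be read with that convention (and with $\Pr(P_{ij}=0)=0$ for nulls) to avoid a division by zero.
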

See Appendix \ref{proof-thmfdr} for the proof.

The selection rule and the choice of $q_1$ affect the power of
Procedure \ref{procfdr}. A natural choice for a selection rule is
the set of rejected hypotheses by the BH procedure at level $q_1$ on
the primary study $p$-values, since the set of indices of rejected
no replicability null hypotheses  is a subset of this set. A rule
that selects by the BH procedure at level $q$ is not as good as the
rule at level $q_1$, since any additional hypotheses selected will
not be rejected, but will result in more severe thresholds on the
follow-up study $p$-values. In Figure \ref{figSelectionRule} we
showed in a simulated example that the BH procedure at level $q_1$
was very close to selecting the optimal number of hypotheses for
follow-up. We recommend using it when there are no additional
constraints that require choosing only a small number of
hypotheses for follow-up. The optimal choice of $q_1$ depends on
$|I_{00}|, |I_{01}|, |I_{10}|, |I_{11}|$, and the non-null
distribution of the $p$-values, and therefore guidelines for
choosing $q_1$ are application specific. In simulated GWAS in
Section \ref{sec-example} the choice of $q_1$ had little effect on
the average number of discoveries.

Theorem \ref{thmfdr} assumes independence of the $p$-values within
each study as well as  across the studies. However, the assumption
of independence among the $p$-values within each study may not be
realistic in many applications. Particularly, in GWAS there is
dependency across the SNPs, therefore the $p$-values within each
study may be dependent. \cite{yoav6} proved that the BH procedure
controls the FDR when the $p$-values have a special dependency
called PRDS.

\begin{definition}\label{defPRDS}\citep{yoav6} The set of $p$-values $P_1,\ldots,P_M$ has property PRDS if for any increasing set $D$, and for each true null hypothesis
$i$, $Pr((P_1,\ldots, P_M) \in D| P_i =p)$ is nondecreasing in $p$ .
\end{definition}

If the $p$-values are independent in the primary study, yet have
property PRDS in the follow-up study, Theorem S3.1 in the
Supplementary Material shows that the result in Theorem \ref{thmfdr}
holds. For arbitrary dependence among the $p$-values in the primary study,
  a modification of the cut-off level of Procedure \ref{procfdr}
will guarantee that the FDR is controlled at the nominal level. The most severe modification, that will guarantee FDR control  for any valid selection rule,
is to apply Procedure \ref{procfdr} with the modification in item 1 of Theorem \ref{genthm} below. However,  in  item 2 of Theorem \ref{genthm}  we show that the modification  factor may be  smaller than $\sum_{i=1}^m 1/i\approx \log m$  if the selected hypotheses for follow-up are a subset of the  hypotheses with primary study $p$-values below a fixed cut-off $t$.
For example, in GWAS it is common to select hypotheses with primary study $p$-values below $1/m$, where $m$ is the number of hypotheses in the primary study \citep{lander95}. If  $t\geq \frac{q_1}{1+\sum_{i=1}^{m-1}\frac 1i}$, then the modification in item 1 of Theorem \ref{genthm} cannot be improved. However, if  $t< \frac{q_1}{1+\sum_{i=1}^{m-1}\frac 1i}$, then the modification in item 2 of Theorem \ref{genthm} is less conservative than the modification in item 1.
For typical values of $q_1$ (e.g. $q_1 \in [0.005, 0.045]$) and large $m$, the threshold $t$ will often be below $ \frac{q_1}{1+\sum_{i=1}^{m-1}\frac 1i}$, and therefore item 2 may be useful in applications. Note, moreover, that if $t\leq q_1/m$,
then item 2 of Theorem \ref{genthm} states that no modification is required,  so for a valid selection rule which selects a subset of the set of hypotheses with primary study $p$-values below $t$, where $t\leq q_1/m$, Procedure \ref{procfdr} is valid  for any form of dependency among the $p$-values in the primary study.

\begin{theorem}\label{genthm}
 Assume that the follow-up study $p$-values have property PRDS, and are independent of the $p$-values in the primary study. Then Procedure \ref{procfdr}
 controls the FDR at level
$q$ for the family of no replicability null hypotheses
$H_{NR,1},\ldots,H_{NR,m}$ if the selection
rule used in step 1 of Procedure 3.2 is a valid selection rule, and the expressions in step 2 of Procedure \ref{procfdr} are modified as follows:
\begin{enumerate}
\item  In the terms $rq_1/m$ and $R_2q_1/m$ only, $q_1$ is replaced by $\widetilde{q_1} = q_1/(\sum_{i=1}^m1/i)$.
\item In the terms $rq_1/m$ and $R_2q_1/m$ only, $q_1$ is replaced by $\widetilde{q}_1$, where $$\widetilde{q}_1 = \max \{x:\,\, x(1+\sum_{i=1}^{\lceil
tm/x-1 \rceil}1/i)=q_1\},$$ if  only  hypotheses with primary study $p$-values at most a fixed threshold $t$ are considered for follow-up,  i.e.
$\mathcal{R}_1\subseteq\{j\in\{1,\ldots,m\}:\,P_{1j}\leq t\}$, where $t< \frac{q_1}{1+\sum_{i=1}^{m-1}\frac 1i}$.
\end{enumerate}
\end{theorem}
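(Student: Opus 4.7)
The plan is to mimic the proof of Theorem \ref{thmfdr} (the independence case) together with the PRDS refinement alluded to in Theorem S3.1, and to upgrade the primary-study bound so that it continues to hold under arbitrary dependence via the Benjamini--Yekutieli (2001) inflation-factor argument. The first step is to decompose the false discoveries. Every $j\in\mathcal R_2$ is a false rejection of $H_{NR,j}$ iff $h_{1j}h_{2j}=0$, so
\[
\frac{\sum_{j\in\mathcal R_2}(1-h_{1j}h_{2j})}{R_2\vee 1} \leq \frac{V_1}{R_2\vee 1} + \frac{V_2}{R_2\vee 1},
\]
where $V_1=\sum_{j\in\mathcal R_2}(1-h_{1j})$ and $V_2=\sum_{j\in\mathcal R_2}h_{1j}(1-h_{2j})$. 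It suffices to prove $E[V_1/(R_2\vee 1)]\leq q_1$ and $E[V_2/(R_2\vee 1)]\leq q-q_1$.

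For the follow-up contribution I would reuse the Theorem S3.1 argument essentially verbatim. Conditioning on all primary-study $p$-values fixes $\mathcal R_1$, $R_1$, and the sequence of primary thresholds $\{r\widetilde q_1/m\}$; the indices in $\mathcal R_2\cap\{h_{1j}=1\}$ are then governed by a BH-style step-up on the follow-up $p$-values restricted to $\mathcal R_1$. Because the follow-up $p$-values are independent of the primary data and PRDS, the Benjamini--Yekutieli (2001) PRDS result yields $E[V_2/(R_2\vee 1)\mid\text{primary data}]\leq (q-q_1)\,|\{j\in\mathcal R_1: h_{2j}=0\}|/R_1\leq q-q_1$. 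Marginalizing completes this half. Crucially, this step places no requirement on the dependence among primary $p$-values.

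For the primary contribution I would exploit that every $j\in\mathcal R_2$ satisfies $p_{1j}\leq R_2\widetilde q_1/m$, bound the follow-up indicator by $1$, and then apply the BY trick:
\[
\mathbf{I}[p_{1j}\leq R_2\widetilde q_1/m]=\sum_{k=1}^{m}\mathbf{I}\!\left[\tfrac{(k-1)\widetilde q_1}{m}<p_{1j}\leq\tfrac{k\widetilde q_1}{m}\right]\mathbf{I}[R_2\geq k],
\]
combined with the elementary identity $\sum_{r\geq k}\mathbf{I}[R_2=r]/r\leq \mathbf{I}[R_2\geq k]/k$. Swapping sums, taking expectations, and using super-uniformity of each true-null primary $p$-value yields $E[V_1/(R_2\vee 1)]\leq \widetilde q_1\sum_{k=1}^m 1/k$. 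Choosing $\widetilde q_1=q_1/\sum_{i=1}^m 1/i$ makes this equal to $q_1$, proving item 1. For item 2, when $\mathcal R_1\subseteq\{j:p_{1j}\leq t\}$ the interval indicator above vanishes for $k\widetilde q_1/m>t$, so the $k$-sum truncates at $k=\lceil tm/\widetilde q_1\rceil$. Using the bound $\sum_{k=1}^{K+1}1/k\leq 1+\sum_{k=1}^{K}1/k$ with $K=\lceil tm/\widetilde q_1\rceil-1=\lceil tm/\widetilde q_1-1\rceil$ gives the inflation factor $1+\sum_{i=1}^{\lceil tm/\widetilde q_1-1\rceil}1/i$, and solving $\widetilde q_1(1+\sum_{i=1}^{\lceil tm/\widetilde q_1-1\rceil}1/i)=q_1$ for the largest admissible $\widetilde q_1$ recovers the definition in item 2.

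The main obstacle is the two-dimensional, simultaneous step-up: $R_2$ is determined by both $p_{1j}$ and $p_{2j}$ thresholds, so a clean BY decomposition on the primary side only goes through after upper-bounding the follow-up indicator by $1$ \emph{before} partitioning on $p_{1j}$. One must also verify that the valid-selection-rule hypothesis makes $\mathcal R_1$ behave consistently with changes in individual $p_{1j}$'s inside the selected set, so that the decomposition does not tacitly assume a fixed $\mathcal R_1$. A secondary but delicate point is the exact boundary accounting in item 2: the leading ``$1+$'' in the formula comes from the inequality $\sum_{k=1}^{K+1}1/k\leq 1+\sum_{k=1}^K 1/k$, which must be tracked carefully to match the paper's stated $\widetilde q_1$.
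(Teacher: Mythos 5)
Your proposal is correct and takes essentially the same route as the paper's own proof: the identical decomposition of the FDR by primary-study null status ($I_{00}\cup I_{01}$ versus $I_{10}$), the Benjamini--Yekutieli interval-partition argument with harmonic inflation on the primary term (your truncation at $\lceil tm/\widetilde{q}_1\rceil = \lceil tm/\widetilde{q}_1-1\rceil+1$ for item 2 reproduces exactly the paper's split into $r\leq k$ and $r>k$ with $k=\lceil tm/\widetilde{q}_1-1\rceil$), and the conditional PRDS increasing-set argument of the supplement's Theorem S3.1 and its lemma for the $I_{10}$ term, whose insensitivity to the modified primary threshold (item 3 of that lemma) you correctly identify as the reason this half imposes no condition on the primary-study dependence. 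The only cosmetic inaccuracies are that the interval indicator vanishes when $(k-1)\widetilde{q}_1/m\geq t$ rather than when $k\widetilde{q}_1/m>t$ (your truncation index is nonetheless right), and that, conditionally on the primary data, the procedure is a step-up on $\max\left(mp_{1i}/\widetilde{q}_1,\,R_1p_{2i}/(q-q_1)\right)$ rather than a literal BH step-up on the follow-up $p$-values alone, which is precisely the generalization the paper's lemma supplies.
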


See Supplementary Material for the proof, as well as for additional results under dependency. Specifically, Theorem S3.2 in the Supplementary Material shows that in the more general setting of arbitrary dependence among the follow-up study $p$-values,  it is also necessary to replace
$(q-q_1)$ with $(q-q_1)/(\sum_{i=1}^{R_1} 1/i)$ in the terms $r(q-q_1)/R_1$ and $R_2(q-q_1)/R_1$
in expression 2 of Procedure \ref{procfdr}. These results  are similar to the
result in \cite{yoav6} for the BH procedure in their Theorem 1.3.

\begin{remark}\label{rem-Naive}
\cite{BY05} proved in their Proposition 3 that the procedure that
applies the BH procedure at level $q_1$ on the primary study
$p$-values, and the BH procedure at level $q-q_1$ on the follow-up
study $p$-values, controls the FDR  at level $q_1(q-q_1)<q$ on the
family of global null hypotheses, $H_{G1}, \ldots, H_{Gm}$, where
$H_{Gj}: (h_{1j}, h_{2j}) = (0,0)$. However, on the family of no
replicability null hypotheses, $H_{NR,1}, \ldots, H_{NR,m}$, the FDR
of this procedure may be higher than the nominal level $q$. The key
difference between Procedure \ref{procfdr} and such a two stage
procedure, is the requirement that the two $p$-values from a
selected hypothesis have to simultaneously be smaller than two
thresholds. In an extreme scenario where all hypotheses are from
$I_{10}$ or $I_{01}$, and the $p$-values from false null hypotheses
are zero, the two stage procedure may have an FDR of one, as
follows. The BH procedure on the primary study $p$-values will
reject all hypotheses from $I_{10}$ but also few from $I_{01}$ (when
$|I_{01}|$ and $|I_{10}|$ are large enough), and the hypotheses from
$I_{01}$ will be rejected by the BH procedure on the follow-up study
$p$-values, resulting in an FDR of one. However, Procedure
$\ref{procfdr}$ will have an FDR level below $q$. To see this, note
that in order to reject a no replicability null hypothesis by
Procedure \ref{procfdr}, the $p$-value of the Simes test
\citep{simes} for the intersection of the elementary hypotheses
indexed by $I_{01}$, using the data from the primary study, has to
be below $q_1$, or the Simes test $p$-value for the intersection of
elementary hypotheses indexed by $I_{10}\cap \mathcal{R}_1$, using
the data from the follow-up study, has to be below $q-q_1$.
Therefore, the probability of rejecting at least one no
replicability null hypothesis, which coincides with the FDR since
all no replicability null hypotheses are true, is at most $q$. See
Figure \ref{fig050508}, right panel, for a more realistic simulated
example.
\end{remark}

\section{Replicability analysis  with no division into primary and follow-up studies}\label{sec-bothavailableFDR}
Consider now a situation where both studies are available  before
the analysis. If some of the elementary hypotheses are examined in
only one of the studies, then these hypotheses are not considered
for replicability analysis. In this setting, there is no primary
study and follow-up study.  We propose the following generalization
of Procedure \ref{procfdr}, that can be tuned to treat the two
studies symmetrically. Without loss of generality, we label the
studies as study one and study two.
\begin{procedure}\label{procfdrsym}
The generalized two stage procedure for testing a family of no
replicability null hypotheses with parameters $(w_1,q_1,q)$, where
$0\leq w_1\leq 1$ and $0<q_1<q<1$:
\begin{enumerate}
\item Apply Procedure \ref{procfdr} with parameters $(w_1q_1,w_1q)$ with study one as the primary study and study two as the follow-up study. Denote the set of indices of
rejected no replicability null hypotheses by $\mathcal R_{12,
w_1q}$.
\item Reverse the roles of study one and study two. Apply Procedure \ref{procfdr}
with parameters $((1-w_1)q_1,(1-w_1)q)$. Denote the set of indices
of rejected no replicability null hypotheses by $\mathcal R_{21,
(1-w_1)q}$.
\item The set of indices of
rejected no replicability null hypotheses is $\mathcal R_{12,
w_1q}\cup \mathcal R_{21, (1-w_1)q}$.
\end{enumerate}
\end{procedure}

\begin{theorem}\label{thmprocfdrsym}
Procedure \ref{procfdrsym} controls the FDR at level $q$ for the
family of no replicability null hypotheses
$H_{NR,1},\ldots,H_{NR,m}$ if all $p$-values are jointly independent and  the selection rule
in step 1 of Procedure \ref{procfdr} is a valid selection rule.
\end{theorem}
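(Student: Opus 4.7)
\emph{Plan.} The aim is to reduce FDR control of Procedure \ref{procfdrsym} to two applications of Theorem \ref{thmfdr}, combined by an elementary union-type inequality on the ratio $V/R$. I will denote the rejection sets produced at steps 1 and 2 by $\mathcal{R}^{(1)} = \mathcal{R}_{12,w_1 q}$ and $\mathcal{R}^{(2)} = \mathcal{R}_{21,(1-w_1)q}$, with cardinalities $R^{(i)} = |\mathcal{R}^{(i)}|$ and false-rejection counts $V^{(i)} = |\mathcal{R}^{(i)} \cap (I_{00}\cup I_{01}\cup I_{10})|$. For the combined procedure write $\mathcal{R} = \mathcal{R}^{(1)}\cup \mathcal{R}^{(2)}$, $R=|\mathcal{R}|$, $V=|\mathcal{R}\cap (I_{00}\cup I_{01}\cup I_{10})|$.

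The first step is to invoke Theorem \ref{thmfdr} for each of the two applications of Procedure \ref{procfdr} in isolation. Joint independence of all $p$-values continues to hold after swapping the roles of the two studies, and the selection rule is assumed valid in both orientations, so
\begin{equation*}
E\!\left[\frac{V^{(1)}}{R^{(1)}}\mathbf{1}\{R^{(1)}>0\}\right] \leq w_1 q,\qquad
E\!\left[\frac{V^{(2)}}{R^{(2)}}\mathbf{1}\{R^{(2)}>0\}\right] \leq (1-w_1) q.
\end{equation*}

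The second, and only substantive, step is a pointwise ratio inequality. Any false rejection in $\mathcal{R}$ belongs to at least one of $\mathcal{R}^{(1)}, \mathcal{R}^{(2)}$, hence $V \leq V^{(1)}+V^{(2)}$; since $\mathcal{R}\supseteq \mathcal{R}^{(i)}$, one also has $R\geq R^{(i)}$ for $i=1,2$. On $\{V^{(i)}>0\}$ one automatically has $R^{(i)}>0$, so $V^{(i)}/R \leq V^{(i)}/R^{(i)}$; the cases $V^{(i)}=0$ are trivial. Combining yields
\begin{equation*}
\frac{V}{R}\mathbf{1}\{R>0\} \;\leq\; \frac{V^{(1)}}{R^{(1)}}\mathbf{1}\{R^{(1)}>0\} \;+\; \frac{V^{(2)}}{R^{(2)}}\mathbf{1}\{R^{(2)}>0\}.
\end{equation*}
Taking expectations and adding the two bounds from the first step gives $\mathrm{FDR}\leq w_1 q + (1-w_1)q = q$, which is the claim.

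There is no genuine obstacle: the proof is essentially a union-bound on false discoveries in the numerator paired with the observation that unioning two rejection sets can only enlarge the denominator. The only detail requiring care is the bookkeeping around empty rejection sets, absorbed by the convention $0/0=0$ built into the indicator functions. The weight $w_1$ enters only through the split of the overall FDR budget $q$ into the two pieces $w_1 q$ and $(1-w_1)q$.
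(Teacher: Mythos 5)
Your proof is correct and is essentially the paper's own argument (Appendix C): the same union bound $V \leq V^{(1)}+V^{(2)}$ on the numerator, the same denominator monotonicity $R \geq R^{(i)}$, and two invocations of Theorem \ref{thmfdr} at levels $w_1 q$ and $(1-w_1)q$. Your indicator-function bookkeeping with the $0/0=0$ convention is merely a notational variant of the paper's $\max(R,1)$ convention, so no further comparison is needed.
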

See Appendix \ref{proof-thmprocfdrsym} for the proof.

Choosing $w_1=1$ results in Procedure \ref{procfdr}, where study one
has the role of the primary study and study two has the role of the
follow-up study. Similarly, choosing $w_1=0$ results in Procedure
\ref{procfdr} with the roles of study one and study two reversed.
The choice $0<w_1<1$ reflects the similarity of Procedure
\ref{procfdrsym} to Procedure \ref{procfdr} in the following way:
when $w_1$ is close to one (zero), Procedure \ref{procfdrsym} gives
similar results to Procedure \ref{procfdr} with study one (two) as
the primary study. The choice $w_1=0.5$ results in a variant of
Procedure \ref{procfdr} that is symmetric with respect to both
studies.

\section{GWAS examples} \label{sec-example}
In this section we demonstrate the suggested methods on two real data examples and on a GWAS simulation. A replicability analysis with FWER control is carried out for the first example, that has  only five hypotheses in the follow-up study. A replicability analysis with FDR control is carried out for the second example, that has 126 hypotheses in the follow-up study. Finally, in order to examine the robustness of procedure \ref{procfdr} for GWAS type dependency, examples were simulated that retained the dependencies in the data that occur in GWAS.

\paragraph{Example 1.} We reproduce in Table \ref{tab1}, columns 1-4, a subset of the
columns of Table 1 of results of \cite{Bis11}. We added in columns
5-7 the Bonferroni-replicability adjusted $p$-values for $c=q_1/q \in
\{0.2, 0.5, 0.8 \}$. Procedure \ref{proc-FWER} with parameters
$(q_1,q) = (0.025, 0.05)$  or $(q_1,q) = (0.04, 0.05)$ identified the SNP near MSRB3 as having
replicated association with the phenotype.
 The choice of $c$ should be made prior to analysis, and the
choice $c=0.8$ may be preferred over $c\leq 0.5$ when it is believed
that the power to detect an association in the primary study using a
threshold of order $1/(2.5\times10^6)$  is smaller than the power to
detect an association in the follow-up study using a threshold of
order $1/5$.

\begin{table}
\caption{The $p$-values of SNPs from the primary and follow-up
studies, from Table 1 of \cite{Bis11} (columns 3-4), and the
FDR-replicability adjusted $p$-values for various choices of
$c=q_1/q$ (columns 5-7).} \label{tab1}
\begin{tabular}{c|c|c|c|c|c|c|}
Locus &  Gene & Primary  & Follow-up   & \multicolumn{3}{|c|}{Bonferroni-replicability adjusted $p$-values}\\
&&study&study& $c=0.2$ & $c = 0.5$ & $c =0.8$\\
 \hline
 2q24 &
 DPP4 & $5.2\times 10^{-8}$ & 0.7 &   1.0000 & 1.0000 & 1.0000 \\
9q33  & ASTN2 & $1.0\times 10^{-7}$ & 0.2& 1.0000 & 0.5000 &0.3125 \\
12q14  & MSRB3 & $5.5\times 10^{-9}$ &0.002 & 0.06875 &  0.0275 & 0.0172\\
 & WIF1 & $2.2\times 10^{-8}$ & 0.0007& 0.2750 & 0.1100 & 0.0688 \\
12q24  & HRK & $4.8\times 10^{-8}$ & $5.8\times 10^{-5}$ & 0.6000 &
0.2400 & 0.1500
\end{tabular}
\end{table}

\paragraph{Example 2.} To discover
associations between SNPs and  Crohn's disease (CD), \cite{Barrett08} examined 635,547
SNPs on 3230 cases and 4829 controls of European descent,
  collected in three separate studies: NIDDK4, WTCCC5, and a Belgian-French study.
  The primary study $p$-values in this example are the meta-analysis $p$-values from the combined data from the three studies.
  Only hypotheses with primary study $p$-values below  $5\times 10^{-5}$ were considered for follow-up. Although 526 SNPs met the selection criterion, only a subset of 126 SNPs were followed up.
  These 126 $p$-values were the smallest two $p$-values in 63 distinct regions, so the selection rule is a valid selection rule.
Procedure \ref{procfdr} with $(q_1,q) = (0.04, 0.01)$
identified 36 SNPs. In Appendix D, Table \ref{tab-crohn}  shows the $p$-values from the primary and follow-up studies, as well as the FDR-replicability adjusted $p$-values for the choice $c=0.8$, for these 36 replicability discoveries.
Since the $p$-values are not independent within each study, a more conservative analysis approach is to modify the cut-offs as suggested by Theorem \ref{genthm}.  Assuming  PRDS type dependency in the follow-up study, item 1 of Theorem \ref{genthm} suggests using $\tilde{q}_1 = 0.04/(\sum_{i=1}^{635,547}1/i) = 0.0029$ for the primary study cut-offs, while the follow-up study cut-offs remain unchanged. The modified procedure identified 21 SNPs. Column 7 of Table \ref{tab-crohn} shows the FDR-replicability adjusted $p$-values for the choice $c = 0.8$, where the adjustment is made as described in expressions (\ref{eq-FDRZvalue}) and (\ref{eq-FDRREPadjpvalue}), with $p_{1j}$ replaced by $\tilde{p}_{1j} = (\sum_{i=1}^{635,547}1/i)p_{1j} = 13.94\times p_{1j}
, j=1,\ldots,635,547$.
Since the SNPs considered for follow-up were only SNPs with primary study $p$-values below $5\times 10^{-5}$, one could use a less conservative procedure suggested in item 2 of Theorem \ref{genthm}, with $\tilde{q}_1 = 0.0038$, where $0.0038$ is the solution to $0.04 = x(\sum_{i=1}^{\lceil 635,547\times 5\times 10^{-5}/x-1 \rceil}1/i+1)$.
This procedure  resulted in 23 replicability discoveries.  The latter procedure  is the recommended procedure, if the investigator is not willing to assume that Procedure \ref{procfdr} is robust to deviations from independence within the primary study.
However, simulations in the next example suggest that for the type of dependencies that occur in GWAS, Procedure \ref{procfdr} may actually be conservative.  We come back to  the issue of robustness of Procedure \ref{procfdr} in the Discussion Section \ref{sec-discussion}.

\paragraph{GWAS simulation example.}
We simulated two GWAS from the simulator HAPGEN2
\citep{Su11}. The two studies were generated from two samples of the
HapMap project \citep{HapMap03}, a sample of 165 Utah residents with
Northern and Western European ancestry (CEU), and a sample of 109
Chinese in Metropolitan Denver, Colorado (CHD). In the
 CEU and CHD populations, respectively, 34 and 38 SNPs were set as disease SNPs with an increased multiplicative
relative risk of 1.2, and 18 of the disease SNPs were common to both
populations. Each study contained 4500 cases and 4500 referents. The
linkage disequilibrium (LD) across SNPs, as measured for the samples
in the HapMap project, was retained. Due to LD, the number of SNPs
associated with the phenotype in each study was larger than the
number of disease SNPs. In order to identify the SNPs in each study
that are truly associated with the phenotype, the simulation of 4500
cases and 4500 controls from the population was repeated 11 times,
and 11 $p$-values were produced per SNP. SNPs with Fisher's combined
$p$-value \citep{loughin} below the Bonferroni threshold were
considered to be truly associated with the disease. Our ground truth
included 1355 and 1010 SNPs associated with the disease in the CEU
and in the CHD population, respectively, out of which 274 SNPs were
associated with the disease in both populations.

As a standard preprocessing step, we removed SNPs with minor allele
frequency below 0.05, and thus the number of SNPs in the analysis
was reduced from 1,387,466 to 887,362, on average, for the 11 pairs
of studies. Our selection rule for Procedure \ref{procfdrsym} with
parameters $(w_1, q_1, q)$ was the BH procedure at level $w_1q_1$
when the primary study was the CEU study, and at level $(1-w_1)q_1$
when the primary study was the CHD study, since the potential set of
SNPs to be discovered as having replicated associations is at most
the set of SNPs that are discovered by the BH procedure (as
discussed in Section 3).
  Table \ref{tab3} presents the average
number of replicated findings, as well as the average false
discovery proportion (FDP) for the methods compared. The standard
error (SE) is presented in parentheses.  From rows 1 and 2 we see
that if there is no division into primary and follow-up studies,
then the symmetric Procedure \ref{procfdrsym} discovers more SNPs
with replicated associations than the BH procedure on maximum
$p$-values, while maintaining a low FDP. From rows 3-5, and 6-8, we
see that the choice of which study was the primary study had a large
effect on the average number of
discoveries, and  the choice of $q_1$ mattered little.

\begin{table}[ht]
\caption{For 4500 cases and 4500 referents in both studies, the
average number of associated and disease SNPs discovered (SE), and
the average FDP (SE), for different procedures. The selection rule
for Procedure \ref{procfdrsym} was the BH procedure at level
$w_1q_1$ when the CEU study was the primary study, and at level
$(1-w_1)q_1$ when the CHD study was the primary study. }\label{tab3}
\begin{center}
\begin{tabular}{|l|c|c|c|}
  \hline
   Procedure & \multicolumn{2}{c}{\# Replicated findings} &
FDP\\
&associated SNPs (SE) & disease SNPs (SE) & (SE)
\\ \hline
BH on maximum $p$-values & 29.182 (3.205) & 7.364 (0.432) & 0.000 (0.000) \\
 \ref{procfdrsym} with $w_1 = 0.5, q_1 = 0.025, q=0.05$ & 77.727 (6.378) & 11.455 (0.366) & 0.011 (0.005) \\
  \ref{procfdrsym} with $w_1 = 1, q_1 = 0.01, q=0.05$  & 74.091 (6.748) & 10.364 (0.310) & 0.012 (0.006) \\
  \ref{procfdrsym} with $w_1 = 1, q_1 = 0.025, q=0.05$ & 76.091 (6.221) & 10.727 (0.359) & 0.012 (0.005) \\
  \ref{procfdrsym} with $w_1 = 1, q_1 = 0.04, q=0.05$ & 69.545 (5.745) & 10.818 (0.352) & 0.009 (0.005) \\
  \ref{procfdrsym} with $w_1 = 0, q_1 = 0.01, q=0.05$ & 35.545 (4.575) & 7.364 (0.607) & 0.008 (0.008) \\
  \ref{procfdrsym} with $w_1 = 0, q_1 = 0.025, q=0.05$ & 41.455 (5.294) & 8.273 (0.469) & 0.007 (0.007) \\
  \ref{procfdrsym} with $w_1 = 0, q_1 = 0.04, q=0.05$ & 42.273 (4.158) & 8.545 (0.312) & 0.000 (0.000) \\
   \hline
\end{tabular}
\end{center}
\end{table}

From the last column in Table \ref{tab3} we see that the average FDP
was far below 0.05, suggesting that the procedures are conservative.
This conservatism can be alleviated if the following oracle
information were known: the fraction of SNPs with no association
with the phenotype in both studies, $f_{00}$, and  with association
with the phenotype only in the follow-up study, $f_{01}$. Then it
was possible to perform Procedure 4.1 at level $(w_1, q', 2q')$,
where $q'$ is the solution to $f_{00}(q')^2+(f_{01}+1)q' = q$ for
$w_1\in \{0,1\}$, and the solution to
$f_{00}(0.5q')^2+(f_{01}+1)0.5q' = 0.5q$ for $w_1=0.5$, with the
same guarantee of FDR control at level $q$, as follows from Appendix
\ref{proof-thmfdr-oracle}. Specifically, in our simulation $f_{00} =
0.9990$, $f_{01} =0.00036$ on average, after preprocessing. For FDR
control at level $q=0.05$, on average $q'=0.048$ for $w_1=0,1$ and
$q'=0.049$ for $w_1=0.5$. Table \ref{tab3b} shows the average FDP
and average number of rejections for Procedure \ref{procfdrsym} with
and without the oracle. Although the average FDP is higher with the
oracle, it is still below the nominal 0.05 level for two main
reasons. First, our simulation preserves the LD pattern of the SNPs,
and thus the $p$-values within each study are not independent.
Second, the upper bound of $f_{00}(q')^2+(f_{01}+1)q'$ is not a
tight upper bound for the actual FDR level.  A tighter oracle upper
bound requires knowing the expectation of $|{\mathcal R}_1 \cap
I_{10}|/|{\mathcal R}_1|$, and this bound is tight if the non-null
effect sizes in $I_{10}\cup I_{01}$ are extremely large.

\begin{table}[ht]
\caption{The average FDP and average number of rejections  for
Procedure \ref{procfdrsym} with and without the oracle, for FDR
control at level 0.05. }\label{tab3b}
\begin{center}
\begin{tabular}{|l|c|c|c|c|}
  \hline
    &  \multicolumn{2}{|c|}{FDP} &
 \multicolumn{2}{|c|}{\# Replicated findings}\\
 & Oracle  & $(q_1,q)=(0.025, 0.05)$ &  Oracle & $(q_1,q) = (0.025,
0.05)$
\\ \hline
 $w_1 = 0.5$ & 0.023 & 0.011 & 90 & 78  \\
$w_1 = 1$ & 0.023 & 0.012 &  85 & 76  \\
$w_1 = 0$ & 0.029 & 0.007 & 50 & 41  \\
   \hline
\end{tabular}
\end{center}
\end{table}

For the two studies from the CEU and CHD populations, a
meta-analysis was performed by first combining the SNP $p$-values
using Fisher's combining method, and then applying the BH procedure
at level 0.05 on the combined $p$-values. The average number of SNPs
associated with the disease in at least one study was 393, while
less than 80 SNPs were discovered to have replicated associations
(Table \ref{tab3}). The two main reasons for discovering more SNPs
in a typical meta analysis are as follows. First, the simulation
setting contained five times more associated SNPs than SNPs with
replicated associations. Second, for a SNP with a replicated
association, the power to detect that the association is replicated
is lower than the power to detect that there is an association in at
least one study.
The discovered SNPs with replicated associations were a subset of
the discovered associated SNPs, but their meta-analysis $p$-values
were not ranked smallest among all meta-analysis $p$-values (not
shown). Importantly, the discoveries from the meta-analysis could
not serve as evidence towards replicability, since while the average
fraction of SNPs with no association in both studies among the
meta-analysis discoveries was 0.06, the average fraction of SNPs
with no replicated association among the meta-analysis discoveries
was 0.78.

\section{A simulation study} \label{sec-sim}
The goal of the simulations was threefold. First, to investigate the
effect of the choice of $q_1$ and $w_1$ on the power of Procedures
\ref{procfdr} and \ref{procfdrsym}. Second, to compare these
procedures to the alternative of applying BH on the maximum
$p$-values, i.e. the partial conjunction approach when exactly two
studies are analyzed. Third, to investigate the effect of the
selection rule on the power of the procedures.

The procedures compared were (1) the BH procedure at level 0.05 on
maximum $p$-values; (2) Procedure \ref{procfdrsym} with $w_1 \in
\{0, 0.5, 1\}$, $c = q_1/q \in \{0.1, 0.2, \ldots,0.9\}$, and
$q=0.05$; and (3) the naive (BH-$i$, BH-$j$) procedure, $i,j\in
\{1,2\}$, $i\neq j$, which applies the BH procedure at level 0.05 on
the $p$-values of study $i$, and separately on the $p$-values of study
$j$ for the hypotheses that were rejected in study $i$, and
declares hypotheses rejected in both studies as false no replicability null hypotheses;
(4) the oracle
 Procedure \ref{procfdr} with parameters $(q_1,q) = (q', 2q')$, where $q'$ was the solution to
$\frac{|I_{00}|}m (q')^2+\left(\frac{|I_{01}|}m+1\right)q' = 0.05$.
This oracle procedure controls the FDR at level $0.05$, see Appendix
\ref{proof-thmfdr-oracle} for a proof.

The $p$-values were generated independently as follows. For $H_j$,
$j=1,\ldots, m$, $P_{1j} =
1-\Phi\left(\frac{X_{1j}}{\sigma_1}\right)$ and $P_{2j} =
1-\Phi\left(\frac{X_{2j}}{\sigma_2}\right)$, where $X_{1j}\sim
N(\mu_{1j}, \sigma_1^2)$ and $X_{2j}\sim N(\mu_{2j}, \sigma_2^2)$.
We let $\mu_{ij} = 0\cdot (1-h_{ij})+ \mu_i \cdot h_{ij}$, where
$i\in \{1,2\}$, and $ \mu_i \in \{ 0.5, 1, \ldots, 5\}$. We set
$m=1000$, and $f_{ij}  = |I_{ij}| /m$ for $i,j \in \{0,1\}$ as
follows: $f_{00}=0.9$, $f_{11}=0.1$; $f_{00}=0.9$,
$f_{01}=f_{10}=0.025$, $f_{11}=0.05$; $f_{01}=f_{10}=0.5$;
$f_{00}=0.8$, $f_{01}=f_{10}=0.1$. The standard deviations
$\sigma_1$ and $\sigma_2$ were either fixed values $\sigma_i\in
\{0.3, 1 \}$, $i\in \{ 1,2\}$, or reflected the fraction of sample
size allocated to the first study: $\sigma_1 = \sigma/\sqrt{\zeta
N}$, $\sigma_2 =\sigma/\sqrt{(1-\zeta)N}$, $\sigma=10$, $\zeta\in
\{0.1,0.2,\ldots,0.9 \}$, $N=1000$.

The simulation results were based on 1000 repetitions. The FDR was
estimated by averaging the FDP. The average power was estimated by
 the average number of  rejected false no replicability null
hypotheses, divided by $mf_{11}$.

\subsection{Simulation results}\label{subsec-simresults}
As expected from our theoretical results, in all the settings
considered the estimated FDR was below 0.05 for all procedures but
the naive (BH-$i$, BH-$j$) procedure.  The SE of the estimated FDR
and power were of the order of $10^{-3}$ for all  procedures under
all configurations considered.

Figure \ref{fig20025} compares the  power of the BH procedure on
maximum $p$-values, (1) above, and Procedure \ref{procfdrsym} with
$w_1 \in \{0, 0.5, 1\}$, $q_1 \in \{0.01, 0.025,0.04\}$, (2) above,
in a configuration with parameters $\sigma_1 = 0.3, \sigma_2 = 1,
f_{00} = 0.9, f_{01} =f_{10} = 0.025, f_{11} = 0.05$. The oracle
Procedure \ref{procfdr}, where the primary study is study one with
$\sigma_1 = 0.3$, is also examined. For each procedure the estimated
power and FDR  is shown as a function of the common expectation
under the alternative, $\mu = \mu_1=\mu_2$.
 Procedure \ref{procfdrsym} with $w_1=1$ is more
powerful than with $w_1=0.5$ or $w_1=0$, while the choice $w_1=0$ is
the worst in terms of power of Procedure \ref{procfdrsym}. Moreover,
Procedure \ref{procfdrsym} with $w_1\in\{0.5, 1\}$ is more powerful
than the BH procedure on maximum $p$-values. These findings were
consistent across all configurations of $f_{00}, f_{10}, f_{01},
f_{11}$ examined, when $\sigma_1 = 0.3$ and $\sigma_2=1$. Since the
oracle Procedure \ref{procfdr} and the BH procedure on maximum
$p$-values do not depend on $q_1$, their power curves are
 the same in figures (a), (b), and (c). We see that Procedure
\ref{procfdrsym} with $w_1=1$ is a close second to the oracle when
$q_1$ is 0.01 but is farther from the oracle as $q_1$ increases.
Similarly, the power of Procedure \ref{procfdrsym} with $w_1=0.5$
decreases as $q_1$ increases. However, Procedure \ref{procfdrsym}
with $w_1=0$ has largest power for $q_1=0.04$, and the least power
for $q_1=0.01$. These results are reasonable since the $p$-values of
study one tend to be much smaller than the $p$-values of study two
when the no replicability null hypotheses are false.
In Table \ref{tab-sim} we see that if the $p$-value distribution of
false no replicability null hypotheses is the same across studies,
then the optimal choice of $q_1$ is $q_1>q/2$. For example, when
$\mu = \mu_1=\mu_2 =2$ (row 2), the power is 0.65 with $q_1 =
0.005$, 0.77 with $q_1 = 0.045$, and the maximum power is 0.81 with
$q_1 = 0.035$.
\begin{table}[ht]
\begin{center}
\caption{The power of Procedure \ref{procfdr} with parameters
$(0.05c, 0.05)$ and the BH selection rule at level $0.05c$, for
different values of $\mu=\mu_1=\mu_2$,
 with $\sigma_1=\sigma_2 = 0.5$, $f_{00} =
0.9,  f_{01}=f_{10} = 0.025, f_{11}=0.05$. The optimal value of $c$
is in bold.}\label{tab-sim}
\begin{tabular}{|c|ccccccccc|}
  \hline
 & \multicolumn{9}{|c|}{$c = q_1/0.05$}\\ \hline
    $\mu$ & 0.1 & 0.2 & 0.3 & 0.4 & 0.5 & 0.6 & 0.7 & 0.8 & 0.9 \\
  \hline
1.5 & 0.143 & 0.195 & 0.224 & 0.245 & 0.257 & {\bf 0.258} & 0.248 & 0.226 & 0.181 \\
2.0 & 0.646 & 0.718 & 0.755 & 0.778 & 0.794 & 0.803 & {\bf 0.805} & 0.800 & 0.769 \\
2.5 & 0.934 & 0.955 & 0.965 & 0.971 & 0.975 & 0.977 & {\bf 0.978} & {\bf 0.978} & 0.974 \\
   \hline
\end{tabular}
\end{center}
\end{table}

Figure \ref{fig-fixedmu} compares the procedures (1) and (2) above
for the same configuration of $f_{ij}$, but for fixed $\mu = \mu_1 =
\mu_2$ and varying sample size of the two studies. The varying power
is described by the fraction $\zeta$ of sample allocated to the
first study. For the symmetric procedures, we see that for
$\zeta=0.1$ the power is the lowest, and it increases to reach its
maximum for equal allocation $\zeta=0.5$. Procedure \ref{procfdrsym}
with $w_1=0.5$  dominates the BH procedure on the maximum two study
$p$-values. For Procedure \ref{procfdrsym} with $w_1=1$, the maximum
is reached for $\zeta>0.5$.
It is the  most powerful of the three procedures examined for
$\zeta>0.6$.

In Figure \ref{fig050508} we consider the FDR level of Procedure
\ref{procfdrsym} with $w_1 \in \{0, 0.5,1 \}$, as well as of the
naive procedure in the null setting, where all no replicability null
hypotheses are true (i.e. $f_{11}=0$). The estimated FDR of (BH-$i$,
BH-$j$) procedure exceeds 0.05 in the settings where
$f_{10}=f_{01}=0.5$ and $f_{00}=0.8, f_{10}=f_{01}=0.1$. In these
settings the estimated FDR of both (BH-$1$, BH-$2$) and (BH-$2$,
BH-$1$) procedures are increasing functions of $\mu= \mu_1 = \mu_2$,
reaching one in the setting where $f_{10}=f_{01}=0.5$ (left), and
0.4 in the setting where $f_{00}=0.8, f_{10}=f_{01}=0.1$ (right).
Clearly, procedure (BH-$i$, BH-$j$) is not valid since it may be far
too liberal in terms of FDR level.

Finally, we examined how the selection rule affects the power.  In
Figure \ref{figSelectionRule} we show the  power as a function of
$\mu_1$ for Procedure \ref{procfdrsym} with parameters $w_1=0.5, q_1
= 0.025, q=0.05$,  for the following selection rules: BH at level
$0.0125$; the rule that selects the hypotheses with $k$ smallest
primary study $p$-values, where $k\in \{25,30,\ldots,100 \}$. The
remaining parameters were: $f_{00} = 0.9, f_{01} =f_{10} = 0.025,
f_{11} = 0.05, \sigma_1 =0.5, \sigma_2 = 1, \mu_2 = 3$.  For
different values of $\mu_1$ the optimal $k$ is different, and using
the BH procedure for selection is optimal for the entire range of
$\mu_1$.

\begin{figure}[!tpb]\
\centering
\subfloat[$q_1$=0.01]{\label{fig20025:3}\includegraphics[width=0.33\textwidth,
height=5.5cm]{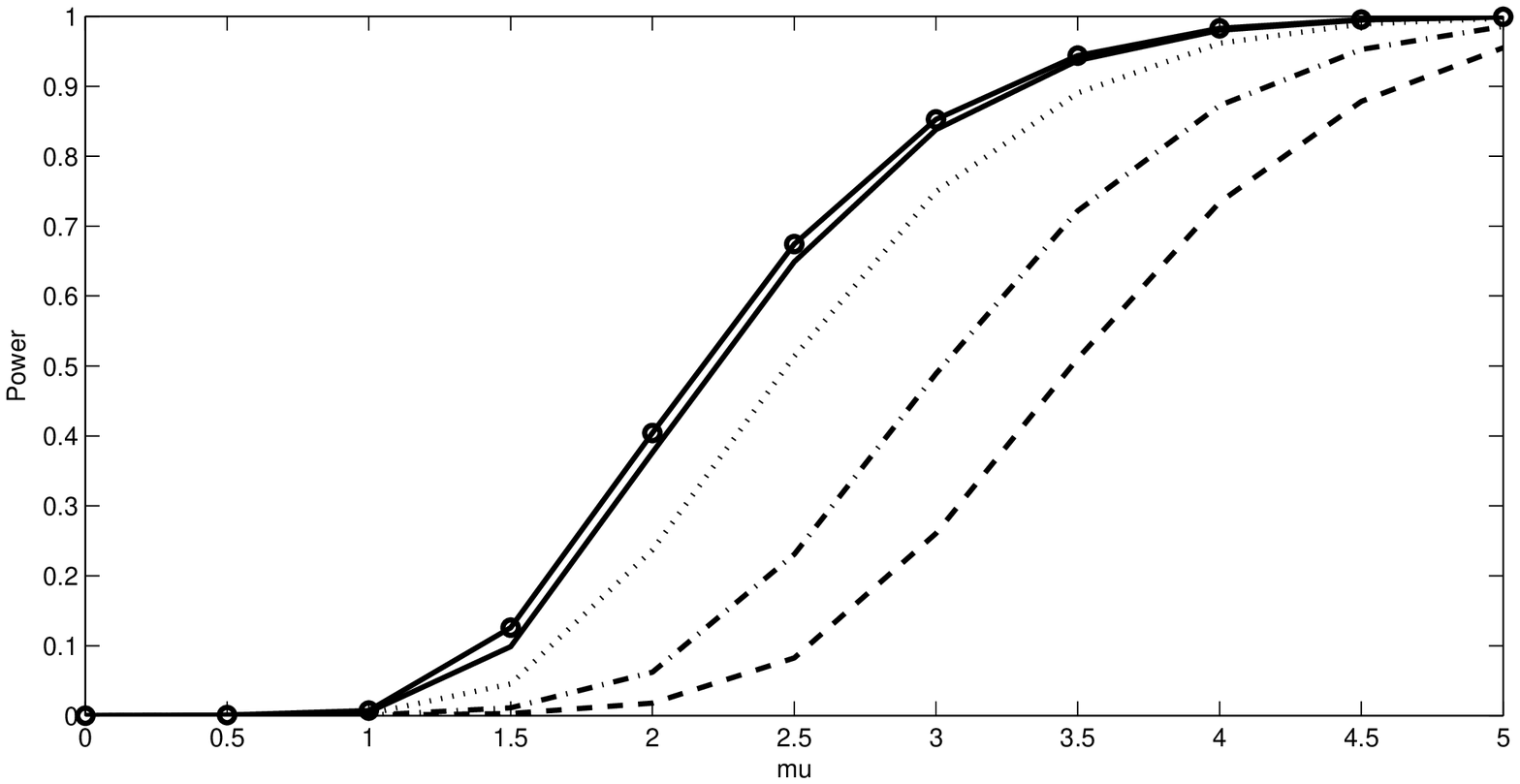}}
\subfloat[$q_1$=0.025]{\label{fig20025:1}\includegraphics[width=0.33\textwidth,
height=5.5cm]{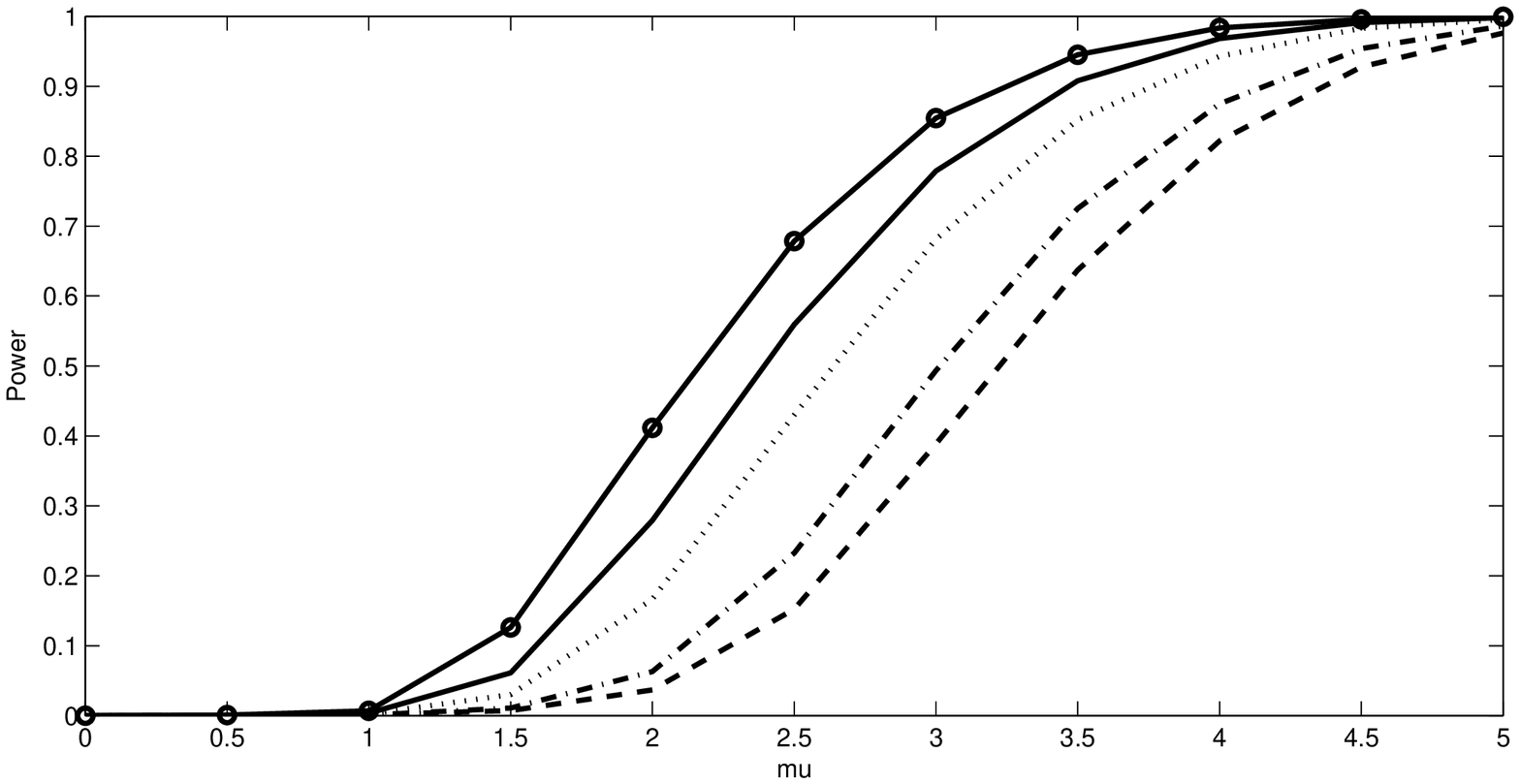}}
\subfloat[$q_1$=0.04]{\label{fig20025:2}\includegraphics[width=0.33\textwidth,
height=5.5cm]{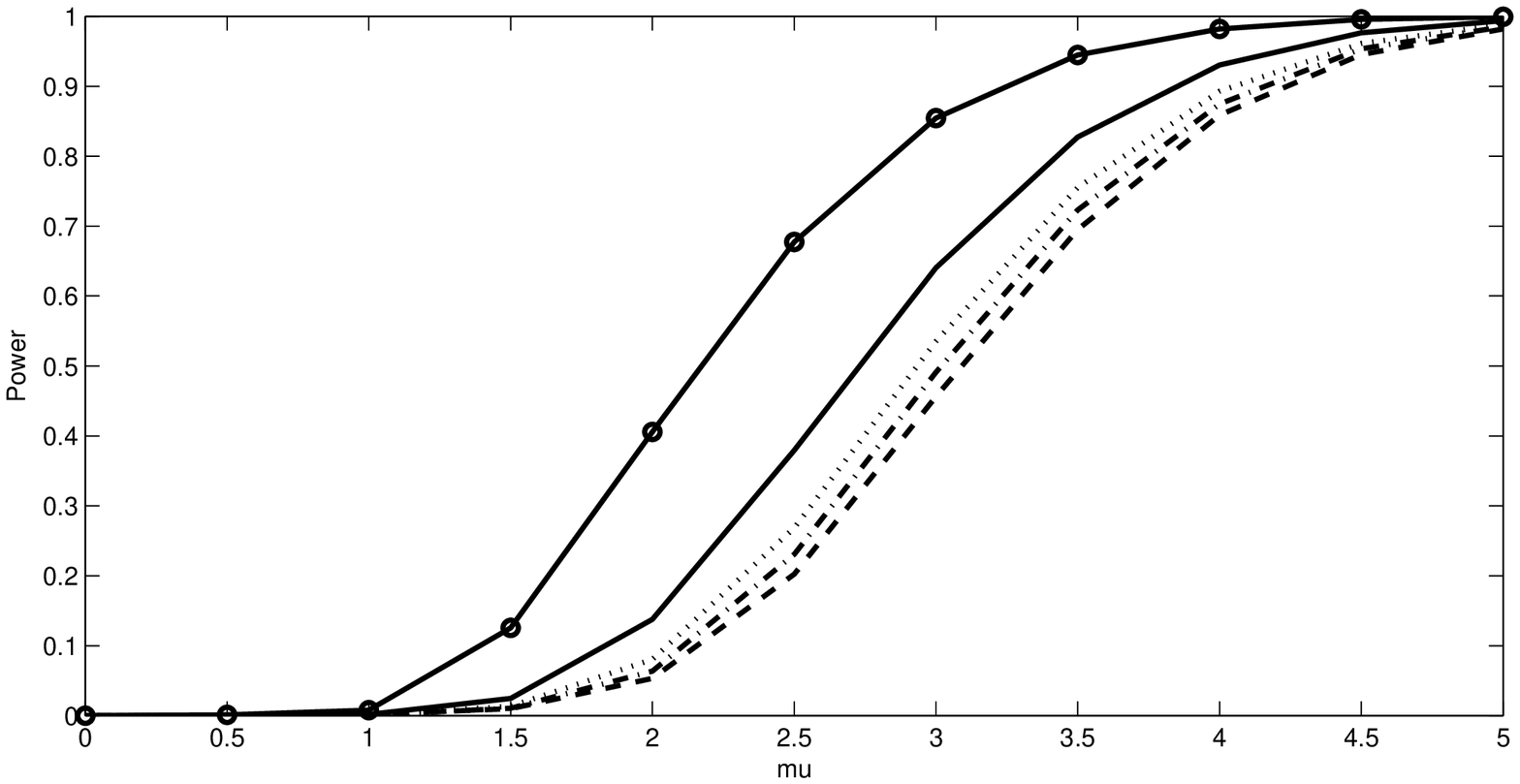}}
 \caption{Power as a function of  $\mu=\mu_1=\mu_2$, for $q_1$ of (a) 0.01, (b) 0.025, and (c) 0.04, using the following procedures: the oracle Procedure \ref{procfdr} (solid with
circles); the BH procedure at level 0.05 applied on maximum $p$-values
(dash-dotted);
 Procedure
\ref{procfdrsym} at level 0.05 with $w_1=0$ (dashed),  $w_1=0.5$
(dotted), and $w_1=1$ (solid), where the selection rule in steps 1
and 2 is the BH procedure at levels $w_1q_1$ and $(1-w_1)q_1$,
respectively.  The remaining parameters were $f_{00} = 0.9, f_{01} =
0.025, f_{10}=0.025, f_{11} = 0.05$, $\mu_1 = \mu_2 = \mu$,
$\sigma_1= 0.3$ and $\sigma_2=1$. }\label{fig20025}
\end{figure}
\begin{figure}[!tpb]\
\centering
\subfloat[$\mu=2$]{\label{fig20025:4}\includegraphics[width=0.48\textwidth,
height=6cm]{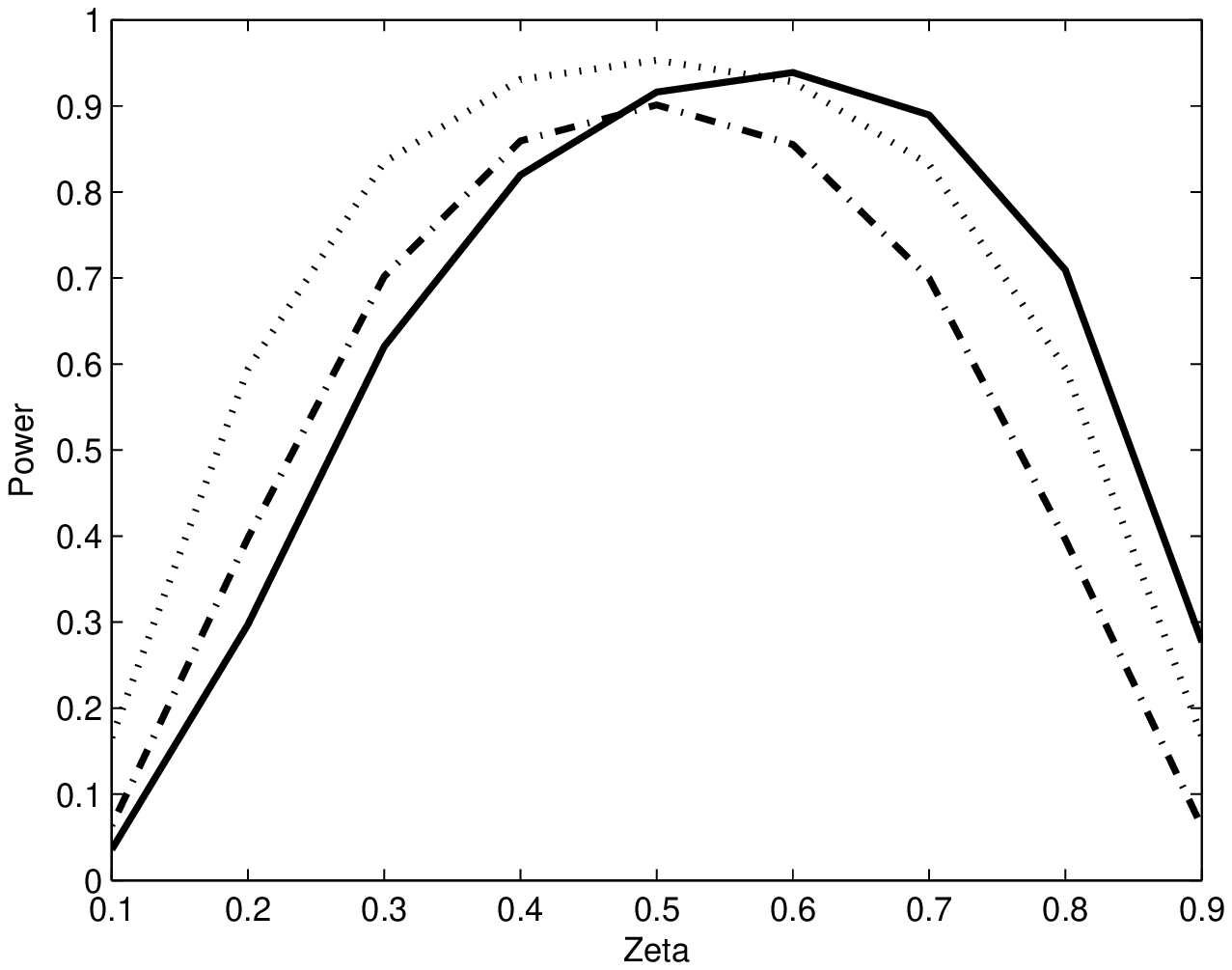}}\hspace{2pt}
\subfloat[$\mu=3$]{\label{fig20025:5}\includegraphics[width=0.48\textwidth,
height=6cm]{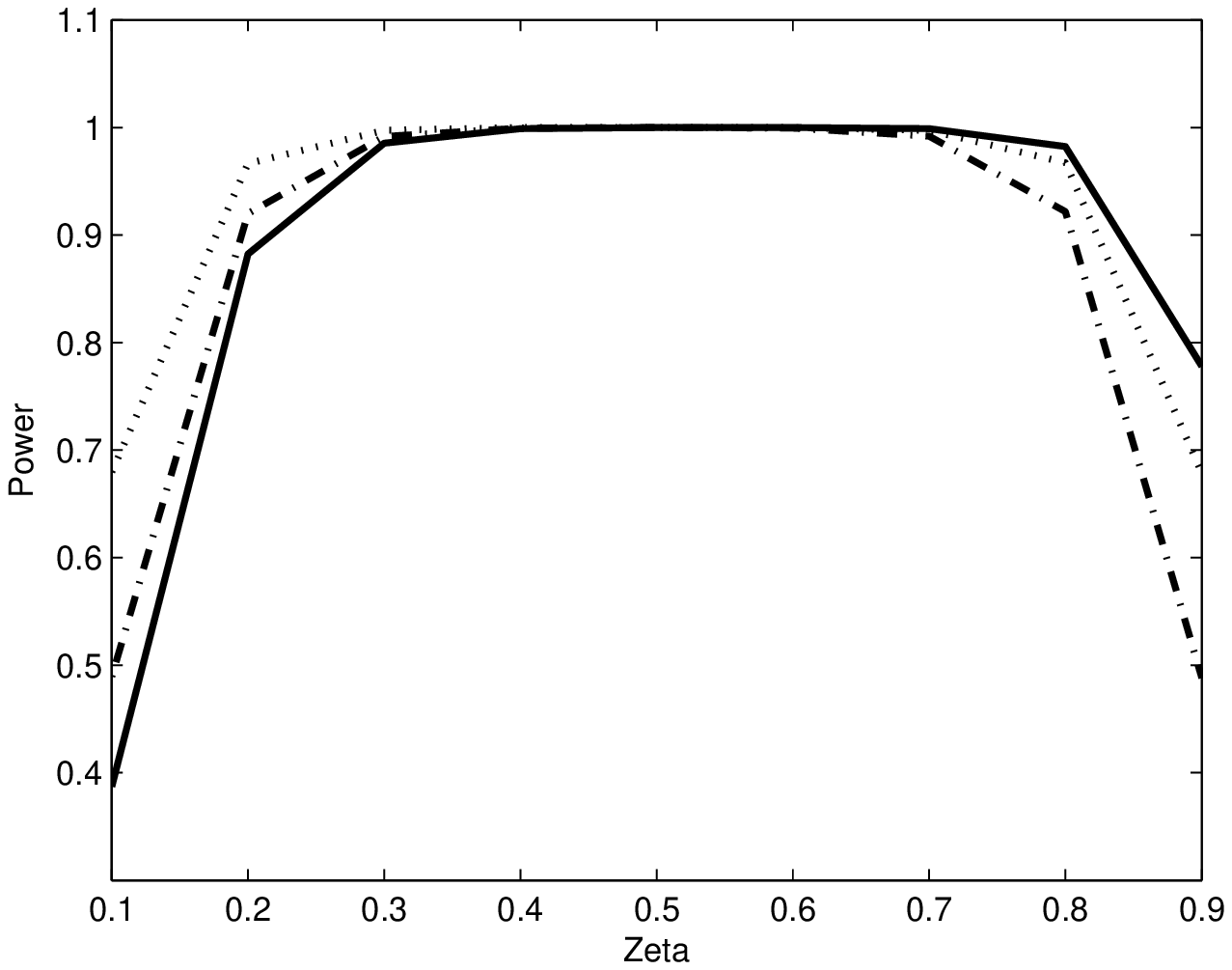}} \caption{Power as a function of fraction
$\zeta$ of sample size allocated to the primary study, for (a)
$\mu_1 = \mu_2 = 2$, and (b) $\mu_1=\mu_2=3$, for
 Procedure \ref{procfdrsym} with $w_1 = 1$ (solid), with $w_1=0.5$
(dotted), and for the BH procedure on the maximum of two studies
$p$-values (dash-dotted) at level $q=0.05$. The remaining parameters
were $f_{00} = 0.9, f_{01} = 0.025, f_{10}=0.025, f_{11} = 0.05$,
sample size $N=1000$, standard deviation $\sigma=10$.
}\label{fig-fixedmu}
\end{figure}

\begin{figure}[!tpb]
\centering
\includegraphics[width=0.48\textwidth, height=6cm]{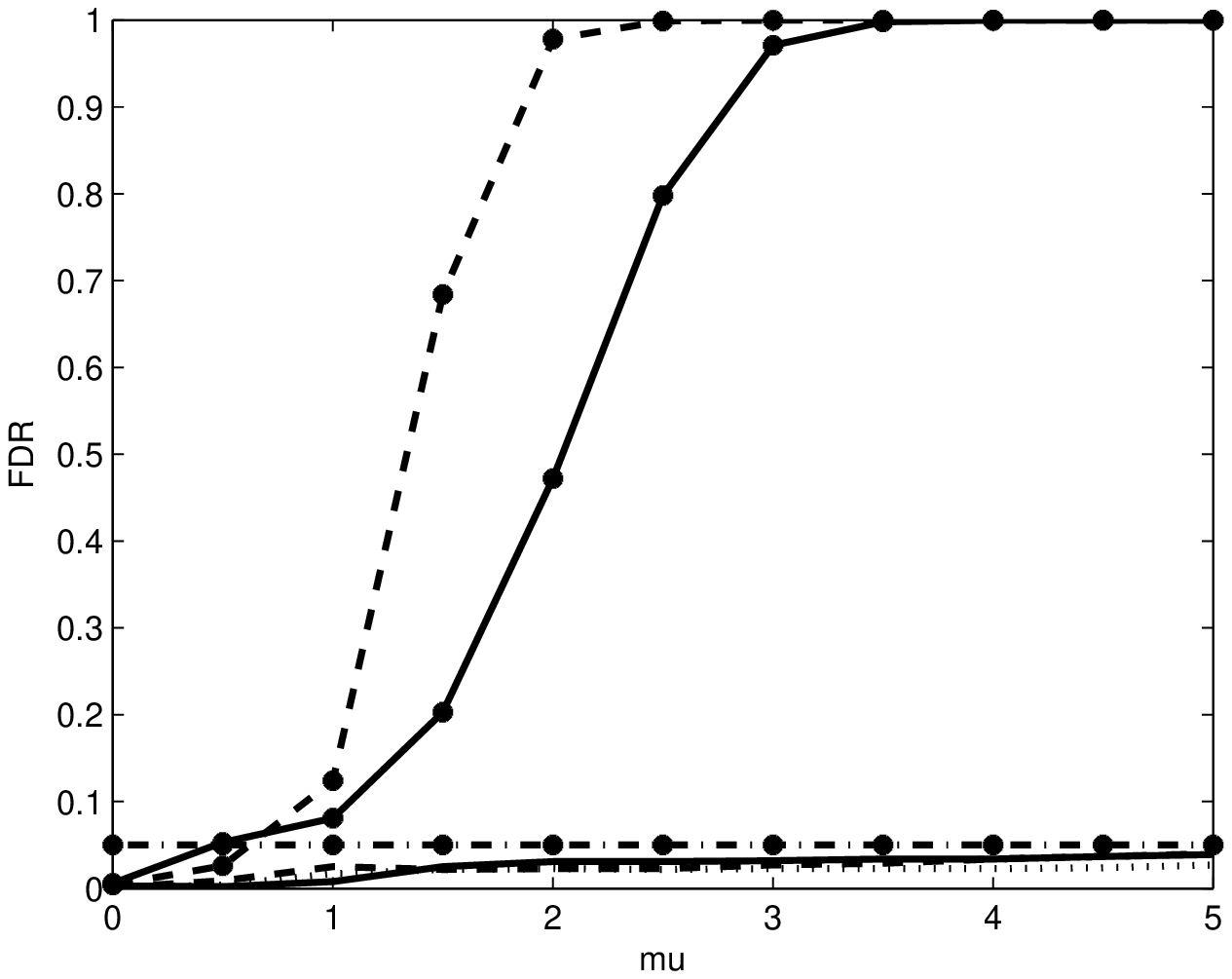}
\includegraphics[width=0.48\textwidth, height=6cm]{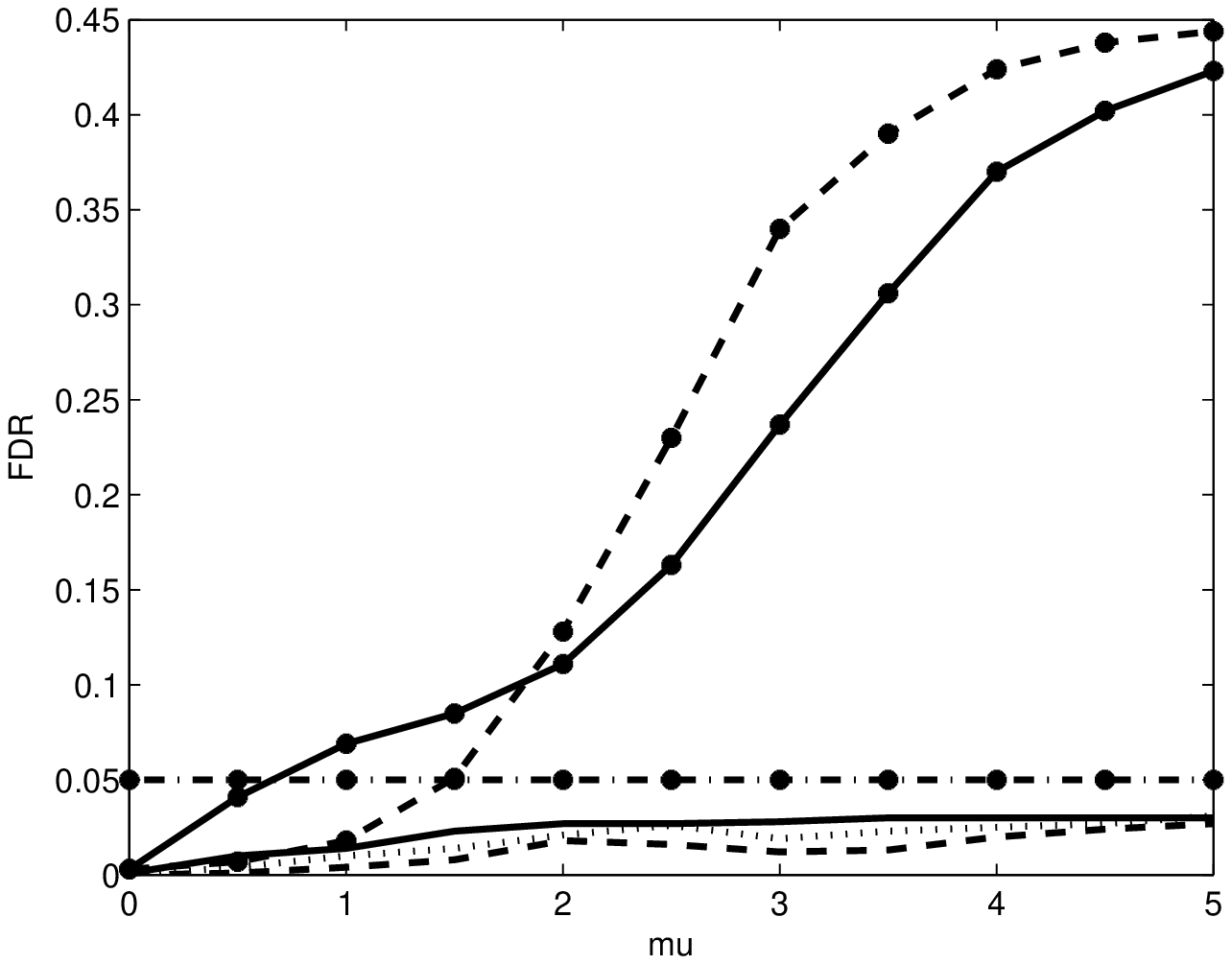}
\caption{FDR   versus $\mu=\mu_1=\mu_2$ for $f_{01}=f_{10}=0.5$
(left) and $f_{00}=0.8, f_{01}=f_{10}=0.1$ (right), for the
following procedures at level $q=0.05$: BH-1, BH-2 (solid with
circles); BH-2, BH-1 (dashed with circles);
 Procedure \ref{procfdrsym} with $q_1=0.025$ and $w_{1}$
of 1 (solid), 0.5 (dotted), or 0 (dashed). The standard deviations
were $\sigma_1 = 0.3$ and $\sigma_2 = 1$.  }\label{fig050508}
\end{figure}

\begin{figure}[!tpb]
\centering
\includegraphics[width=1\textwidth,
height=6cm]{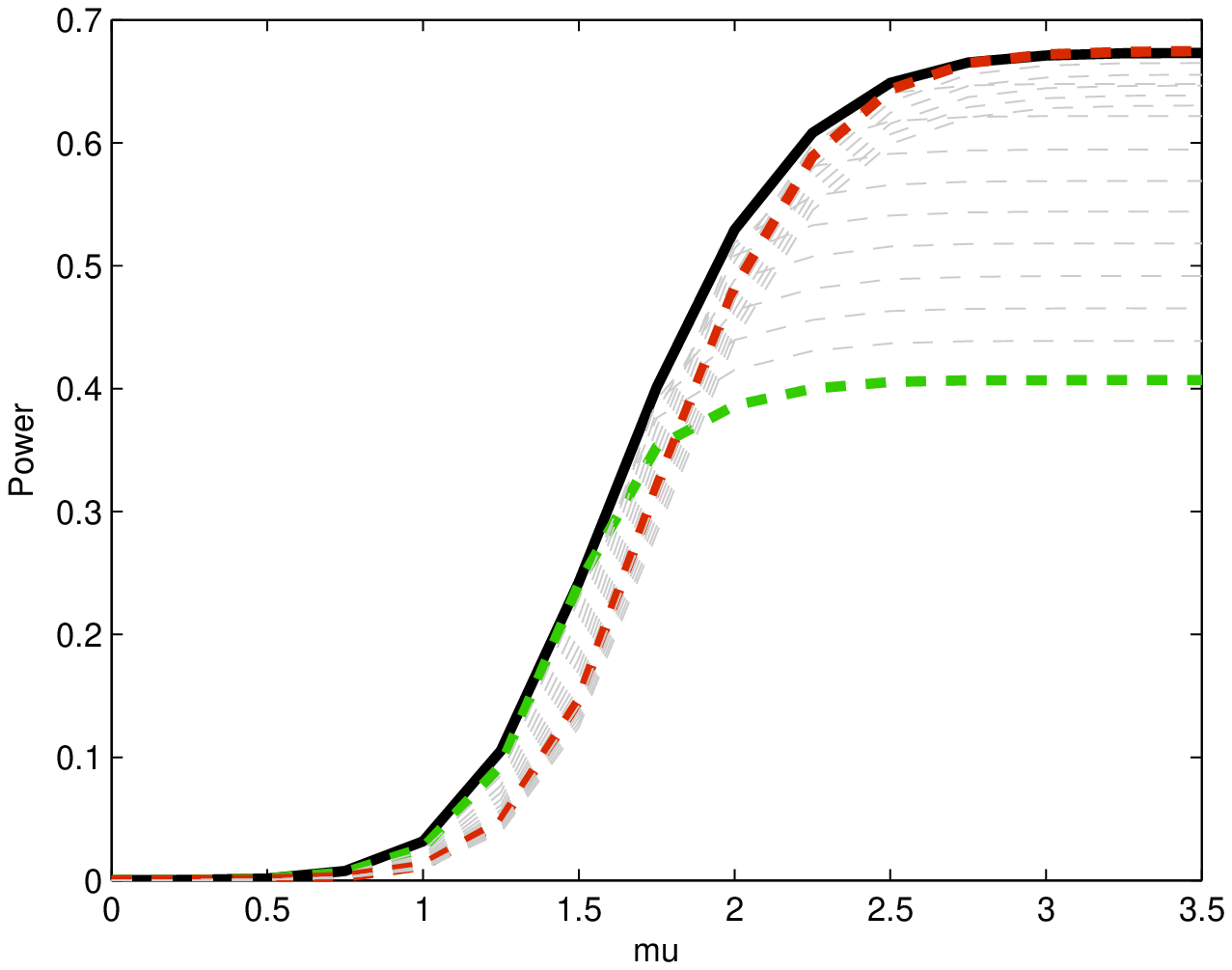} \caption{Power as a function of
$\mu_1$ for Procedure \ref{procfdrsym} with parameters $w_1=0.5, q_1
= 0.025, q=0.05$ for the following selection rules: BH at level
$0.0125$ (solid black curve); selection of the hypotheses with $k$
smallest primary study $p$-values, where $k=25$  (dashed green
curve), $k=75$ (dashed red curve), $k\in \{30,35,\ldots,100 \}$
(dashed grey curves). The remaining parameters were: $f_{00} = 0.9,
f_{01} =f_{10} = 0.025, f_{11} = 0.05, \sigma_1 =0.5, \sigma_2 = 1,
\mu_2 = 3$. }\label{figSelectionRule}
\end{figure}

\section{Discussion}\label{sec-discussion}

In many research areas first a primary study is analyzed, then a
follow-up study is analyzed with the goal to corroborate the
findings, or at least a subset of the findings, of the primary
study. We suggested novel testing procedures for corroborating the
evidence from a primary study in a follow-up study. We demonstrated
their usefulness on a GWAS application.
In the setting where there is no division of roles to a primary and
a follow-up study, the simulations suggested that our novel
Procedure \ref{procfdrsym} with $w_1 = 0.5$ is more powerful than
the BH procedure on maximum $p$-values.

 We proved that
Procedures \ref{procfdr} and \ref{procfdrsym} control the FDR when
the $p$-values are  independent within each study and the selection
rule is valid. However, the assumption of independence may not be realistic. Extensive simulations demonstrated that the BH procedure
controls the FDR for many types of dependence encountered in
practice \citep{yekutieli08b}. We conjecture that this robustness
property carries over to Procedures \ref{procfdr} and \ref{procfdrsym}, since Procedure \ref{procfdr} can be viewed as two-dimensional variant of the BH procedure. For simulated GWAS examples
the average false discovery proportion was below the nominal FDR
level, suggesting that the procedures are indeed valid for the type of
dependency that occurs in GWAS. More conservative variants of Procedure \ref{procfdr} were given in Theorem \ref{genthm} and in Section 3 of the Supplementary Material, that guarantee that the FDR is controlled for arbitrary dependence among the primary study $p$-values, and dependence of type PRDS  or arbitrary dependence  among the follow-up study $p$-values.
We demonstrated the usefulness of the  variants suggested in Theorem \ref{genthm} in Example 2 of Section \ref{sec-example}. Out of the 36 replicability discoveries with Procedure \ref{procfdr}, 23 discoveries passed the more stringent requirement that came with the added guarantee that the FDR is controlled for arbitrary dependence among the 635,547 $p$-values in the primary study.

Replicability analysis, as suggested in this paper, requires that
the investigators make several key design choices in addition to the
error level $q$: the selection rule, $q_1$, and $w_1$ if two studies
are available without division into primary and follow-up. The power
of the procedure for replicability analysis varies with these
choices. From our investigations,  it appears reasonable  in
Procedure \ref{procfdr} to select hypotheses by  BH at level $q_1$,
and to set $w_1 = 0.5$ in Procedure \ref{procfdrsym} if the
$p$-value distributions for false null hypotheses
may be assumed to be similar in both studies. We gave some
guidelines for choosing $q_1$ in specific settings, and more general
guidelines are a topic for future research.

In replicability analysis, the primary study guides the design of
the follow-up study by supplying the subset of hypotheses to be
followed-up. Since the primary study also yields information on
effect sizes, if it is assumed that the effect sizes are the
same across studies, then this information may be used in order to
determine the sample size needed to obtain good power in the
follow-up study. However, this assumption may be unrealistic in
applications such as GWAS, where the LD pattern varies across
populations.

Finally, we saw that although Procedure \ref{procfdrsym} with
parameters $(w_1, q_1,q)$ is far less conservative than the BH
procedure at level $q$ on maximum $p$-values, it is still
conservative. We proved that Procedure \ref{procfdrsym} with less
conservative parameters $q_1'>q_1$ and $q'>q$, still controls the
FDR at level $q$ on the family of no replicability null hypotheses,
if $|I_{00}|$ and $|I_{01}|$ were known. In future research we will
consider estimates of these unknown parameters.

\appendix

\section{Proof of Theorem \ref{thmfdr}}\label{proof-thmfdr}
 Let $q_2=q-q_1$, and for each
$j\in\{1,\ldots,m\},$ let  $P_1^{(j)}$ and $P_2^{(j)}$ denote the
vectors $P_1=(P_{11},  \ldots, P_{1m})$ and $P_2=(P_{21}, \ldots,
P_{2m})$ with, respectively, $P_{1j}$ and $P_{2j}$ excluded. For
$j\in\{1,\ldots,m\}$ arbitrary fixed, let
$\mathcal{R}_1^{(j)}(P_1^{(j)})\subseteq\{1,\ldots,j-1,
j+1,\ldots,m\}$ be the subset of indices selected along with index
$j$.  Note that since the selection rule is valid, this subset is
well defined. For any $j\in\{1,\ldots,m\}$ and given $P_1^{(j)}$, for $i\in{1,\ldots,j-1,j+1,\ldots,m}$ we define 
\begin{align*}
T_i = \left\{
\begin{array}{cl}
\max\left(\frac{mP_{1i}}{q_1},\,\frac{(|\mathcal{R}_1^{(j)}(P_1^{(j)})|+1)P_{2i}}{q_2}\right) & \text{if } i\in \mathcal{R}_1^{(j)}(P_1^{(j)}),\\
 \infty& \text{otherwise. } \\
\end{array} \right.
\end{align*}
Let $T_{(1)}\leq\ldots\leq T_{(m-1)}$ be the sorted $T$-values, and
$T_{(0)}=0$. For $r=1,\ldots,m$, we define $C_r^{(j)}$ as the event in
which if $H_{NR,j}$ is rejected by Procedure \ref{procfdr}, $r$
hypotheses are rejected including $H_{NR,j}$:
\begin{align*}
C_r^{(j)}=\{(P_1^{(j)}, P_2^{(j)}):\,T_{(r-1)}\leq r, T_{(r)}>r+1,
T_{(r+1)}>r+2,\ldots, T_{(m-1)}>m\}.
\end{align*}

Note that given $P_1$, for $r>|\mathcal{R}_1|$, $C_r^{(j)} = \emptyset$, since
exactly $|\mathcal{R}_1|-1$ $T_i$'s are finite.

 Obviously, $C_r^{(j)}$ and
$C_{r'}^{(j)}$ are disjoint events for any $r\neq r',$ and
$\cup_{r=1}^m C_r^{(j)}$ is the entire space of $(P_1^{(j)},
P_2^{(j)})$. Let $I_0 = I_{01}\cup I_{00}$,  $R_j$ be the indicator
of whether $H_{NR,j}$ was rejected for $j=1,\ldots,m$, and $R =
\sum_{j=1}^m R_j$. The FDR for the family of no replicability null
hypotheses is
\begin{align}
FDR=E\left(\frac{\sum_{j\in I_{0}}R_j}{\max(R,
1)}\right)+E\left(\frac{\sum_{j\in I_{10}}R_j}{\max(R,
1)}\right)\label{mainsum}
\end{align}

First, we find an upper bound for the first term of the sum in
(\ref{mainsum}).
\begin{align}
&E\left(\frac{\sum_{j\in I_{0}}R_j}{\max(R,
1)}\right)
=\sum_{j\in I_0}\sum_{r=1}^m\frac{1}{r}\textmd{Pr}\left(j\in
\mathcal{R}_1, P_{1j}\leq\frac{rq_1}{m},
P_{2j}\leq\frac{r(q-q_1)}{|\mathcal{R}_1|},
C_{r}^{(j)}\right)\notag\\&\leq\sum_{j\in
I_0}\sum_{r=1}^m\frac{1}{r}\textmd{Pr}\left(P_{1j}
\leq\frac{rq_1}{m}, C_{r}^{(j)}\right)=\sum_{j\in
I_0}\sum_{r=1}^m\frac{1}{r}\textmd{Pr}\left(P_{1j}
\leq\frac{rq_1}{m}\right)
\textmd{Pr}\left(C_{r}^{(j)}\right)\label{firstdep}\\&\leq
\frac{q_1}{m}\sum_{j\in
I_0}\sum_{r=1}^m\textmd{Pr}\left(C_{r}^{(j)}\right)
=\frac{|I_0|}{m}q_1\label{fullpart2}
\end{align}
The equality in (\ref{firstdep}) follows from the independence of
the $p$-values. The inequality in (\ref{fullpart2}) follows from the
fact that for each $j\in I_0$, $\textmd{Pr}(P_{1j}\leq x)\leq x$ for
all
$x\in[0, 1].$ 
Finally, the equality in (\ref{fullpart2}) follows from the fact
that $\cup_{r=1}^m C_r^{(j)}$ is the entire sample space of
$(P_1^{(j)}, P_2^{(j)}),$ represented as a union of disjoint events.

Next, we find an upper bound for the second term of the sum in
(\ref{mainsum}).  Let $\mathcal{R}_1(p_1)$ be the set of selected
indices using $P_1=p_1.$ Then $E\left(\sum_{j\in I_{10}}R_j/\max(R,
1)\,|\,P_1=p_1\right)$ equals to:
\begin{align}
&\sum_{j\in I_{10}\cap
\mathcal{R}_1(p_1)}\sum_{r=1}^{|\mathcal{R}_1(p_1)|}\frac{1}{r}\,\textbf{I}\left[p_{1j}\leq
\frac{rq_1}{m}\right]\textmd{Pr}\left(
P_{2j}\leq\frac{rq_2}{|\mathcal{R}_1(p_1)|},
C_{r}^{(j)}\,|\,P_1=p_1\right)\notag
\\&\leq\sum_{j\in I_{10}\cap
\mathcal{R}_1(p_1)}\sum_{r=1}^{|\mathcal{R}_1(p_1)|}\frac{1}{r}\textmd{Pr}\left(
P_{2j}\leq\frac{rq_2}{|\mathcal{R}_1(p_1)|},
C_{r}^{(j)}\,|\,P_1=p_1\right)\label{secdep}
\\&=\sum_{j\in I_{10}\cap\mathcal{R}_1(p_1)}\sum_{r=1}^{|\mathcal{R}_1(p_1)|}\frac{1}{r}
\textmd{Pr}\left(P_{2j}\leq\frac{rq_2}{|\mathcal{R}_1(p_1)|}\,|\,P_1=p_1\right)\textmd{Pr}\left(C_{r}^{(j)}\,|\,P_1=p_1\right)\label{indepsec}
\\&\leq \frac{q_2}{|\mathcal{R}_1(p_1)|}\sum_{j\in
I_{10}\cap\mathcal{R}_1(p_1)}\sum_{r=1}^{|\mathcal{R}_1(p_1)|}
\textmd{Pr}\left(C_{r}^{(j)}\,|\,P_1=p_1\right) =
\frac{q_2}{|\mathcal{R}_1(p_1)|}|I_{10}\cap\mathcal{R}_1(p_1)|.\label{fullsec}
\end{align}

The equality in (\ref{indepsec}) follows from the fact that $P_{2j}, P_2^{(j)},P_1$ are independent, since then
 $C_r^{(j)}$ and the event $\left\{P_{2j}\leq
rq_2/|\mathcal{R}_1(p_1)| \right\}$ are conditionally
independent. The inequality in
(\ref{fullsec}) follows from the independence of the $p$-values
across the studies and the fact that for each $j\in I_{10},$
$\textmd{Pr}(P_{2j}\leq x)\leq x$ for all $x\in[0, 1].$ The equality
in (\ref{fullsec}) follows from the fact that
$\cup_{r=1}^{|\mathcal{R}_1(p_1)|}C_{r}^{(j)}$ is a union of
disjoint events, and
$\textmd{Pr}\left(\cup_{r=1}^{|\mathcal{R}_1(p_1)|}C_{r}^{(j)}\,|\,P_1=p_1\right)=1.$

It follows from (\ref{fullsec}) that $E\left(\sum_{j\in
I_{10}}R_j/\max(R, 1)\right)\leq q_2$.  Using this fact and the
bound (\ref{fullpart2}) for the first term of (\ref{mainsum}), we
obtain:
\begin{align*}
FDR\leq\frac{|I_0|}{m}q_1+(q-q_1)\leq q_1+(q-q_1)=q.
\end{align*}
\section{Proof for FDR control of the oracle Procedure \ref{thmfdr}}\label{proof-thmfdr-oracle}
Let us now prove that under the assumption that the $p$-values are
independent, Procedure \ref{procfdr} at levels $\left(q',
2q'\right)$ controls the FDR  at level
$|I_{00}|\left(q'\right)^2/m+\left(|I_{01}|/m+1\right)q'$. Returning
to the proof of Theorem \ref{thmfdr}, note that (\ref{mainsum}) can
be rewritten as follows.
\begin{align}
FDR=E\left(\frac{\sum_{j\in I_{00}}R_j}{\max(R,
1)}\right)+E\left(\frac{\sum_{j\in I_{01}}R_j}{\max(R,
1)}\right)+E\left(\frac{\sum_{j\in I_{10}}R_j}{\max(R,
1)}\right).\label{mainsum2}
\end{align}
We will now give an upper bound for each term of the sum in
(\ref{mainsum2}). First,
\begin{align}
&E\left(\frac{\sum_{j\in I_{00}}R_j}{\max(R, 1)}\right)=\sum_{j\in
I_{00}}\sum_{r=1}^m\frac{1}{r}\textmd{Pr}\left(j\in \mathcal{R}_1,
P_{1j}\leq\frac{rq'}{m}, P_{2j}\leq\frac{rq'}{|\mathcal{R}_1|},
C_{r}^{(j)}\right)\notag\\&\leq\sum_{j\in
I_{00}}\sum_{r=1}^m\frac{1}{r}\textmd{Pr}\left(P_{1j}
\leq\frac{rq'}{m}, P_{2j}\leq q', C_{r}^{(j)}\right)
\leq\frac{(q')^2}{m}\sum_{j\in
I_{00}}\sum_{r=1}^m\textmd{Pr}\left(C_{r}^{(j)}\right)
=\frac{|I_{00}|}{m}(q')^2\label{full3}
\end{align}
The second inequality in (\ref{full3}) follows from the facts that
for each $j\in I_{00}$, $P_{1j}$ and $P_{2j}$ are independent, and
$\textmd{Pr}(P_{ij}\leq x)\leq x$ for all $x\in[0,1]$ and $i=1,2.$
The equality in (\ref{full3}) follows from the explanation of the equality in (\ref{fullpart2}).

Second, replacing $I_{0}$ by $I_{01}$ and $|I_0|$ by $|I_{01}|$ in
the arguments that led to (\ref{fullpart2}), we obtain:
\begin{align}
E\left(\frac{\sum_{j\in I_{01}}R_j}{\max(R, 1)}\right)\leq
\frac{|I_{01}|}{m}q'.\label{secterm}
\end{align}

Finally, using (\ref{fullsec}) in the proof of Theorem \ref{thmfdr}
we obtain that the third term of the sum in (\ref{mainsum2}) is
bounded by $q_2 =2q'-q'=q'$. Using this upper bound, together with
the bounds for the first two terms derived in (\ref{full3}) and
(\ref{secterm}), we obtain:
\begin{align*}
FDR\leq\frac{|I_{00}|}{m}(q')^2+\frac{|I_{01}|}{m}q'+q'=\frac{|I_{00}|}{m}(q')^2+\left(\frac{|I_{01}|}{m}+1\right)q'.
\end{align*}
It follows that if $|I_{00}|$ and $|I_{01}|$ were known, one could
guarantee FDR control at level $q$ on the family of no replicability
null hypotheses by applying Procedure \ref{procfdr} at levels $(q',
2q')$, where $q'$ is the solution to
$|I_{00}|\left(q'\right)^2/m+\left(|I_{01}|/m+1\right)q'=q.$

\section{Proof of Theorem \ref{thmprocfdrsym}}\label{proof-thmprocfdrsym}
Let $V_{12}=\sum_{j\in I_{00}\cup I_{01}\cup
I_{10}}\textbf{I}\left[j\in\mathcal R_{12, w_1q}\right]$ and
$R_{12}=|\mathcal R_{12, w_1q}|$ denote the number of erroneously
rejected and the total number of rejected no replicability null
hypotheses by Procedure \ref{procfdr} at level $w_1q$ with study one
as the primary study and study two as the follow-up study.
Similarly, let $V_{21}=\sum_{j\in I_{00}\cup I_{01}\cup
I_{10}}\textbf{I}\left[j\in \mathcal R_{21, (1-w_1)q}\right]$ and
$R_{21}=|\mathcal R_{21, (1-w_1)q}|$ denote the number of
erroneously rejected and the total number of rejected no
replicability null hypotheses by Procedure \ref{procfdr} at level
$(1-w_1)q$ with study two as the primary study and study one as the
follow-up study. Define $\mathcal R_{s}=\mathcal R_{12,
w_1q}\cup\mathcal R_{21, (1-w_1)q}$, the indices of the no
replicability null hypotheses rejected by Procedure
\ref{procfdrsym}. Let $V_{s}=\sum_{j\in I_{00}\cup I_{01}\cup
I_{10}}\textbf{I}\left[j\in\mathcal R_{s}\right]$ and
$R_{s}=|\mathcal R_{s}|,$ the number of erroneously rejected and the
total number of rejected
no replicability null hypotheses by Procedure \ref{procfdrsym}. 

Note that $V_s\leq V_{12}+V_{21}$. Therefore,
\begin{align}
FDR=E\left(\frac{V_s}{\max(R_s, 1)}\right)\leq
E\left(\frac{V_{12}}{\max(R_s,
1)}\right)+E\left(\frac{V_{21}}{\max(R_s, 1)}\right).\label{lastin}
\end{align}
In addition, note that
 $\max (R_s, 1)\geq
\max(R_{12}, 1)$ and $\max(R_s, 1)\geq \max(R_{21}, 1).$ Using these
facts and (\ref{lastin}) we obtain
\begin{align*}
FDR=E\left(\frac{V_s}{\max(R_s, 1)}\right)\leq
E\left(\frac{V_{12}}{\max(R_{12},
1)}\right)+E\left(\frac{V_{21}}{\max(R_{21}, 1)}\right)\leq
w_1q+(1-w_1)q=q,
\end{align*}
where the last inequality follows from Theorem \ref{thmfdr}.

\section{Table of results for GWAS of Crohn's disease }\label{sec-CrohnTable}

\begin{table}
\caption{Replicability analysis for Example 2 in Section \ref{sec-example}: GWAS of Crohn’s disease. The number of SNPs in the primary study was
635,547, and 126 SNPs were followed-up. The 36 discoveries by Procedure \ref{procfdr} with parameters $(q_1,q) = (0.04,0.05)$ are listed according to the adjusted $p$-values.
The primary and follow-up studies $p$-values are given in columns 4 and 5;  the adjusted $p$-values for $c=0.8$ are given in column 6 for Procedure \ref{procfdr}, and in column 7 for the modification of item 1 in Theorem 3.3. }\label{tab-crohn}
\centering
\begin{tabular}{rllllll}
  \hline
Index & Chromosome & Position & $p_1$ & $p_2$ & $p^{REPadj}_{FDR}$ &$\tilde{p}^{REPadj}_{FDR}$   \\
  \hline
1 & 1 & 67417979 & 3.19e-34 & 1.5e-36 & 2.53e-28 & 3.53e-27 \\
  2 & 1 & 67414547 & 5.05e-36 & 3.1e-29 & 9.69e-27 & 9.69e-27 \\
  3 & 1 & 67387537 & 1.35e-24 & 5.62e-17 & 1.17e-14 & 1.17e-14 \\
  4 & 2 & 233962410 & 5.66e-21 & 7.67e-14 & 1.2e-11 & 1.2e-11 \\
  5 & 10 & 64108492 & 9.51e-12 & 1.61e-10 & 1.51e-06 & 1.5e-05 \\
  6 & 5 & 40428485 & 2.51e-22 & 2.79e-08 & 2.84e-06 & 3.31e-06 \\
  7 & 5 & 40437266 & 2.26e-22 & 3.18e-08 & 2.84e-06 & 3.31e-06 \\
  8 & 10 & 101281583 & 8.53e-11 & 1.69e-07 & 1.32e-05 & 7.74e-05 \\
  9 & 18 & 12769947 & 5.95e-12 & 2.41e-07 & 1.61e-05 & 1.88e-05 \\
  10 & 5 & 150239060 & 3.18e-11 & 2.57e-07 & 1.61e-05 & 3.91e-05 \\
  11 & 10 & 101282445 & 9.09e-11 & 3.1e-07 & 1.76e-05 & 7.74e-05 \\
  12 & 5 & 150203580 & 4.09e-11 & 7.47e-07 & 3.89e-05 & 4.67e-05 \\
  13 & 18 & 12799340 & 3.27e-11 & 1.23e-06 & 5.91e-05 & 6.99e-05 \\
  14 & 5 & 131798704 & 2.29e-09 & 3.52e-11 & 0.00013 & 0.00169 \\
  15 & 5 & 158747111 & 4.4e-09 & 3.66e-06 & 0.000233 & 0.00305 \\
  16 & 2 & 233965368 & 1.28e-21 & 3.66e-05 & 0.00143 & 0.00163 \\
  17 & 13 & 43355925 & 8.04e-08 & 1.33e-07 & 0.00376 & 0.0469 \\
  18 & 12 & 39104262 & 8.95e-08 & 6.55e-05 & 0.00395 & 0.0496 \\
  19 & 3 & 49676987 & 9.47e-08 & 2.24e-06 & 0.00396 & 0.0499 \\
  20 & 3 & 49696536 & 1.08e-07 & 5.64e-07 & 0.00429 & 0.0544 \\
  21 & 12 & 38888207 & 6.64e-08 & 0.000165 & 0.00491 & 0.0433 \\
  22 & 6 & 167408399 & 1.65e-07 & 3.26e-07 & 0.00596 & 0.0731 \\
  23 & 9 & 114645994 & 1.96e-07 & 6.58e-05 & 0.00677 & 0.0768 \\
  24 & 6 & 20836710 & 1.26e-07 & 0.000278 & 0.00724 & 0.0607 \\
  25 & 1 & 169593891 & 2.01e-07 & 0.000321 & 0.00802 & 0.0768 \\
  26 & 1 & 197667523 & 3.41e-07 & 2.34e-06 & 0.01 & 0.111 \\
  27 & 9 & 4971602 & 3.4e-07 & 0.00043 & 0.01 & 0.111 \\
  28 & 1 & 157665119 & 1.75e-07 & 0.000481 & 0.0107 & 0.0745 \\
  29 & 11 & 75978964 & 7.16e-08 & 0.000732 & 0.0158 & 0.044 \\
  30 & 20 & 61798026 & 7.6e-07 & 0.000138 & 0.0201 & 0.234 \\
  31 & 6 & 167405736 & 1.65e-07 & 0.00121 & 0.0241 & 0.0731 \\
  32 & 1 & 197691964 & 9.69e-07 & 1e-04 & 0.0241 & 0.29 \\
  33 & 17 & 35294289 & 1.06e-06 & 0.000292 & 0.0255 & 0.308 \\
  34 & 8 & 126603853 & 1.9e-06 & 0.000182 & 0.0431 & 0.457 \\
  35 & 6 & 106541962 & 1.85e-06 & 7.7e-06 & 0.0431 & 0.457 \\
  36 & 9 & 4978761 & 1.96e-06 & 0.00162 & 0.0433 & 0.462 \\
   \hline
\end{tabular}
\end{table}
\newpage
\title{ Supplementary Material for Discovering associations that replicate from a primary study of high dimension to a follow-up study }

\maketitle

\begin{center}

Marina Bogomolov \\
\emph{Faculty of Industrial  Engineering and Management, Technion --
Israel Institute of Technology, Haifa, Israel. E-mail:
marinabo@tx.technion.ac.il }\\
Ruth Heller \\
\emph{Department of Statistics and Operations Research, Tel-Aviv
university, Tel-Aviv, Israel. E-mail: ruheller@post.tau.ac.il}\\
\end{center}

\section{A computational example with FWER
control}

When the FWER controlling procedure applied in each stage of
Procedure 3.1 is Bonferroni, then $H_{NR,j}$ is rejected if
$p_{1j}\leq \alpha_1/m$ and $p_{2j}\leq
(\alpha-\alpha_1)/\sum_{i=1}^m \textbf{I}[p_{1j}\leq \alpha_1/m]$,
where $\textbf{I}[\cdot]$ is the indicator function.
 An alternative to Procedure 3.1 is to apply a FWER
controlling procedure, such as Bonferroni, on the maximum of
$p$-values from the two studies. This alternative procedure also
controls the FWER on the family of no replicability null hypotheses.
In the alternative procedure, $H_j$ is rejected if $p_{1j}\leq
\alpha/m$ and $p_{2j}\leq \alpha/m$. The two procedures differ in
the thresholds used in each of the studies.
 The cut-off for $p_{1j}$ is
larger in the alternative procedure, since $\alpha_1< \alpha$.
However, the cut-off for $p_{2j}$ may be substantially smaller in
the alternative procedure, since $(\alpha-\alpha_1)/\sum_{i=1}^m
\textbf{I}[p_{1j}\leq \alpha_1/m]$ may be significantly larger than
$\alpha/m$. This is so in the common setting where signal is sparse
in the primary study, i.e. $\sum_{j=1}^m h_{1j}\ll m$.

\begin{ex}\label{ex1}
 Suppose
we have $m$ independent normal outcomes in each of the two studies
$T_{1j}, T_{2j}, j=1\ldots,m$ . In this example, $E(T_{11}) =
\mu_{11}, E(T_{21}) = \mu_{21}, Var(T_{11}) = Var(T_{21})=1$, and
outcomes $j=2,\ldots,m$ have expectation 0 and variance 1. Consider
first the power of the alternative procedure that applies Bonferroni
on the maximum of the two study $p$-values for FWER control at level
$\alpha=0.05$:
$$\pi_1 = \stackrel \sim \Phi (z_{1-\alpha/m}-\mu_{11})\times \stackrel \sim \Phi (z_{1-\alpha/m}-\mu_{21}),
$$ where $\stackrel \sim \Phi(\cdot)$ is the right tail of the standard normal distribution.
Next, we compute the power of Procedure 3.1 with
 Bonferroni as the FWER
controlling procedure. The probability of correctly selecting (PCS)
the non-null hypothesis in the first study as well as $k-1$ null
hypotheses along with it is $$PCS(k) = \stackrel \sim \Phi
(z_{1-\alpha_1/m}-\mu_{11})\binom{m-1}{k-1}(\alpha_1/m)^{k-1}(1-\alpha_1/m)^{m-k},$$
so the power is $$\pi_2 = \sum_{k=1}^m PCS(k) \times \stackrel \sim
\Phi (z_{1-(\alpha-\alpha_1)/k}-\mu_{21}).$$

 Figure \ref{fig1} shows the
power of the Bonferroni on maximum $p$-values procedure (left panel)
and the power of Procedure 3.1 (right panel) for different
configurations of $(\mu_{11}, \mu_{21})$, where $(\alpha_1, \alpha)
= (0.025, 0.05)$. In most configurations of $\mu_{11}$ and
$\mu_{21}$, Procedure 3.1 is more powerful than the Bonferroni on
maximum $p$-values procedure. Moreover, for fixed $\mu_1>\mu_2$, the
power of the two stage procedure is larger if $(\mu_{11}, \mu_{21})
= (\mu_1, \mu_2)$ than if $(\mu_{11}, \mu_{21}) = (\mu_2, \mu_1)$.

Figure \ref{fig2} shows the difference in power of Procedure 3.1
using Bonferroni with $c=\alpha_1/\alpha \in \{0.2,0.5,0.8\}$, as
well as the Bonferroni procedure on maximum $p$-values, from the
power of Procedure 3.1 with optimal choice of $c$. Clearly,
Procedure 3.1 with optimal choice of $c$ can be much more powerful
than the Bonferroni procedure on maximum $p$-values. Moreover, for
the three choices $c=0.2$, $c=0.5$ and $c=0.8$, the difference in
power from the optimal power is fairly small, especially when the
optimal power is above 0.9 (right panel). Figure \ref{fig3} shows
the power as a function of $c$ for three configurations of
$(\mu_{11}, \mu_{21})$, for which the power using the optimal $c$ is
0.9. The power function is quite flat. The optimal $c$ is below 0.5
in the top left panel, and above 0.5 in the top right and bottom
panel. However, the difference in power between Procedure 3.1 with
$c=0.5$ and Procedure 3.1 with optimal $c$ is small.
\begin{figure}[!tpb]
\centering
\includegraphics[width=6cm, height=6cm]{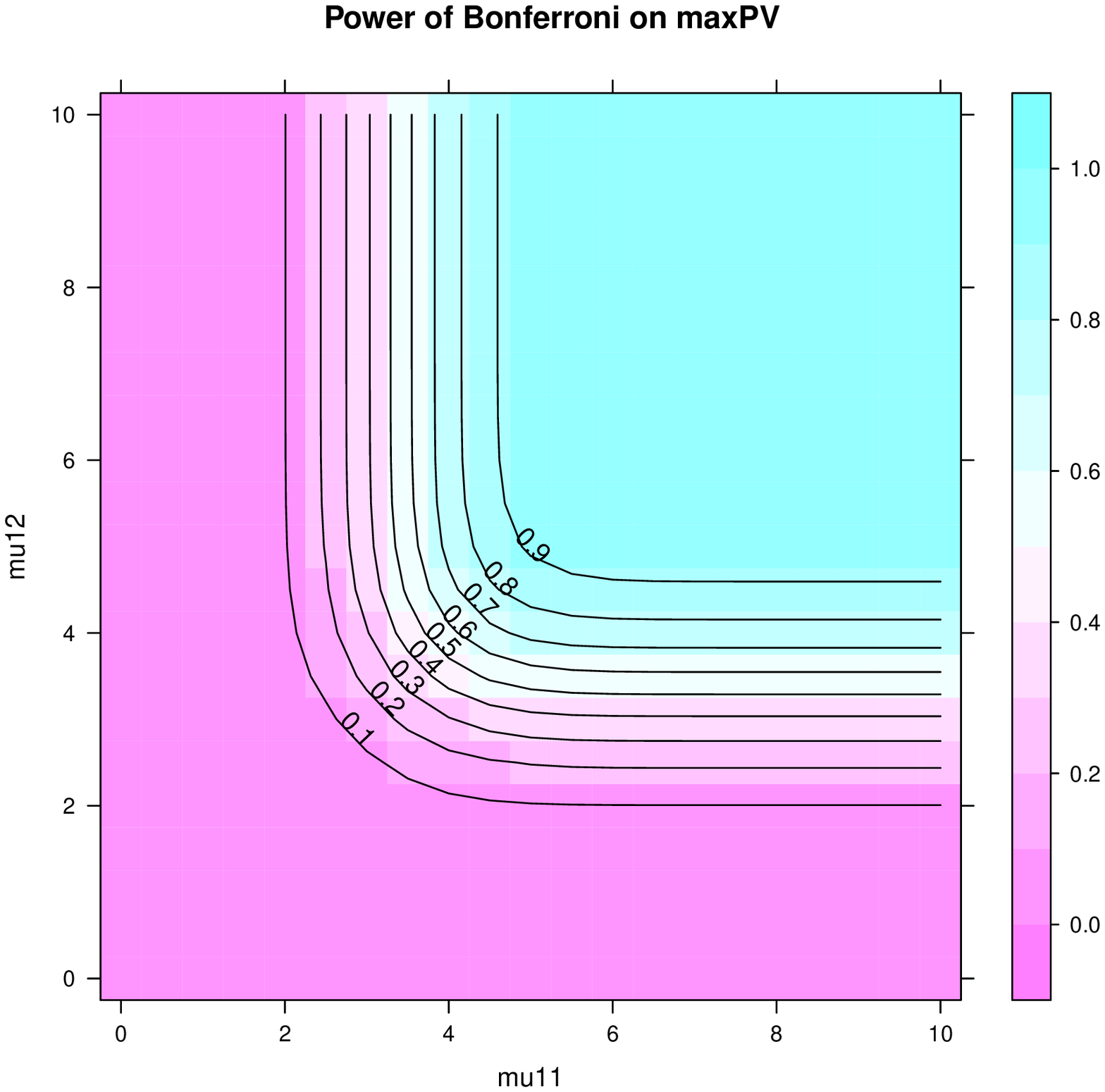}
\includegraphics[width=6cm, height=6cm]{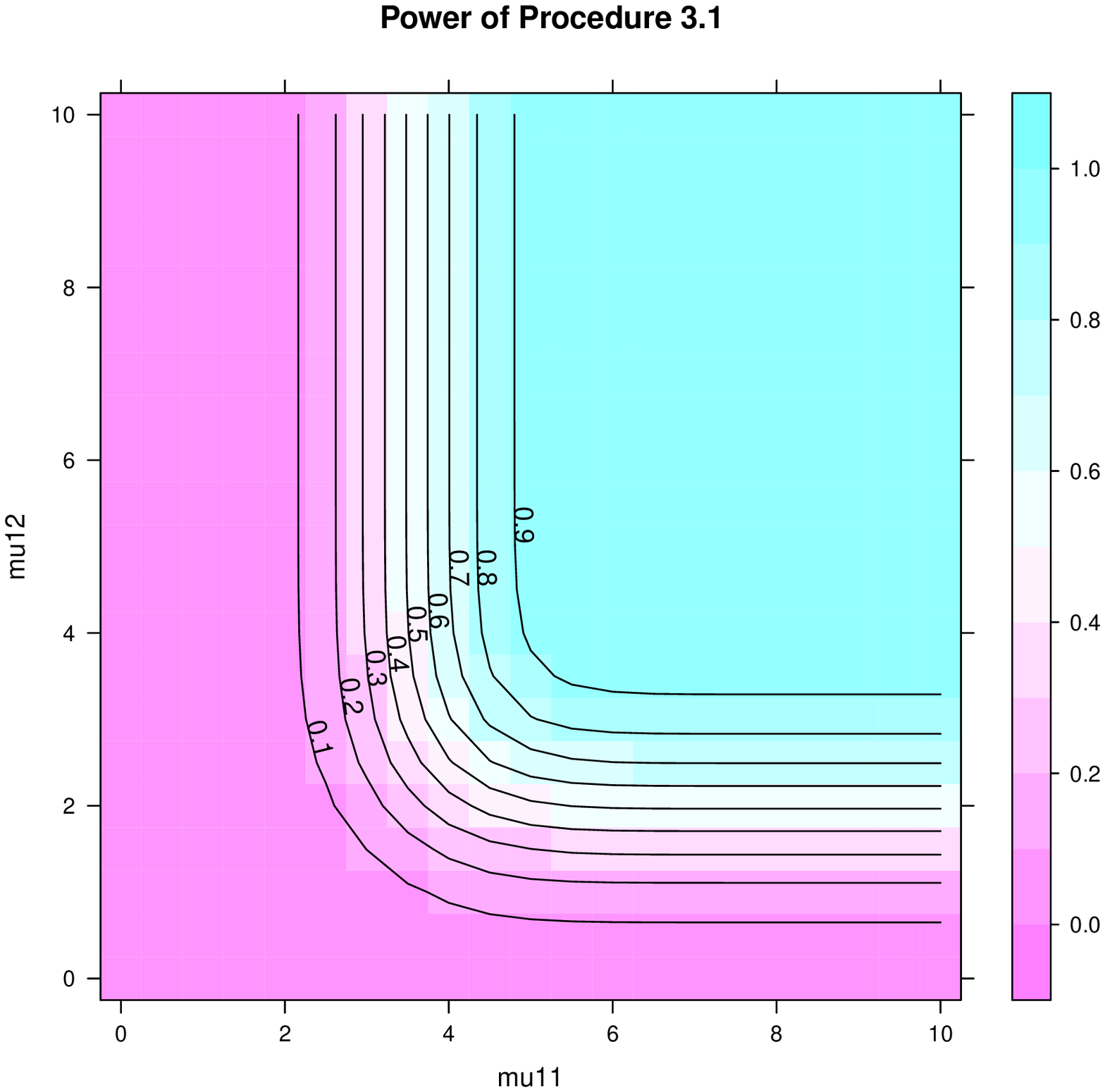}
\caption{ The power as function of the expectation in the first
study (x-axis) and the expectation in the second study (y-axis), for
the false no replicability null hypothesis, in a setting where one
no replicability null hypothesis is false out of 100  no
replicability null hypotheses. Left panel: Procedure that applies a
Bonferroni correction on the maximum two study $p$-values for FWER
control at level 0.05. Right panel:  Procedure 3.1 with
$(\alpha_1,\alpha) =( 0.025, 0.05)$ and Bonferroni as the FWER
controlling procedure. }\label{fig1}
\end{figure}

\begin{figure}[!tpb]
\centering
\includegraphics[width=7cm, height=7cm]{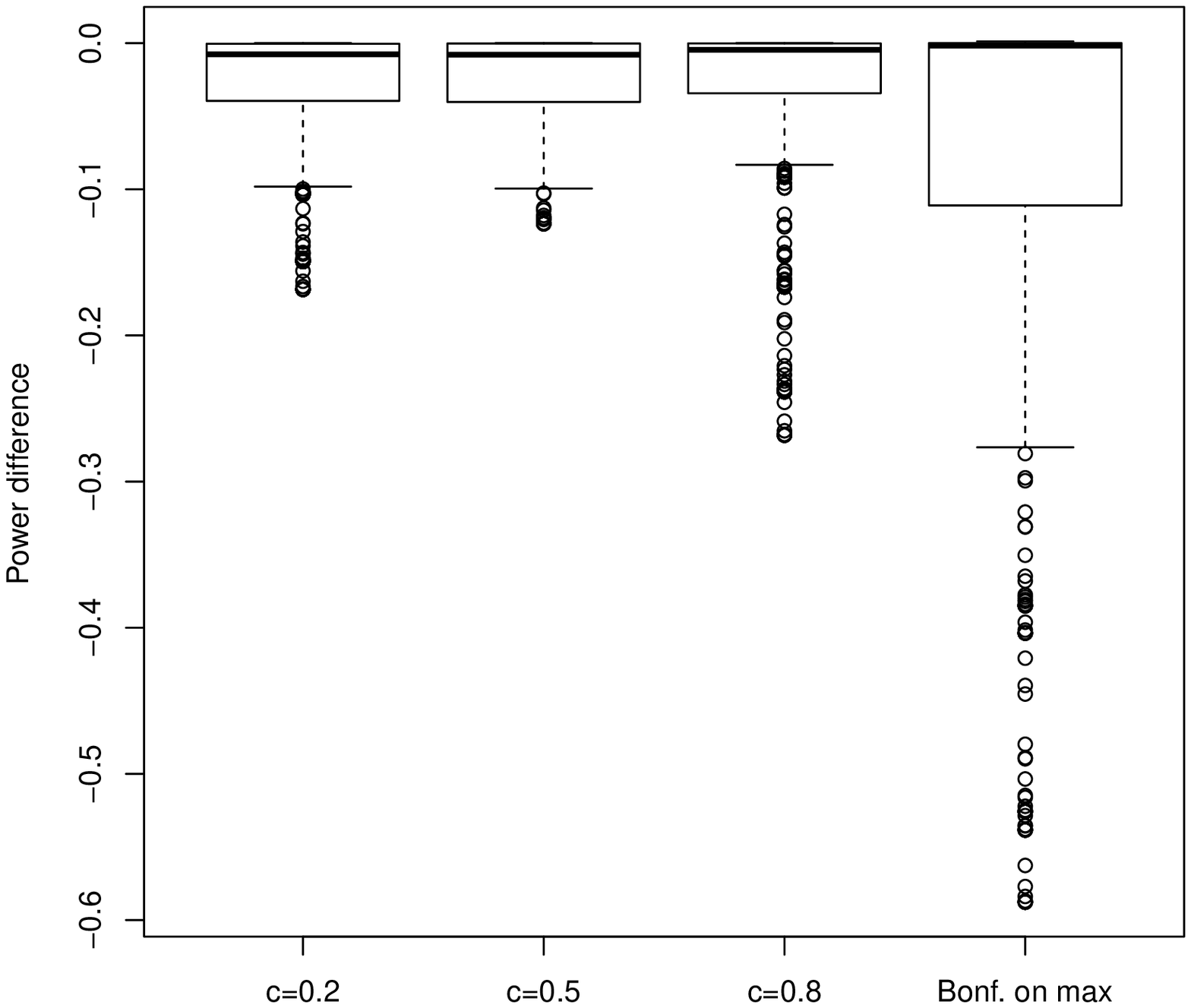}
\includegraphics[width=7cm, height=7cm]{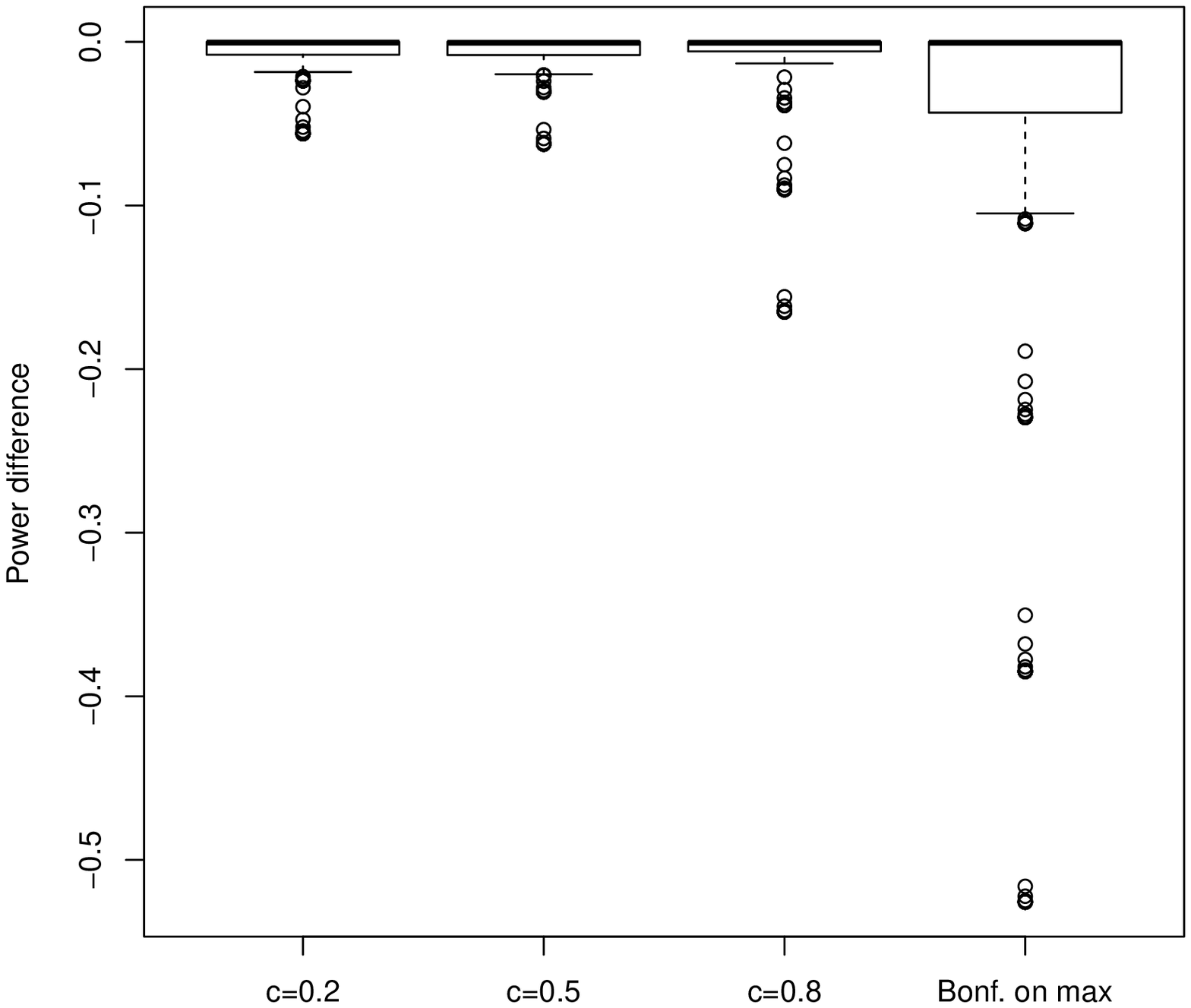}
\caption{ The difference in power of Procedure 3.1 using Bonferroni
with $c=\alpha_1/\alpha \in \{0.2,0.5,0.8\}$, as well as the
Bonferroni procedure on maximum $p$-values, from the power of
Procedure 3.1 with optimal $c$. Left panel: for all pairs of
configurations where $\mu_{11}\in \{0, 0.5, 1.0,\ldots,10 \}$ and
$\mu_{21}\in \{0, 0.5, 1.0,\ldots,10 \}$. Right panel: Subset of
configurations of $(\mu_{11},\mu_{21})$ for which the power with
optimal choice $c$ is above 0.90.}\label{fig2}
\end{figure}

\begin{figure}[!tpb]
\centering
\includegraphics[width=6cm, height=6cm]{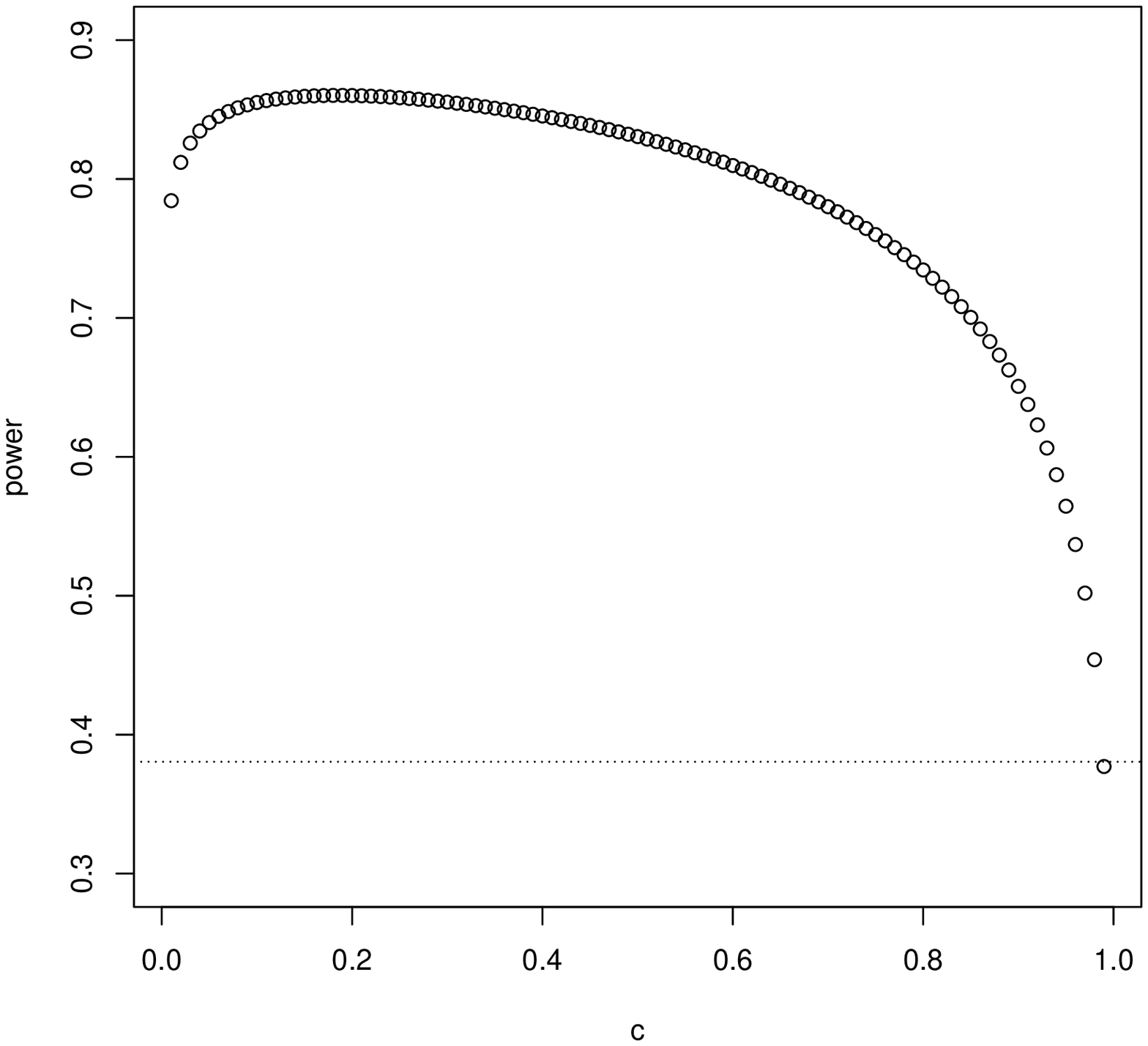}
\includegraphics[width=6cm, height=6cm]{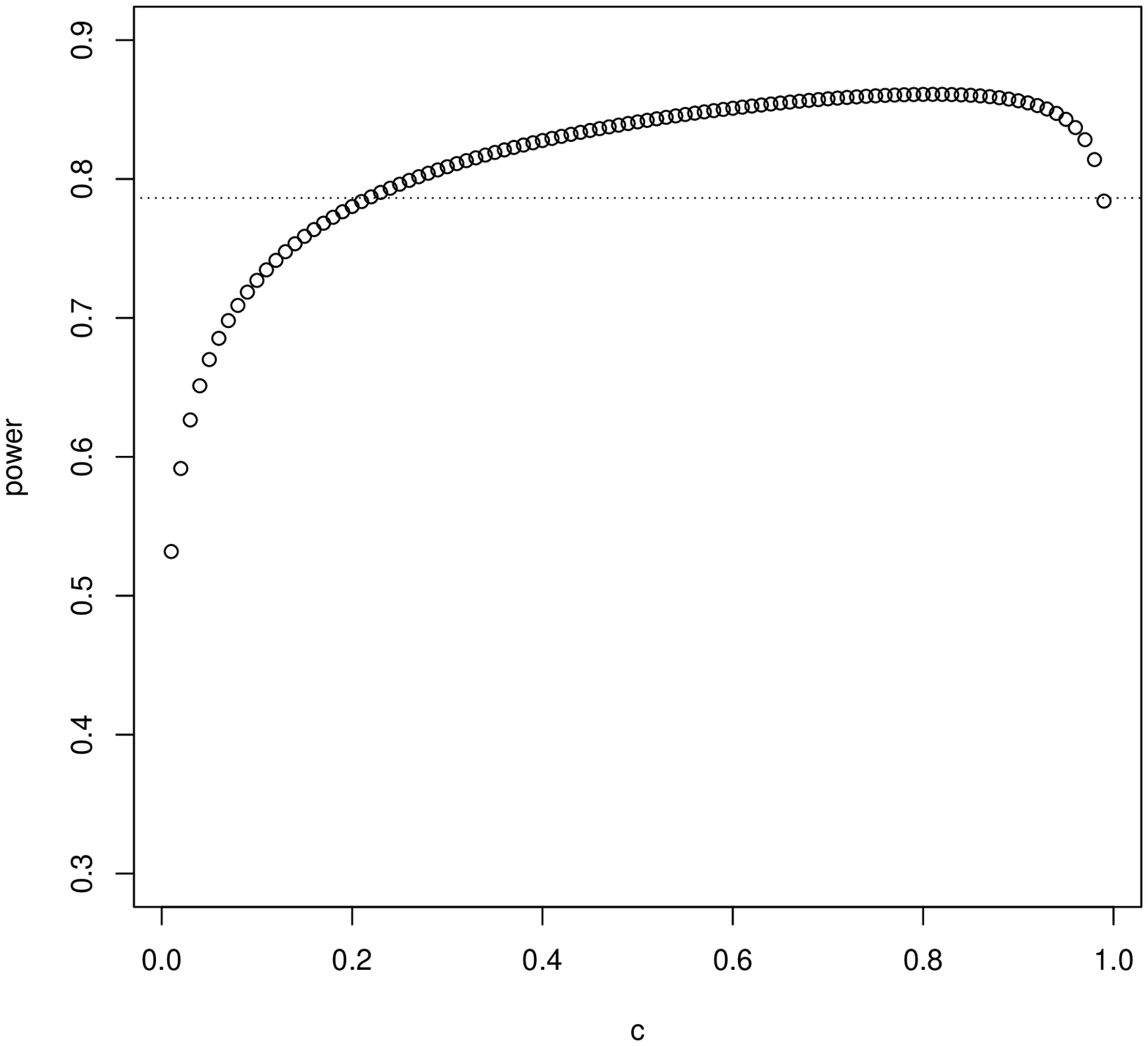}
\includegraphics[width=6cm, height=6cm]{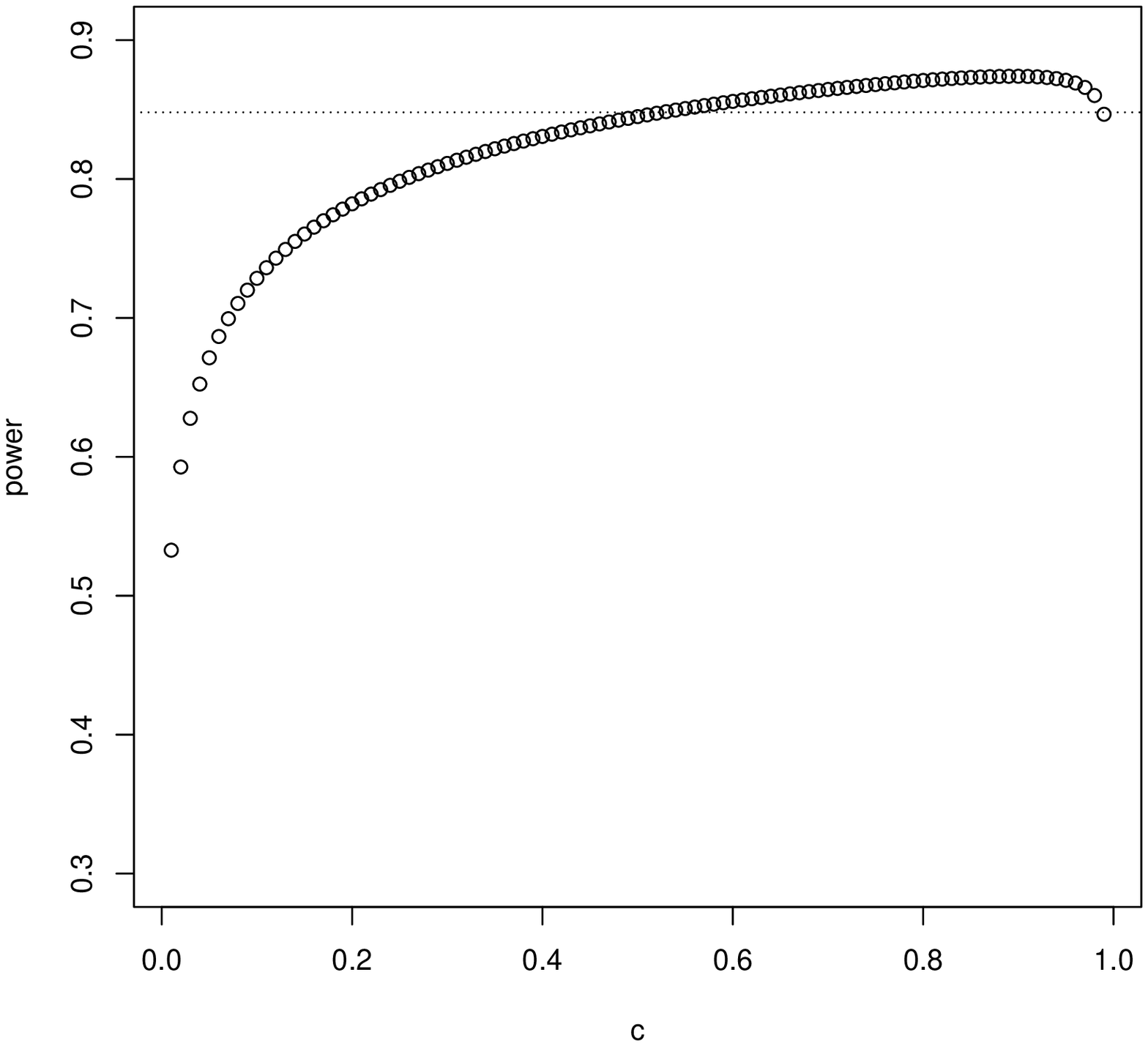}
\caption{ The power of Procedure 3.1 using Bonferroni as function of
$c = \alpha_1/\alpha$ for the false no replicability null
hypothesis, for the following configurations of $(\mu_{11},
\mu_{21})$: (5.5, 3.0) in the top left panel; (4.5, 4.5) in the top
right panel; (4.5, 5.0) in the bottom panel. The power of the
Bonferroni procedure on maximum $p$-values is the dotted horizontal
line. }\label{fig3}
\end{figure}

\end{ex}

\section{Proof of Theorem 3.3}
We use the notation given in the first two paragraphs of Appendix A of
the main manuscript, including: $q_2=q-q_1;$ $R_j$ is the indicator
of whether $H_{NR,j}$ was rejected for $j=1,\ldots,m$, and $R =
\sum_{j=1}^m R_j$. In addition we define:  $I_0=I_{00}\cup
I_{01};$ $p_1=(p_{11},\ldots, p_{1m});$ $\mathcal{R}_1(p_1)$ is the
set of hypotheses selected for follow-up based on $p_1,$
$R_1(p_1)=|\mathcal{R}_1(p_1)|$.
\begin{lemma}\label{lemdepSM}
Assume that the $p$-values across studies are independent, and the
set of $p$-values within the follow-up study has property PRDS. Then
for any valid selection rule, the following results hold:
\begin{enumerate}
\item Given $p_1,$ for $j\in I_{10}\cap \mathcal{R}_1(p_1),$  $$\sum_{r=1}^{R_1(p_1)}\textmd{Pr}
\left(C_{r}^{(j)}\,|\,P_{2j}\leq\frac{rq_2}{R_1(p_1)},
P_1=p_1\right)\leq 1.$$
\item For Procedure 3.2 with parameters $(q_1,q)$,
$$E\left(\frac{\sum_{j\in I_{10}}R_j}{\max(R,
1)}\right)\leq q_2.$$
\item Item 2 holds if in the terms $rq_1/m$ and $R_2q_1/m$ in step 2 of Procedure 3.2, $q_1$ is replaced by $q_1',$ for any value of $q_1'.$
\end{enumerate}
\end{lemma}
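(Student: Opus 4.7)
\textbf{Proof plan for Lemma \ref{lemdepSM}.} My plan for item 1 is to adapt the telescoping argument of \cite{yoav6} to the two-study setting. The key structural observation is that although the individual events $C_r^{(j)}$ are not monotone in $P_2^{(j)}$, the upper tails $D_r := \bigcup_{k\geq r} C_k^{(j)}$ are decreasing sets in $P_2^{(j)}$: decreasing any coordinate of $P_2^{(j)}$ only decreases the corresponding $T_i$, which only raises the rejection count attained when $P_{2j}$ is appended, so the unique index $k'$ with configuration in $C_{k'}^{(j)}$ can only grow. Because the $p$-values are independent across studies, conditioning on $P_1=p_1$ leaves the joint law of $P_2$ unchanged and in particular preserves its PRDS property; hence for the true null $P_{2j}$ with $j\in I_{10}$, the map $t \mapsto \Pr(D_r \mid P_{2j}\leq t,\,P_1=p_1)$ is nonincreasing in $t$.

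Writing $\Pr(C_r^{(j)}\mid\cdot) = \Pr(D_r\mid\cdot)-\Pr(D_{r+1}\mid\cdot)$ and telescoping, the target sum reorganizes into the leading term $\Pr(D_1\mid P_{2j}\leq q_2/R_1(p_1),\,P_1=p_1)$, plus for $r=2,\dots,R_1(p_1)$ the increments $\Pr(D_r\mid P_{2j}\leq rq_2/R_1(p_1),\,P_1)-\Pr(D_r\mid P_{2j}\leq (r-1)q_2/R_1(p_1),\,P_1)$, minus the trailing term $\Pr(D_{R_1(p_1)+1}\mid \cdot)$. Each bracketed increment is $\leq 0$ by PRDS, the trailing term is zero because $C_k^{(j)}=\emptyset$ for $k>R_1(p_1)$, and the leading term is at most $1$, which yields item 1.

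For item 2, I follow the Appendix A proof of Theorem \ref{thmfdr} verbatim through the expansion of $E\bigl(\sum_{j\in I_{10}}R_j/\max(R,1)\mid P_1=p_1\bigr)$ as $\sum_{j\in I_{10}\cap\mathcal{R}_1(p_1)}\sum_{r=1}^{R_1(p_1)}(1/r)\,\Pr(P_{2j}\leq rq_2/R_1,\,C_r^{(j)}\mid P_1=p_1)$. The only step that previously invoked within-study independence was factoring this joint probability as a product. Instead I write
\[
\Pr(P_{2j}\leq rq_2/R_1,\, C_r^{(j)}\mid P_1) \;=\; \Pr(C_r^{(j)}\mid P_{2j}\leq rq_2/R_1,\,P_1)\,\Pr(P_{2j}\leq rq_2/R_1\mid P_1),
\]
bound the second factor by $rq_2/R_1(p_1)$ using cross-study independence and super-uniformity of the true null, and sum over $r$ using item 1 to obtain $E(\sum_{j\in I_{10}}R_j/\max(R,1)\mid P_1=p_1)\leq q_2 |I_{10}\cap\mathcal{R}_1(p_1)|/R_1(p_1) \leq q_2$; averaging over $P_1$ gives item 2.

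For item 3, I observe that replacing $q_1$ by an arbitrary $q_1'$ in the terms $rq_1/m$ and $R_2q_1/m$ merely rescales the primary-study coordinate of each $T_i$ and does not affect its monotonicity in $P_{2i}$; hence $D_r$ remains a decreasing set in $P_2^{(j)}$, the telescope in item 1 carries over unchanged, and item 2 follows without modification. The main obstacle throughout is the structural claim that $D_r$ is decreasing together with the combinatorial fact that the trailing telescope term vanishes (because only $R_1(p_1)-1$ finite $T_i$'s exist to rank); once these are in hand the rest is bookkeeping analogous to \cite{yoav6}.
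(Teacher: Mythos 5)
Your proposal is correct and takes essentially the same route as the paper's proof: your upper-tail sets $D_r=\cup_{k\geq r}C_k^{(j)}$ are exactly the complements of the paper's increasing sets $D_s^{(j)}=\cup_{r\leq s}C_r^{(j)}$, so your telescoping with nonincreasing conditional probabilities of decreasing sets is the mirror image of the paper's argument (with your vanishing trailing term $\Pr\bigl(D_{R_1(p_1)+1}\,|\,\cdot\bigr)=0$ corresponding to the paper's $\Pr\bigl(D_{R_1(p_1)}^{(j)}\,|\,P_{2j}\leq q_2, P_1=p_1\bigr)=1$). Your items 2 and 3 likewise match the paper's proofs: condition on $\{P_{2j}\leq rq_2/R_1(p_1)\}$ rather than factor, bound that probability by $rq_2/R_1(p_1)$ via cross-study independence and super-uniformity for $j\in I_{10}$, invoke item 1, and for item 3 discard the indicator involving $q_1'$ before any property of $q_1'$ is used.
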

See Section \ref{subsec-prooflemmaSM2} for a proof.


\textbf{Proof of item 1 of Theorem 3.3.} We will first show that the
first term of the sum in (A.1) is bounded by $|I_0|\,q_1/m.$ We will
use the technique developed in \cite{yoav6} in the proof of their
Theorem 1.3. For each $j\in I_0, r\in\{1,\ldots,m\},$ and
$l\in\{1,\ldots,m\},$ let us define:
\begin{align*}p_{jrl}=\textmd{Pr}\left(P_{1j}\in\left(\frac{(l-1)q_1}{m\sum_{s=1}^m\frac{1}{s}},
\frac{lq_1}{m\sum_{s=1}^m\frac{1}{s}} \right],
C_r^{(j)}\right).\end{align*} Since $\cup_{r=1}^m C_r^{(j)}$ is the
entire sample space represented as a union of disjoint events, we
obtain for each $j\in I_0$ and $l\in\{1,\ldots,m\}$:
\begin{align}
\sum_{r=1}^m p_{jrl}&=\textmd{Pr}\left(P_{1j}\in
\left(\frac{(l-1)q_1}{m\sum_{s=1}^m\frac{1}{s}},
\frac{lq_1}{m\sum_{s=1}^m\frac{1}{s}} \right],\,
\cup_{r=1}^mC_r^{(j)}\right)\notag\\&=\textmd{Pr}\left(P_{1j}\in
\left(\frac{(l-1)q_1}{m\sum_{s=1}^m\frac{1}{s}},
\frac{lq_1}{m\sum_{s=1}^m\frac{1}{s}}
\right]\right).\label{daniwhole}
\end{align}

Note that for $j\in I_0,$ $\textmd{Pr}\left(P_{1j}\leq x\right)\leq
x$ for all $x\geq 0$, in particular
$\textmd{Pr}\left(P_{1j}=0\right)=0.$ Therefore, for each $j\in I_0$
and $r\in\{1,\ldots,m\},$
\begin{align}
\textmd{Pr}\left(P_{1j} \leq\frac{rq_1}{m\sum_{s=1}^m\frac{1}{s}},
C_{r}^{(j)}\right)=\sum_{l=1}^{r}p_{jrl}.\label{sum1}
\end{align}

The upper bound on the first term of the sum in (A.1) is derived as follows.
\begin{align}
E\left(\frac{\sum_{j\in I_{0}}R_j}{\max(R, 1)}\right)&=\sum_{j\in
I_0}\sum_{r=1}^m\frac{1}{r}\textmd{Pr}\left(j\in \mathcal{R}_1,
P_{1j}\leq\frac{rq_1}{m\sum_{s=1}^m\frac{1}{s}},
P_{2j}\leq\frac{r(q-q_1)}{|\mathcal{R}_1|},
C_{r}^{(j)}\right)\notag
\notag\\&\leq\sum_{j\in
I_0}\sum_{r=1}^m\frac{1}{r}\textmd{Pr}\left(P_{1j}
\leq\frac{rq_1}{m\sum_{s=1}^m\frac{1}{s}},
C_{r}^{(j)}\right)\label{foritem2}\\&=\sum_{j\in
I_0}\sum_{r=1}^m\sum_{l=1}^{r}\frac{1}{r}p_{jrl}=\sum_{j\in
I_0}\sum_{l=1}^m\sum_{r=l}^{m}\frac{1}{r}p_{jrl}\label{1eq}\\&\leq
\sum_{j\in
I_0}\sum_{l=1}^m\sum_{r=l}^{m}\frac{1}{l}p_{jrl}\leq\sum_{j\in
I_0}\sum_{l=1}^m\frac{1}{l}\sum_{r=1}^{m}p_{jrl}\notag\\&=\sum_{j\in
I_0}\sum_{l=1}^m\frac{1}{l}\textmd{Pr}\left(P_{1j}\in\left(\frac{(l-1)q_1}{m\sum_{s=1}^m\frac{1}{s}},
\frac{lq_1}{m\sum_{s=1}^m\frac{1}{s}}\right]\right),\label{3}
\end{align}
where the first equality in (\ref{1eq}) follows from (\ref{sum1}),
and the equality in (\ref{3}) follows from (\ref{daniwhole}). Note
that for each $j\in I_0,$
\begin{align}
\sum_{l=1}^m&\frac{1}{l}\textmd{Pr}\left(P_{1j}\in\left(\frac{(l-1)q_1}{m\sum_{s=1}^m\frac{1}{s}},
\frac{lq_1}{m\sum_{s=1}^m\frac{1}{s}}\right]\right)\notag\\&=
\sum_{l=1}^m\frac{1}{l}\left[\textmd{Pr}\left(P_{1j}\leq
\frac{lq_1}{m\sum_{s=1}^m1/s}\right)-\textmd{Pr}\left(P_{1j}\leq
\frac{(l-1)q_1}{m\sum_{s=1}^m1/s}\right)\right]\notag\\&=
\sum_{l=1}^m\frac{1}{l}\textmd{Pr}\left(P_{1j}\leq
\frac{lq_1}{m\sum_{s=1}^m1/s}\right)-\sum_{l=0}^{m-1}\frac{1}{l+1}\textmd{Pr}\left(P_{1j}\leq
\frac{lq_1}{m\sum_{s=1}^m1/s}\right)\notag
\\&=\sum_{l=1}^{m-1}\left(\frac{1}{l}-\frac{1}{l+1}
\right)\textmd{Pr}\left(P_{1j}\leq\frac{lq_1}{m\sum_{s=1}^m1/s}\right)+\frac{1}{m}\textmd{Pr}\left(P_{1j}\leq
\frac{q_1}{\sum_{s=1}^m1/s}\right)\notag
\\&\leq\sum_{l=1}^{m-1}\frac{1}{l+1}\left(\frac{q_1}{m\sum_{s=1}^m1/s}\right)+\frac{q_1}{m\sum_{s=1}^m1/s}\label{daniprelast}\\&=\left(\frac{q_1}{m\sum_{s=1}^m1/s}\right)\sum_{l=1}^{m}\frac{1}{l}=\frac{q_1}{m}.\label{danilast}
\end{align}
The  inequality in (\ref{daniprelast}) follows from the fact that
for $j\in I_0,$ $\textmd{Pr}(P_{1j}\leq x)\leq x$ for all $x\geq 0.$
Combining (\ref{danilast}) with (\ref{3}) we obtain an upper bound
for the first term of the sum in (A.1):
\begin{align}E\left(\frac{\sum_{j\in I_{0}}R_j}{\max(R,
1)}\right)\leq\sum_{j\in
I_0}\frac{q_1}{m}=\frac{|I_0|\,q_1}{m}.\label{item1}\end{align}

It follows from Lemma \ref{lemdepSM}, item 3, that the second term
of the sum in (A.1) is bounded by $q_2,$ hence
$$FDR\leq \frac{|I_0|\,q_1}{m}+q_2=\frac{|I_0|\,q_1}{m}+q-q_1\leq q.$$

\textbf{Proof of item 2 of Theorem 3.3.} We will first prove that
the first term of the sum in (A.1) is bounded by $q_1.$  For
$\widetilde{q}_1$ as defined in item 2 of Theorem 3.3, we denote
$k=\left\lceil tm/\widetilde{q}_1-1\right\rceil.$ The first term of
the sum in (A.1) is upper bounded by two terms:
\begin{align}
E\left(\frac{\sum_{j\in I_{0}}R_j}{\max(R, 1)}\right)&=\sum_{j\in
I_0}\sum_{r=1}^m\frac{1}{r}\textmd{Pr}\left(j\in \mathcal{R}_1,
P_{1j}\leq\frac{r\widetilde{q_1}}{m},
P_{2j}\leq\frac{r(q-q_1)}{|\mathcal{R}_1|},
C_{r}^{(j)}\right)\notag
\notag\\&\leq\sum_{j\in
I_0}\sum_{r=1}^m\frac{1}{r}\textmd{Pr}\left(P_{1j}
\leq\min\left(\frac{r\widetilde{q_1}}{m},\,t\right),\,
C_{r}^{(j)}\right)\label{min}\\&= \sum_{j\in
I_0}\sum_{r=1}^k\frac{1}{r}\textmd{Pr}\left(P_{1j}
\leq\frac{r\widetilde{q_1}}{m},\, C_{r}^{(j)}\right)+\sum_{j\in
I_0}\sum_{r=k+1}^m\frac{1}{r}\textmd{Pr}\left(P_{1j} \leq t,\,
C_{r}^{(j)}\right),\label{sumitem3}
\end{align}
where the inequality in (\ref{min}) follows from the fact that $j\in
\mathcal{R}_1$ yields that $P_{1j}\leq t.$ We will now find an upper
bound for each of the two terms in (\ref{sumitem3}) separately. The
derivation of the upper bound for the first term  is along
the lines of the derivation in the proof of item 1. We give it below for
completeness.

For each $j\in I_0, r\in\{1,\ldots,m\},$ and $l\in\{1,\ldots,m\},$
let us define: \begin{align} \widetilde{p}_{jrl} =
\textmd{Pr}\left(P_{1j}\in\left(\frac{(l-1)\widetilde{q_1}}{m},
\frac{l\widetilde{q_1}}{m} \right], C_r^{(j)}\right).
\end{align}
 As in expression (\ref{sum1}), for each
$j\in I_0$ and $r\in\{1,\ldots,k\}$ one has:
\begin{align*}
\textmd{Pr}\left(P_{1j} \leq\frac{r\widetilde{q_1}}{m},\,
C_{r}^{(j)}\right)=\sum_{l=1}^r\widetilde{p}_{jrl}.
\end{align*}
Using this equality we obtain:
\begin{align}
\sum_{j\in I_0}\sum_{r=1}^k\frac{1}{r}\textmd{Pr}\left(P_{1j}
\leq\frac{r\widetilde{q_1}}{m},\, C_{r}^{(j)}\right)&=\sum_{j\in
I_0}\sum_{r=1}^k\sum_{l=1}^r\frac{1}{r}\widetilde{p}_{jrl}=\sum_{j\in
I_0}\sum_{l=1}^k\sum_{r=l}^{k}\frac{1}{r}\widetilde{p}_{jrl}\notag\\&\leq
\sum_{j\in
I_0}\sum_{l=1}^k\sum_{r=l}^{k}\frac{1}{l}\widetilde{p}_{jrl}
\leq\sum_{j\in
I_0}\sum_{l=1}^k\frac{1}{l}\sum_{r=1}^{k}\widetilde{p}_{jrl}.\label{last3}
\end{align}

Since $\cup_{r=1}^k C_r^{(j)}$ is
 a union of disjoint events, we obtain for each
$j\in I_0$ and $l\in\{1,\ldots,k\}$:
\begin{align}
\sum_{r=1}^k \widetilde{p}_{jrl}&=\textmd{Pr}\left(P_{1j}\in
\left(\frac{(l-1)\widetilde{q_1}}{m}, \frac{l\widetilde{q_1}}{m}
\right],\,
\cup_{r=1}^kC_r^{(j)}\right)\notag\\&\leq\textmd{Pr}\left(P_{1j}\in
\left(\frac{(l-1)\widetilde{q_1}}{m}, \frac{l\widetilde{q_1}}{m}
\right]\right)\notag\\&=
\textmd{Pr}\left(P_{1j}\leq\frac{l\widetilde{q_1}}{m}\right)-\textmd{Pr}\left(P_{1j}\leq\frac{(l-1)\widetilde{q_1}}{m}\right).\notag
\end{align}

Therefore for each $j\in I_0$ we obtain:
\begin{align}
\sum_{l=1}^k&\frac{1}{l}\sum_{r=1}^k\widetilde{p}_{jrl}\leq
\sum_{l=1}^k\frac{1}{l}\left[\textmd{Pr}\left(P_{1j}\leq
\frac{l\widetilde{q}_1}{m}\right)-\textmd{Pr}\left(P_{1j}\leq
\frac{(l-1)\widetilde{q}_1}{m}\right)\right]\notag\\&=
\sum_{l=1}^k\frac{1}{l}\textmd{Pr}\left(P_{1j}\leq
\frac{l\widetilde{q}_1}{m}\right)-\sum_{l=0}^{k-1}\frac{1}{l+1}\textmd{Pr}\left(P_{1j}\leq
\frac{l\widetilde{q}_1}{m}\right)\notag
\\&=\sum_{l=1}^{k-1}\left(\frac{1}{l}-\frac{1}{l+1}
\right)\textmd{Pr}\left(P_{1j}\leq\frac{l\widetilde{q}_1}{m}\right)+\frac{1}{k}\textmd{Pr}\left(P_{1j}\leq
\frac{k\widetilde{q}_1}{m}\right)\notag\\&\leq\sum_{l=1}^{k-1}\frac{1}{l+1}\left(\frac{\widetilde{q}_1}{m}\right)+\frac{\widetilde{q}_1}{m}=\left(\frac{\widetilde{q}_1}{m}\right)\sum_{l=1}^{k}\frac{1}{l}.\label{danilast3}
\end{align}
The inequality in (\ref{danilast3}) follows from the fact that for
$j\in I_0,$ $\textmd{Pr}(P_{1j}\leq x)\leq x$ for all $x\geq 0.$
Combining (\ref{danilast3}) with (\ref{last3}) we obtain an upper
bound for the first term of the sum in (\ref{sumitem3}):
\begin{align}\sum_{j\in I_0}\sum_{r=1}^k\frac{1}{r}\textmd{Pr}\left(P_{1j}
\leq\frac{r\widetilde{q_1}}{m},\, C_{r}^{(j)}\right)\leq\sum_{j\in
I_0}\left(\frac{\widetilde{q}_1}{m}\right)\sum_{l=1}^{k}\frac{1}{l}=\frac{|I_0|\,\widetilde{q}_1}{m}\sum_{l=1}^{k}\frac{1}{l}.\label{item13}\end{align}
We will now find an upper bound for the second term of the sum in
(\ref{sumitem3}):
\begin{align}
\sum_{j\in I_0}\sum_{r=k+1}^m\frac{1}{r}\textmd{Pr}\left(P_{1j} \leq
t,\, C_{r}^{(j)}\right)&=\sum_{j\in
I_0}\sum_{r=k+1}^m\frac{1}{r}\textmd{Pr}\left(P_{1j}\leq
t\right)\textmd{Pr}\left(C_r^{(j)}\,|\,P_{1j}\leq
t\right)\notag\\&\leq\sum_{j\in
I_0}\sum_{r=k+1}^m\frac{t}{r}\textmd{Pr}\left(C_r^{(j)}\,|\,P_{1j}\leq
t\right)\label{unif}\\&\leq\frac{\widetilde{q}_1}{m}\sum_{j\in
I_0}\sum_{r=k+1}^m\textmd{Pr}\left(C_r^{(j)}\,|\,P_{1j}\leq
t\right)\label{tr}\\&=\frac{\widetilde{q}_1}{m}\sum_{j\in
I_0}\textmd{Pr}\left(\cup_{r=k+1}^mC_r^{(j)}\,|\,P_{1j}\leq
t\right)\leq \frac{|I_0|\,\widetilde{q}_1}{m}.\label{lastlast}
\end{align}
The inequality in (\ref{unif}) follows from the fact that for $j\in
I_0,$ $\textmd{Pr}(P_{1j}\leq x)\leq x$ for all $x\geq 0.$ The
inequality in (\ref{tr}) follows from the fact that for all $r\geq
k+1,$ it holds that $r\geq \lceil
tm/\widetilde{q}_1-1\rceil+1=\lceil tm/\widetilde{q}_1 \rceil\geq
tm/\widetilde{q}_1,$ yielding that $t/r\leq \widetilde{q}_1/m.$ The
equality in (\ref{lastlast}) follows from the fact that
$\cup_{r=k+1}^mC_r^{(j)}$ is a union of disjoint events.

Combining (\ref{sumitem3}), (\ref{item13}) and (\ref{lastlast}) we
obtain an upper bound for the first term of the sum in (A.1):
\begin{align}
E\left(\frac{\sum_{j\in I_{0}}R_j}{\max(R, 1)}\right)\leq
\frac{|I_0|\,\widetilde{q}_1}{m}\sum_{l=1}^{k}\frac{1}{l}+
\frac{|I_0|\,\widetilde{q}_1}{m}\leq
\widetilde{q}_1\left(1+\sum_{l=1}^{k}\frac{1}{l}
\right)=q_1.\label{resitem3}\end{align} Note that for $t\leq q_1/m,$
$\lceil tm/q_1-1 \rceil=0,$ therefore
$q_1 = \max \{x:\,\,x(1+\sum_{i=1}^{\lceil mt/x-1 \rceil}1/i)=q_1 \}$. We obtain $\widetilde{q}_1=q_1,$ which yields
that in this case no modification is required.

It follows from Lemma \ref{lemdepSM}, item 3, that the second term
of the sum in (A.1) is bounded by $q_2.$ Combining this result with
(\ref{resitem3}), we obtain $$FDR\leq q_1+q_2=q_1+q-q_1=q.$$
\subsection{Proof of Lemma \ref{lemdepSM}}\label{subsec-prooflemmaSM2}
\textbf{Proof of item 1.}
Our proof is similar to the proof of Theorem 1.2 in \cite{yoav6}. For $j\in\{1,\ldots,m\}$ and $s\in\{1,\ldots,m-1\}$ we define the event $D_s^{(j)}$ 
as follows:
\begin{align*}
D_s^{(j)}=\{(P_1^{(j)}, P_2^{(j)}):\,T_{(s)}>s+1,
T_{(s+1)}>s+2,\ldots, T_{(m-1)}>m\},
\end{align*}
and we define $D_m^{(j)}$ to be the entire sample space of
$(P_1^{(j)}, P_2^{(j)}).$ Note that $D_s^{(j)}=\cup_{r=1}^s
C_r^{(j)}.$ It is easy to see that $D_s^{(j)}$ is the event in which
if $H_{NR,j}$ is rejected by Procedure 3.2, at most $s$ hypotheses
are rejected including $H_{NR,j}$.

We will first show that for each $p_1$, $j\in I_{10}\cap
\mathcal{R}_1(p_1)$ and $s\in\{1,\ldots,m-1\},$ $D_s^{(j)}\cap
\{P_1=p_1\}$ is an increasing set for $P_2^{(j)},$ i.e. if $(P_1,
P_2^{(j)})\in D_s^{(j)}\cap \{P_1=p_1\}$ and
$\widetilde{P}_2^{(j)}\geq
 P_2^{(j)},$ then $(P_1, \widetilde{P}_2^{(j)})\in D_s^{(j)}\cap
\{P_1=p_1\}.$ The result follows from the fact that for fixed
$P_1=p_1$ and $j\in I_{10}\cap \mathcal{R}_1(p_1),$ $T_i=\infty$ for
$i\notin\mathcal{R}_1^{(j)}(p_1^{(j)}),$ and  $T_i$ is increasing in
$P_{2i}$ for $i\in\mathcal{R}_1^{(j)}(p_1^{(j)}).$


 For a given $p_1$
and $j\in I_{10}\cap \mathcal{R}_1(p_1)$,
using the fact that for each 
$s\in\{1,\ldots,m-1\},$ 
$D_s^{(j)}\cap \{P_1=p_1\}$ is an increasing set for $P_2^{(j)},$
as well as the PRDS property of the $p$-values from the follow-up study and the
independence of the $p$-values across the studies, we obtain for
each  $s\in\{1,\ldots, R_1(p_1)-1\}$:
 \begin{align}
 \textmd{Pr}\left(D_s^{(j)}\,|\,P_{2j}\leq \frac{sq_2}{R_1(p_1)}, P_1=p_1\right)\leq \textmd{Pr}\left(D_s^{(j)}\,|\,P_{2j}\leq \frac{(s+1)q_2}{R_1(p_1)},
 P_1=p_1\right).\label{mainprds}
 \end{align}
Using the fact that for each $s\in\{1,\ldots, R_1(p_1)-1\},$
$D_s^{(j)}\cup C_{s+1}^{(j)}=D_{s+1}^{(j)},$ where $D_s^{(j)}$ and
$C_{s+1}^{(j)}$ are disjoint events, and the fact that
$D_1^{(j)}=C_1^{(j)}$ we obtain:
\begin{align}
&\sum_{r=1}^{R_1(p_1)} \textmd{Pr}\left(C_r^{(j)}\,|\,P_{2j}\leq
\frac{rq_2}{R_1(p_1)}, P_1=p_1\right)=\notag\\&
\textmd{Pr}\left(D_1^{(j)}\,|\,P_{2j}\leq \frac{q_2}{R_1(p_1)},
P_1=p_1\right)+\notag\\&\sum_{r=2}^{R_1(p_1)}\left[\textmd{Pr}\left(D_r^{(j)}\,|\,P_{2j}\leq
\frac{rq_2}{R_1(p_1)}, P_1=p_1
\right)-\textmd{Pr}\left(D_{r-1}^{(j)}\,|\,P_{2j}\leq
\frac{rq_2}{R_1(p_1)}, P_1=p_1 \right)
\right]\notag\\&=\sum_{r=1}^{R_1(p_1)}\textmd{Pr}\left(D_r^{(j)}\,|\,P_{2j}\leq
\frac{rq_2}{R_1(p_1)}, P_1=p_1
\right)-\sum_{r=1}^{R_1(p_1)-1}\textmd{Pr}\left(D_r^{(j)}\,|\,P_{2j}\leq
\frac{(r+1)q_2}{R_1(p_1)}, P_1=p_1
\right)\notag\\&\leq\sum_{r=1}^{R_1(p_1)}\textmd{Pr}\left(D_r^{(j)}\,|\,P_{2j}\leq
\frac{rq_2}{R_1(p_1)}, P_1=p_1
\right)-\sum_{r=1}^{R_1(p_1)-1}\textmd{Pr}\left(D_r^{(j)}\,|\,P_{2j}\leq
\frac{rq_2}{R_1(p_1)}, P_1=p_1
\right)\label{inprds}\\&=\textmd{Pr}\left(D_{R_1(p_1)}^{(j)}\,|\,P_{2j}\leq
q_2, P_1=p_1\right)=1,\notag
\end{align}
where the inequality in (\ref{inprds}) follows from
(\ref{mainprds}).

\textbf{Proof of item 2.} Let $p_1$ be
arbitrary fixed. Then,
\begin{align}
&E\left(\sum_{j\in I_{10}}R_j/\max(R,
1)\,|\,P_1=p_1\right)=\notag\\&\sum_{j\in I_{10}\cap
\mathcal{R}_1(p_1)}\sum_{r=1}^{R_1(p_1)}\frac{1}{r}\,\textbf{I}\left[p_{1j}\leq
\frac{rq_1}{m}\right]\textmd{Pr}\left(
P_{2j}\leq\frac{rq_2}{R_1(p_1)},
C_{r}^{(j)}\,|\,P_1=p_1\right)\notag
\\&\leq\sum_{j\in I_{10}\cap
\mathcal{R}_1(p_1)}\sum_{r=1}^{R_1(p_1)}\frac{1}{r}\textmd{Pr}\left(
P_{2j}\leq\frac{rq_2}{R_1(p_1)},
C_{r}^{(j)}\,|\,P_1=p_1\right)\label{fireq}\\&= \sum_{j\in
I_{10}\cap
\mathcal{R}_1(p_1)}\sum_{r=1}^{R_1(p_1)}\frac{1}{r}\textmd{Pr}\left(
P_{2j}\leq\frac{rq_2}{R_1(p_1)}\,|\,P_1=p_1\right)\textmd{Pr}
\left(C_{r}^{(j)}\,|\,P_{2j}\leq\frac{rq_2}{R_1(p_1)}, P_1=p_1
\right)\notag\\&\leq \frac{q_2}{R_1(p_1)}\sum_{j\in I_{10}\cap
\mathcal{R}_1(p_1)}\sum_{r=1}^{R_1(p_1)}\textmd{Pr}
\left(C_{r}^{(j)}\,|\,P_{2j}\leq\frac{rq_2}{R_1(p_1)}, P_1=p_1
\right)\leq\frac{q_2}{R_1(p_1)}|I_{10}\cap\mathcal{R}_1(p_1)|.\label{final}
\end{align}
The first inequality in (\ref{final}) follows from the independence
of the $p$-values across the studies and the fact that for each
$j\in I_{10},$ $\textmd{Pr}(P_{2j}\leq x)\leq x$ for all $x\geq 0.$
The second inequality in (\ref{final}) follows from Lemma
\ref{lemdepSM}, item 1. 
Taking the expectation over $P_1,$ we obtain $E\left(\sum_{j\in
I_{01}}R_j/\max(R,1)\right)\leq q_2.$

\textbf{Proof of item 3.} For $q_1'$ and $p_1$ arbitrary fixed,
\begin{align}
&E\left(\sum_{j\in I_{10}}R_j/\max(R, 1)\,|\,P_1=p_1\right)=\notag\\
&\sum_{j\in I_{10}\cap
\mathcal{R}_1(p_1)}\sum_{r=1}^{R_1(p_1)}\frac{1}{r}\,\textbf{I}\left[p_{1j}\leq
\frac{rq'_1}{m}\right]\textmd{Pr}\left(
P_{2j}\leq\frac{rq_2}{R_1(p_1)},
C_{r}^{(j)}\,|\,P_1=p_1\right)\notag
\\&\leq\sum_{j\in I_{10}\cap
\mathcal{R}_1(p_1)}\sum_{r=1}^{R_1(p_1)}\frac{1}{r}\textmd{Pr}\left(
P_{2j}\leq\frac{rq_2}{R_1(p_1)},
C_{r}^{(j)}\,|\,P_1=p_1\right).\notag
\end{align}
The arguments that lead from (\ref{fireq}) to the result
of item 2 complete the proof.

\section{Additional theoretical results under dependence}
\begin{theorem}\label{thmdepgen}
 Assume 
that the $p$-values across studies are independent, the $p$-values
within the primary study are independent, and the set of $p$-values
within the follow-up study has property PRDS. If the selection rule
used in step 1 of Procedure 3.2 is a valid selection rule, then
Procedure 3.2 with parameters $(q_1,q)$ controls the FDR at level
$q$ for the family of no replicability null hypotheses
$H_{NR,1},\ldots,H_{NR,m}$.
\end{theorem}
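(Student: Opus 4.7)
The plan is to follow exactly the decomposition used in the proof of Theorem \ref{thmfdr}, bounding the two pieces of the FDR separately and noting that each piece only uses hypotheses that are in fact available here. Specifically, as in equation (A.1) of Appendix A, I would write
\begin{align*}
FDR = E\!\left(\frac{\sum_{j\in I_{0}}R_j}{\max(R, 1)}\right) + E\!\left(\frac{\sum_{j\in I_{10}}R_j}{\max(R, 1)}\right),
\end{align*}
where $I_0 = I_{00}\cup I_{01}$, and show that the first term is at most $|I_0|q_1/m\le q_1$ and the second term is at most $q_2 = q-q_1$.

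First I would handle the $I_0$ term by copying verbatim the chain of inequalities from the proof of Theorem \ref{thmfdr} through equation (A.4). The key factorization step used there, namely
$$\textmd{Pr}\!\left(P_{1j}\leq\tfrac{rq_1}{m},\, C_{r}^{(j)}\right)=\textmd{Pr}\!\left(P_{1j}\leq\tfrac{rq_1}{m}\right)\textmd{Pr}\!\left(C_{r}^{(j)}\right),$$
required independence of $P_{1j}$ from the vector $(P_1^{(j)},P_2^{(j)})$. Under the present hypotheses this is still in force: the primary study $p$-values are mutually independent, and they are independent of the follow-up study $p$-values. So the argument goes through unchanged and yields the bound $|I_0|q_1/m$ for the first term.

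For the $I_{10}$ term, the independence argument in the proof of Theorem \ref{thmfdr} would fail, because within the follow-up study $C_r^{(j)}$ depends on $P_2^{(j)}$ in a way that need not be independent of $P_{2j}$. This is exactly the obstacle the authors anticipate with Lemma \ref{lemdepSM}. Item 2 of that lemma, which is stated precisely under the assumptions of cross-study independence, valid selection, and PRDS within the follow-up study, gives
$$E\!\left(\frac{\sum_{j\in I_{10}}R_j}{\max(R, 1)}\right)\leq q_2$$
directly. So I would simply invoke it.

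Combining the two bounds gives $FDR\leq |I_0|q_1/m + q_2 \leq q_1 + (q-q_1) = q$, completing the proof. The only real work, monotonicity of the event $D_s^{(j)}\cap\{P_1=p_1\}$ in $P_2^{(j)}$ and the telescoping argument that feeds the PRDS property into a conditional-probability bound, is already encapsulated in Lemma \ref{lemdepSM}, so the theorem reduces to assembling the two pieces; the conceptual obstacle of handling dependence within the follow-up study has been absorbed into that lemma.
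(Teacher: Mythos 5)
Your proposal is correct and takes essentially the same route as the paper's own proof: the paper likewise bounds the $I_0$ term of decomposition (A.1) by $|I_0|q_1/m$ by observing that the argument leading to (A.3) uses only independence within the primary study (together with cross-study independence, which keeps $P_{1j}$ independent of $C_r^{(j)}$), and then bounds the $I_{10}$ term by $q_2$ by invoking item 2 of Lemma \ref{lemdepSM}, exactly as you do. The only trivial slip is that the first-term bound is completed at display (A.3) rather than (A.4), which does not affect anything.
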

\begin{proof}
Let us first find an upper bound for the first term of the sum in
(A.1). Note that (A.3) is established using the independence of the
$p$-values within the primary study only, therefore it holds for any
form of dependence among the $p$-values within the follow-up study.
In particular, (A.3) holds under the dependency of Theorem
\ref{thmdepgen}, establishing an upper bound for the first term of
the sum in (A.1). It follows from Lemma \ref{lemdepSM}, item 2, that
the second term of the sum in (A.1) is bounded by $q_2.$ Thus we
obtain:
$$FDR\leq \frac{|I_0|q_1}{m}+q_2=\frac{|I_0|q_1}{m}+q-q_1\leq q.$$
\end{proof}
\begin{theorem}\label{SMgenthm}
 Assume that the $p$-values across studies are independent.  Then Procedure 3.2 with parameters $(q_1,q)$
 controls the FDR  at level
$q$ for the family of no replicability null hypotheses
$H_{NR,1},\ldots,H_{NR,m}$ if the selection
rule used in step 1 of Procedure 3.2 is a valid selection rule, and the expressions in step 2 of Procedure 3.2 are modified as follows:
\begin{enumerate}
\item  In the terms $r(q-q_1)/R_1$ and $R_2(q-q_1)/R_1,$ $q-q_1$ is replaced by $(q-q_1)/(\sum_{i=1}^{R_1} 1/i)$, and in the terms $rq_1/m$ and $R_2q_1/m,$ $q_1$ is replaced by $q_1/(\sum_{i=1}^m1/i)$.
\item  In the terms $r(q-q_1)/R_1$ and $R_2(q-q_1)/R_1,$ $q-q_1$ is replaced by $(q-q_1)/(\sum_{i=1}^{R_1} 1/i)$, and in the terms $rq_1/m$ and $R_2q_1/m,$ $q_1$ is replaced by
$\widetilde{q}_1,$ where
$$\widetilde{q}_1=\max\{x:\,\,x(1+\sum_{i=1}^{\lceil tm/x-1
\rceil}1/i)=q_1\},$$ if  only  hypotheses with primary study
$p$-values at most a fixed threshold $t<q_1/(1+\sum_{i=1}^{m-1}1/i)$
are considered for follow-up, i.e.
$\mathcal{R}_1\subseteq\{j\in\{1,\ldots,m\}:\,P_{1j}\leq t\}$.
\end{enumerate}
\end{theorem}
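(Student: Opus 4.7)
The plan is to adapt the Benjamini--Yekutieli harmonic-sum technique, already applied to the primary study in the proof of Theorem \ref{genthm}, to the follow-up study as well. As in the proof of Theorem \ref{thmfdr}, I decompose
\begin{align*}
FDR = E\Bigl(\frac{\sum_{j \in I_0} R_j}{\max(R,1)}\Bigr) + E\Bigl(\frac{\sum_{j \in I_{10}} R_j}{\max(R,1)}\Bigr)
\end{align*}
with $I_0 = I_{00} \cup I_{01}$, and separately bound the two terms by $q_1$ and $q - q_1$.

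For the first term, the arguments from Theorem \ref{genthm} transfer essentially unchanged. Dropping the $P_{2j}$ constraint inside each probability reduces the bound to a statement about the marginal law of $P_{1j}$ alone, so no independence or PRDS property of the follow-up $p$-values is required; the events $C_r^{(j)}$ still partition the sample space of $(P_1^{(j)}, P_2^{(j)})$, and the harmonic-sum telescoping with $\widetilde{q}_1 = q_1/\sum_{i=1}^m 1/i$ under item 1, or with $\widetilde{q}_1$ defined as in item 2 of Theorem \ref{genthm} under item 2, then yields the bound $|I_0|\,q_1/m \leq q_1$.

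For the second term, I condition on $P_1 = p_1$ and reproduce the analogous BY argument, but now partitioning intervals of $P_{2j}$ with $\widetilde{q}_2 = (q-q_1)/\sum_{i=1}^{R_1(p_1)} 1/i$. Define
\begin{align*}
p'_{jrl} = \Pr\Bigl(P_{2j} \in \bigl(\tfrac{(l-1)\widetilde{q}_2}{R_1(p_1)}, \tfrac{l\widetilde{q}_2}{R_1(p_1)}\bigr],\; C_r^{(j)} \,\Big|\, P_1 = p_1\Bigr)
\end{align*}
for $j \in I_{10} \cap \mathcal{R}_1(p_1)$ and $l, r \in \{1,\ldots,R_1(p_1)\}$. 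I swap the order of summation, $\sum_r (1/r) \sum_{l=1}^r p'_{jrl} = \sum_l \sum_{r=l}^{R_1(p_1)} (1/r)\, p'_{jrl} \leq \sum_l (1/l) \sum_{r=1}^{R_1(p_1)} p'_{jrl}$, and apply the telescoping identity on the interval probabilities. Using $\Pr(P_{2j} \leq x \mid P_1 = p_1) \leq x$ for $j \in I_{10}$, which holds by independence across studies, this yields
\begin{align*}
E\Bigl(\frac{\sum_{j \in I_{10}} R_j}{\max(R,1)} \,\Big|\, P_1 = p_1\Bigr) \leq \frac{|I_{10} \cap \mathcal{R}_1(p_1)|\, \widetilde{q}_2}{R_1(p_1)} \sum_{l=1}^{R_1(p_1)} \frac{1}{l} = \frac{|I_{10} \cap \mathcal{R}_1(p_1)|(q - q_1)}{R_1(p_1)} \leq q - q_1.
\end{align*}
Taking expectation over $P_1$ preserves the bound, and combining with the first-term bound gives $FDR \leq q_1 + (q - q_1) = q$ for both items.

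The main obstacle is bookkeeping rather than conceptual. I must check that conditioning on $P_1 = p_1$ fixes $R_1(p_1)$ (guaranteed by the valid selection rule in the sense of Definition \ref{validsel}), that the events $C_r^{(j)}$ for $r = 1,\ldots,R_1(p_1)$ form a partition of the relevant conditional sample space of $P_2^{(j)}$ (since only $R_1(p_1) - 1$ of the $T_i$'s are finite when $j \in \mathcal{R}_1(p_1)$), and that the conditional stochastic-domination $\Pr(P_{2j} \leq x \mid P_1 = p_1) \leq x$ holds by independence across studies. Once these are verified, the two independent applications of the BY harmonic-sum telescoping, one to the primary $p$-value intervals and one to the follow-up $p$-value intervals, fit together seamlessly.
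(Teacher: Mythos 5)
Your proposal is correct and is essentially the paper's own proof: the same decomposition of the FDR into the $I_0=I_{00}\cup I_{01}$ term (bounded by $q_1$ via the Benjamini--Yekutieli telescoping on the primary $p$-values, exactly as in the proof of Theorem \ref{genthm}, items 1 and 2 respectively) and the $I_{10}$ term (bounded by $q-q_1$ by conditioning on $P_1=p_1$ and running the same interval-partition telescoping on the follow-up $p$-values with width $(q-q_1)/\bigl(R_1(p_1)\sum_{s=1}^{R_1(p_1)}1/s\bigr)$, using only cross-study independence and $\Pr(P_{2j}\leq x)\leq x$). The one small misattribution is in your bookkeeping remark: conditioning on $P_1=p_1$ fixes $R_1(p_1)$ trivially, since selection is a function of $p_1$; the valid selection rule is instead what makes $\mathcal{R}_1^{(j)}(P_1^{(j)})$, and hence the events $C_r^{(j)}$, free of $p_{1j}$, which is what the first-term telescoping over $P_{1j}$ actually requires---but this does not affect the validity of your argument.
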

\textbf{Proof of item 1.} We will first show that the first term of
the sum in (A.1) is bounded by $|I_0|\,q_1/m.$ The first term of the
sum in (A.1) equals to:
\begin{align}
E\left(\frac{\sum_{j\in I_{0}}R_j}{\max(R, 1)}\right)&=\sum_{j\in
I_0}\sum_{r=1}^m\frac{1}{r}\textmd{Pr}\left(j\in \mathcal{R}_1,
P_{1j}\leq\frac{rq_1}{m\sum_{s=1}^m\frac{1}{s}},
P_{2j}\leq\frac{r(q-q_1)}{|\mathcal{R}_1|\sum_{s=1}^{|\mathcal{R}_1|}1/s},
C_{r}^{(j)}\right)\notag
\notag\\&\leq\sum_{j\in
I_0}\sum_{r=1}^m\frac{1}{r}\textmd{Pr}\left(P_{1j}
\leq\frac{rq_1}{m\sum_{s=1}^m\frac{1}{s}},
C_{r}^{(j)}\right).\notag
\end{align}
Now it follows from the arguments that lead from (\ref{foritem2}) to
(\ref{item1}) that the first term of the sum in (A.1) is bounded by
$|I_0|\,q_1/m.$


 Let us now find an upper bound for the second term of the sum
in (A.1). For each $p_1,$ $j\in \mathcal{R}_1(p_1)\cap I_{10},
r\in\{1,\ldots,R_1(p_1)\}$ and $l\in\{1,\ldots,R_1(p_1)\},$ let us
define:
\begin{align}p_{jrl}(p_1)=\textmd{Pr}\left(P_{2j}\in\left(\frac{(l-1)q_2}{R_1(p_1)\sum_{s=1}^{R_1(p_1)}1/s},
\frac{lq_2}{R_1(p_1)\sum_{s=1}^{R_1(p_1)}1/s} \right],\,
C_r^{(j)}\,\Big|\,P_1=p_1\right). \label{pjrlp1}\end{align} Note
that for each $p_1,$ $j\in \mathcal{R}_1(p_1)\cap I_{10}$ and $l\in
\{1,\ldots,R_1(p_1)\}$:
\begin{align}\sum_{r=1}^{R_1(p_1)}p_{jrl}(p_1)&=\textmd{Pr}\left(\cup_{r=1}^{R_1(p_1)}C_r^{(j)}, P_{2j}\in \left(\frac{(l-1)q_2}{R_1(p_1)\sum_{s=1}^{R_1(p_1)}1/s}, \frac{lq_2}{R_1(p_1)\sum_{s=1}^{R_1(p_1)}1/s}
\right]\,\Big|\,P_1=p_1\right)\notag\\&=\textmd{Pr}\left(P_{2j}\in
\left[\frac{(l-1)q_2}{R_1(p_1)\sum_{s=1}^{R_1(p_1)}1/s},
\frac{lq_2}{R_1(p_1)\sum_{s=1}^{R_1(p_1)}1/s}\right]
\,\Big|\,P_1=p_1\right).\label{whole2}
\end{align}
The equalities follow from the fact that given $P_1=p_1,$
$\cup_{r=1}^{R_1(p_1)}C_r^{(j)}$ is the whole sample space for
$P_2^{(j)}$, represented as a union of disjoint events. In addition,
note that for each $p_1,$ $j\in \mathcal{R}_1(p_1)\cap I_{10}$ and
$r\in\{1,\ldots,R_1(p_1)\},$
\begin{align}
\textmd{Pr}\left(P_{2j}
\leq\frac{rq_2}{R_1(p_1)\sum_{s=1}^{R_1(p_1)}1/s},
C_{r}^{(j)}\,|\,P_1=p_1\right)=\sum_{l=1}^{r}p_{jrl}(p_1),\label{sum332}
\end{align}
since for $j\in I_{10},$ $\textmd{Pr}(P_{2j}\leq x)\leq x$ for all
$x\geq 0,$ in particular $\textmd{Pr}(P_{2j}=0)=0.$ Therefore, for
each $p_1,$
\begin{align}
& E\left(\sum_{j\in I_{10}}R_j/\max(R, 1)\,|\,P_1=p_1\right) = \notag\\
&\sum_{j\in I_{10}\cap
\mathcal{R}_1(p_1)}\sum_{r=1}^{R_1(p_1)}\frac{1}{r}\,\textbf{I}\left[p_{1j}\leq
\frac{rq_1}{m\sum_{s=1}^m1/s}\right]\textmd{Pr}\left(
P_{2j}\leq\frac{rq_2}{R_1(p_1)\sum_{s=1}^{R_1(p_1)}1/s},
C_{r}^{(j)}\,|\,P_1=p_1\right)\notag\\&\leq\sum_{j\in I_{10}\cap
\mathcal{R}_1(p_1)}\sum_{r=1}^{R_1(p_1)}\sum_{l=1}^r\frac{1}{r}p_{jrl}(p_1)=\sum_{j\in
I_{10}\cap
\mathcal{R}_1(p_1)}\sum_{l=1}^{R_1(p_1)}\sum_{r=l}^{R_1(p_1)}\frac{1}{r}p_{jrl}(p_1)\label{for4}\\&\leq\sum_{j\in
I_{10}\cap
\mathcal{R}_1(p_1)}\sum_{l=1}^{R_1(p_1)}\sum_{r=l}^{R_1(p_1)}\frac{1}{l}p_{jrl}(p_1)\leq\sum_{j\in
I_{10}\cap
\mathcal{R}_1(p_1)}\sum_{l=1}^{R_1(p_1)}\frac{1}{l}\sum_{r=1}^{R_1(p_1)}p_{jrl}(p_1)\notag\\&=\sum_{j\in
I_{10}\cap
\mathcal{R}_1(p_1)}\sum_{l=1}^{R_1(p_1)}\frac{1}{l}\textmd{Pr}\left(P_{2j}\in
\left(\frac{(l-1)q_2}{R_1(p_1)\sum_{s=1}^{R_1(p_1)}1/s},
\frac{lq_2}{R_1(p_1)\sum_{s=1}^{R_1(p_1)}1/s}\right]
\,\Big|\,P_1=p_1\right)\notag\\&=\sum_{j\in I_{10}\cap
\mathcal{R}_1(p_1)}\sum_{l=1}^{R_1(p_1)}\frac{1}{l}\textmd{Pr}\left(P_{2j}\in
\left(\frac{(l-1)q_2}{R_1(p_1)\sum_{s=1}^{R_1(p_1)}1/s},
\frac{lq_2}{R_1(p_1)\sum_{s=1}^{R_1(p_1)}1/s}\right]
\right)\label{twogen},
\end{align}
where the first inequality in (\ref{for4}) follows from
(\ref{sum332}), the next to last equality follows from
(\ref{whole2}), and the equality in (\ref{twogen}) follows from the
independence of the $p$-values across the studies. Using similar
arguments to those leading to (\ref{danilast}), we obtain:
\begin{align*}\sum_{l=1}^{R_1(p_1)}\frac{1}{l}\textmd{Pr}\left(P_{2j}\in
\left(\frac{(l-1)q_2}{R_1(p_1)\sum_{s=1}^{R_1(p_1)}1/s},
\frac{lq_2}{R_1(p_1)\sum_{s=1}^{R_1(p_1)}1/s}\right]
\right)\leq\frac{q_2}{R_1(p_1)}.\end{align*} Combining this result
with (\ref{twogen}) we obtain for each
$p_1$:\begin{align*}E\left(\sum_{j\in I_{10}}R_j/\max(R,
1)\,|\,P_1=p_1\right)\leq\sum_{j\in I_{10}\cap
\mathcal{R}_1(p_1)}\frac{q_2}{R_1(p_1)}=\frac{|I_{10}\cap
\mathcal{R}_1(p_1)|}{R_1(p_1)}q_2\leq q_2.\end{align*} It follows
that \begin{align}E\left(\sum_{j\in I_{10}}R_j/\max(R, 1)\right)\leq
q_2.\label{res2}\end{align} Using this fact and the upper bound for
the first term of the sum in (A.1), we obtain that $FDR\leq
|I_0|q_1/m+q_2=|I_0|q_1/m+q-q_1\leq q.$

\textbf{Proof of item 2.} The first term of the sum in (A.1) equals
to:
\begin{align}
E\left(\frac{\sum_{j\in I_{0}}R_j}{\max(R, 1)}\right)&=\sum_{j\in
I_0}\sum_{r=1}^m\frac{1}{r}\textmd{Pr}\left(j\in \mathcal{R}_1,
P_{1j}\leq\frac{r\widetilde{q}_1}{m},
P_{2j}\leq\frac{r(q-q_1)}{|\mathcal{R}_1|\sum_{s=1}^{|\mathcal{R}_1|}1/s},
C_{r}^{(j)}\right)\notag
\notag\\&\leq\sum_{j\in
I_0}\sum_{r=1}^m\frac{1}{r}\textmd{Pr}\left(P_{1j}
\leq\min\left(\frac{r\widetilde{q}_1}{m},t\right),\,\,
C_{r}^{(j)}\right).\notag
\end{align}
 Now it follows from the arguments that lead from (\ref{min}) to
(\ref{resitem3}) that the upper bound for the first term of the sum
in (A.1) is $q_1.$

The second term of the sum in (A.1) is  $E\left(\sum_{j\in
I_{10}}R_j/\max(R, 1)\right).$ For each $p_1,$
\begin{align}
&E\left(\sum_{j\in I_{10}}R_j/\max(R,
1)\,|\,P_1=p_1\right) = \notag\\
&\sum_{j\in I_{10}\cap
\mathcal{R}_1(p_1)}\sum_{r=1}^{R_1(p_1)}\frac{1}{r}\,\textbf{I}\left[p_{1j}\leq
\frac{r\widetilde{q}_1}{m}\right]\textmd{Pr}\left(
P_{2j}\leq\frac{rq_2}{R_1(p_1)\sum_{s=1}^{R_1(p_1)}1/s},
C_{r}^{(j)}\,|\,P_1=p_1\right)\notag\\&\leq\sum_{j\in I_{10}\cap
\mathcal{R}_1(p_1)}\sum_{r=1}^{R_1(p_1)}\sum_{l=1}^r\frac{1}{r}p_{jrl}(p_1),\notag
\end{align}
where $p_{jrl}(p_1)$ is defined in (\ref{pjrlp1}). Now it follows
from the arguments that lead from (\ref{for4}) to (\ref{res2}) that
the second term of the sum in (A.1) is bounded by $q_2.$ Therefore,
$$FDR\leq q_1+q_2=q_1+q-q_1=q.$$

Consider now a situation where both studies are available before the
analysis, as described in Section 4 of the main manuscript. Without
loss of generality, we label the studies as study one and study two.
\begin{theorem}
Assume the $p$-values across studies are independent. Procedure 4.1
with parameters $(w_1, q_1, q)$ controls the FDR at level $q$ for
the family of no replicability null hypotheses
$H_{NR,1},\ldots,H_{NR,m}$ in either one of the following
situations:
\begin{enumerate}
\item The set of $p$-values within each study has property
PRDS, and the selection rule in step 1 of Procedure 3.2 is
Bonferroni at level $w_1q_1$ when the primary study is study one,
and at level $(1-w_1)q_1$ when the primary study is study two.
\item Arbitrary dependence among the $p$-values within each study, and
the expressions in step 2 of Procedure 3.2 are modified as follows:
in the terms $rq_1/m$ and $R_2q_1/R_1,$ $q_1$ is replaced by
$q_1/(\sum_{i=1}^m1/i),$ and in the terms $r(q-q_1)/R_1$ and
$R_2(q-q_1)/R_1,$ $q-q_1$ is replaced by $(q-q_1)/(\sum_{i=1}^{R_1}
1/i).$
\end{enumerate}
\end{theorem}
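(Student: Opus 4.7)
The plan is to follow the same decomposition that proved Theorem \ref{thmprocfdrsym}, reducing FDR control for Procedure \ref{procfdrsym} to FDR control for each of its two constituent applications of Procedure \ref{procfdr}. Writing $V_{12}, R_{12}$ for the number of erroneous and total rejections of the first inner call (study one primary at overall level $w_1q$) and $V_{21}, R_{21}$ analogously for the second (study two primary at overall level $(1-w_1)q$), we have $V_s \le V_{12}+V_{21}$ while $\max(R_s,1) \ge \max(R_{12},1)$ and $\max(R_s,1) \ge \max(R_{21},1)$, so that
\begin{equation*}
FDR_s \;\le\; E\!\left[\frac{V_{12}}{\max(R_{12},1)}\right] + E\!\left[\frac{V_{21}}{\max(R_{21},1)}\right].
\end{equation*}
It therefore suffices to bound each summand by the nominal level of the corresponding inner call, $w_1q$ or $(1-w_1)q$; summing will then give $q$.

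For item 1, each inner call of Procedure \ref{procfdr} uses Bonferroni on the current primary study at the inner procedure's $q_1$-level (namely $w_1q_1$ or $(1-w_1)q_1$). Bonferroni is a valid selection rule in the sense of Definition \ref{validsel}, and it confines $\mathcal{R}_1$ to primary-study $p$-values at most a fixed threshold $t$ equal to that inner $q_1$-level divided by $m$. This is precisely the regime, pointed out in the discussion following Theorem \ref{genthm} item 2, in which no modification of the cut-offs is needed even under arbitrary dependence within the primary study. Combined with the assumed PRDS of the follow-up study, Theorem \ref{genthm} then bounds the FDR of each inner call by its nominal inner level.

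For item 2, each inner call coincides with the modified Procedure \ref{procfdr} of Theorem \ref{SMgenthm} (its item 1), whose cut-offs are shrunk by $\sum 1/i$ factors exactly as prescribed in the statement being proved. Under independence across studies and arbitrary dependence within each study, Theorem \ref{SMgenthm} bounds the FDR of each inner call by its nominal inner level, and summing as above gives $FDR_s \le w_1 q + (1-w_1)q = q$.

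The main obstacle is purely the bookkeeping needed to align each inner call's $q_1$-parameter with the hypotheses of Theorem \ref{genthm} or Theorem \ref{SMgenthm}, and to verify validity of the selection rule; once this is done, no new probabilistic argument is needed beyond the $V_s \le V_{12}+V_{21}$ step inherited from the proof of Theorem \ref{thmprocfdrsym}.
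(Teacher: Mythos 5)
Your proposal is correct and follows essentially the same route as the paper: the paper's own proof likewise reuses the decomposition $V_s\leq V_{12}+V_{21}$ with $\max(R_s,1)\geq\max(R_{12},1)$ and $\max(R_s,1)\geq\max(R_{21},1)$ from the proof of Theorem \ref{thmprocfdrsym}, and then bounds each inner call's FDR by its nominal level via item 2 of Theorem \ref{genthm} for item 1 (noting, as you do, that Bonferroni selection gives $t\leq q_1/m$ for the inner $q_1$-parameter, so $\widetilde{q}_1=q_1$ and no modification is needed despite arbitrary dependence in the primary study) and via item 1 of Theorem \ref{SMgenthm} for item 2. Your verification of the selection rule's validity and of the threshold condition is exactly the bookkeeping the paper's one-line proof implicitly relies on.
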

\begin{proof}
The proof of Theorem 4.1 in Appendix C relies only on the facts that
Procedure 3.2 used in step 1 and in step 2 of Procedure 4.1 is
valid. Therefore, the same proof shows that item 1 follows from
item 2 of Theorem 3.3,  and item 2 follows from item 1 of Theorem \ref{SMgenthm}.
\end{proof}


\end{document}